\pgfplotsset{compat=1.7}
\definecolor{Blue}{rgb}{0.3,0.3,0.9}
\definecolor{Black}{rgb}{0.0,0.0,0.0}
\newcommand{\ignore}[1]{}
\def \qed {\hspace*{0pt} \hfill {\quad \vrule height 1ex width 1ex depth 0pt}
 \medskip}
\newcommand{\R}{\ensuremath{\mathbb R}}
\newcommand{\E}{\ensuremath{\mathbb{E}}}
\newcommand{\dist}{\mathsf{dist}}
\newcommand{\D}[2]{\mathcal{D}_{#1}^{(#2)}}
\newcommand{\di}[2]{d_{#1}^{(#2)}}
\newcommand{\cL}{\mathcal{L}}
\newcommand{\seq}{\mathsf{seq}}
\newcommand{\den}{\mathsf{den}}
\newcommand{\cT}{\mathcal{T}}
\newtheorem{theorem}{Theorem}[section]
\newtheorem{definition}[theorem]{Definition}
\newtheorem{claim}[theorem]{Claim}
\newtheorem{observation}[theorem]{Observation}
\newtheorem{remark}[theorem]{Remark}
\newtheorem{lemma}[theorem]{Lemma}
\title{New Sublinear Algorithms and Lower Bounds for LIS Estimation}
\author{Ilan Newman\textsuperscript{*} \and Nithin Varma\thanks{E-mail: \texttt{ilan@cs.haifa.ac.il, nvarma@bu.edu}. This research was supported by The Israel Science Foundation, grant number 497/17. The second author was also supported by the PBC Fellowship for Postdoctoral Fellows by the Israeli Council of Higher Education.}}
\date{Department of Computer Science, University of Haifa, Israel.}
\begin{document}
\maketitle
\ignore{
\begin{titlepage}
\maketitle 
\def\thepage {} 
\thispagestyle{empty}

\end{titlepage}
}
\begin{abstract}
Estimating the length of the longest increasing subsequence (LIS) in an array is a problem of fundamental importance.
Despite the significance of the LIS estimation problem and the amount of attention it has received, there are important aspects of the problem that are not yet fully understood. 
There are no better lower bounds for LIS
estimation than the obvious bounds implied by testing monotonicity (for
adaptive or nonadaptive algorithms).
In this paper, we give the first nontrivial lower bound on the complexity of LIS estimation, and also provide novel algorithms that complement our lower bound.

Specifically, for every constant
$\epsilon \in (0,1)$, every nonadaptive algorithm that outputs an
estimate of the LIS length in an array of length $n$ to within an additive error of
$\epsilon n$ has to make $\log^{\Omega(\log
  (1/\epsilon))} n$ queries.  
  Next, we design nonadaptive LIS estimation algorithms whose
  complexity decreases as the number of distinct values, $r$,
  in the array decreases. 
We first present a simple algorithm that makes 
$\tilde{O}(r/\epsilon^3)$  queries and approximates the LIS length with an
additive error bounded by $\epsilon n$. 
This algorithm has better complexity than the best previously known adaptive algorithm (Saks and Seshadhri; 2017) 
for the same problem when $r \ll \text{poly}\log (n)$.
We use our algorithm to construct a nonadaptive algorithm with query complexity
$\tilde{O}(\sqrt{r}\cdot \text{poly}(1/\lambda))$ that, 
when the LIS is of
length at least $\lambda n$, outputs a multiplicative
$\Omega(\lambda)$-approximation to the LIS length. 
Our algorithm also improves upon the state of the art nonadaptive LIS estimation
algorithm (Rubinstein, Seddighin, Song, and Sun; 2019) in terms of approximation guarantee.

  Finally, we present a $O(\log n)$-query nonadaptive erasure-resilient tester  for monotonicity.
  Our result implies that lower bounds on erasure-resilient testing of monotonicity does not
  give good lower bounds for LIS estimation. 
  It also implies that nonadaptive tolerant testing is strictly harder than nonadaptive erasure-resilient testing for the natural property of monotonicity, thereby making progress towards solving an open question (Raskhodnikova, Ron-Zewi, and Varma; 2019).
\end{abstract}

\pagenumbering {arabic} 

\def \noproof
 {\hspace*{0pt} \hfill {\quad \vrule height 1ex width 1ex depth 0pt}}

\def \xor {\oplus}

\section{Introduction}

Estimating the length of the longest increasing subsequence (LIS) in an array is a problem of fundamental importance. 
For arrays of length $n$, one can solve this problem exactly in time $O(n\log n)$ using dynamic programming~\cite{Fredman75} or patience sorting~\cite{AD99}.
Approximating the length of the LIS has also been well-studied, and there are several sublinear-time algorithms~\cite{PRR06,ACCL07,SaksS10,RubinsteinSSS19} for this task.
In the approximation task, for a real-valued array $A$ of size $n$, the goal is to estimate the length 
of the LIS within an additive error (of $\epsilon n$) or multiplicative error. 
An additive $\epsilon n$-approximation algorithm for this problem can also be used to estimate, with the same approximation guarantee, the Hamming distance of $A$ to the closest sorted array\footnote{It is necessary and sufficient to modify the values that do not belong to an LIS to make the array sorted.} (a.k.a.\ distance to monotonicity).

Early sublinear-time algorithms for LIS estimation~\cite{PRR06,ACCL07} provided multiplicative $(2 + o(1))$-approximation for the distance to monotonicity, and thereby, additive $\frac{n}{2}$-approximation to the length of the LIS. 
Saks and Seshadhri~\cite{SaksS10} made a major improvement to the state of the art, and presented an algorithm that approximates the LIS length to within an additive error of $\epsilon  n$ for arbitrary $\epsilon \in (0,1)$. 
All these algorithms have query complexity polylogarithmic\footnote{The query complexity of the algorithm by Saks and Seshadhri~\cite{SaksS10} depends on the approximation parameter $\epsilon$ as $O(\left(1/\epsilon\right)^{1/\epsilon})$ and hence is within aforementioned bound only if $\epsilon$ is constant. In particular, the query complexity ceases to be sublinear as as soon as $\epsilon$ is $O(1/\log(n))$.} in $n$ for constant $\epsilon$.
Subsequently, Rubinstein, Seddighin, Song, and Sun~\cite{RubinsteinSSS19} presented a nonadaptive algorithm that computes 
a multiplicative $\Omega(\lambda^3)$-approximation to the LIS length, with query complexity $\tilde{O}(\sqrt{n} \cdot \text{poly}(1/\lambda))$, where $\lambda$ is the ratio of the LIS length to $n$.
In a very recent work (independent and parallel to ours), Mitzenmacher and Seddighin~\cite{MS21} developed a sublinear algorithm for LIS estimation with query complexity $\tilde{O}(n^{1- \Omega(\epsilon)}\cdot\text{poly}(1/\lambda))$ that obtains an approximation ratio of $\Omega(\lambda^\epsilon)$ for arbitrary $\epsilon \in (0,1)$.

Despite the significance of the LIS estimation problem and the amount of attention it has received, there are important aspects of the problem that are not yet fully understood. 
There is no better lower bound on the query complexity of LIS
estimation, for
adaptive or nonadaptive algorithms, other than the obvious bound of $\Omega(\log n)$ implied by monotonicity testing~\cite{Fischer04}.
Another issue is to investigate whether the input length $n$ is the right parameter to express the complexity of LIS estimation algorithms. In other words, it is unknown whether there are other input parameters that capture the fine-grained complexity of LIS estimation by making use of the underlying combinatorics of the problem.

In this paper we address both these issues.
We prove the first nontrivial lower bound on the query complexity of nonadaptive algorithms for additive error LIS estimation. We also design nonadaptive LIS estimation algorithms whose query complexity is parameterized in terms of the number of distinct values in the input array.

\smallskip

{\it \textbf{Lower Bound for LIS Estimation:}} We show that there is no nonadaptive algorithm that approximates the LIS length to arbitrary additive error and has query complexity polylogarithmic in $n$.
Specifically, for arbitrary constant $\epsilon \in (0,1)$, every nonadaptive LIS estimation algorithm that has an additive error bounded by $\epsilon n$ has to make $\log^{\Omega(\log
  (1/\epsilon))} n$ queries. 
Interestingly, our lower bound construction uses ideas from the lower
bound~\cite{BCLW19} on the query complexity of 1-sided error nonadaptive testers for
the property of $(k,\dots,2,1)$-freeness.
 This is the first lower
 bound that improves upon the obvious lower bound of $\Omega(\log n)$. 

One general approach for proving lower bounds on the complexity of LIS estimation
was proposed by Dixit, Raskhodnikova, Thakurta, and Varma~\cite{DixitRTV18},  who showed that lower bounds for erasure-resilient testing of monotonicity
 provides lower bounds for estimating the distance to monotonicity up to an additive error. 
We prove that this method cannot provide a nontrivial lower bound for LIS
estimation, by showing a $O(\log n)$-query \emph{nonadaptive} algorithm for
erasure-resilient monotonicity testing.

\smallskip

{\it \textbf{Sublinear Algorithms for LIS Estimation:}}
Our starting point here is to understand the dependence of the query complexity of LIS estimation 
on the range size of an input array.
This is a major direction of study for the simpler
problem of monotonicity testing,
since the only tight lower bound~\cite{Fischer04} holds for exponential range.
Recently, Pallavoor, Raskhodnikova, and Varma~\cite{PRV18}, and Belovs~\cite{Belovs18}, gave efficient algorithms
for monotonicity testing whose query complexity beats the above lower bound when range size is small.
There were no explicit results on LIS estimation for
limited range size before our work.\footnote{For the case of Boolean
  arrays, Berman, Raskhodnikova, and Yaroslavtsev~\cite{BRY14} showed that one can approximate
  the LIS length to within an additive error of $\epsilon n$ by
  making $O(1/\epsilon^2)$ queries.}
  In this paper, we give efficient \emph{nonadaptive} LIS estimation algorithms whose complexity is parameterized by 
  $r$, the number of distinct values in the array, which is always at most the range size. 
  Our algorithms improve upon the state of the art algorithms in both complexity 
  and approximation guarantee when the range is small.

 We first show a $\tilde{O}(r/\epsilon^3)$-query nonadaptive 
algorithm  for LIS estimation, of additive error $\epsilon
n$,  for arbitrarily small $\epsilon$.
In particular, when the LIS length is a constant fraction of $n$, our algorithm can be used to get a multiplicative $(1 \pm \epsilon)$-approximation for the LIS length.
We add that our algorithm is the only sublinear nonadaptive algorithm giving this approximation guarantee when $r = o(n)$. 
Furthermore, when $r = o(\log^k n)$ (for an appropriate power $k$),
our algorithm outperforms the adaptive algorithm of Saks and Seshadhri~\cite{SaksS10}, not only in terms of the dependence of query complexity on the input size $n$, but also in terms of its dependence on the approximation parameter $\epsilon$.
Hence, our algorithm bridges the gap between the known 
$\Omega(\text{poly} \log n)$-query algorithm for the general range and the
$O(1)$-query algorithm for the Boolean range.

An additional main result of this paper is a $\tilde{O}(\sqrt{r})$-query nonadaptive algorithm that gives a multiplicative approximation to the LIS length even when the LIS is relatively small. Namely, the
algorithm makes $\tilde{O}(\sqrt{r}\cdot\text{poly}(1/\lambda))$ queries and 
outputs a multiplicative $\Omega(\lambda)$-aproximation to the LIS length,
where $\lambda$ denotes the LIS length normalized by the input length. 
This is an improvement over the 
algorithm by Rubinstein, Seddighin, Song, and Sun~\cite{RubinsteinSSS19}, which makes $\tilde{O}(\sqrt{n}\cdot\text{poly}(1/\lambda))$ nonadaptive queries and outputs a multiplicative $\Omega(\lambda^3)$-approximation to the LIS length.
 Our algorithm improves upon \cite{RubinsteinSSS19} 
in terms of approximation guarantee (even in the general case of $r = n$) as well as query complexity (when $r \ll n$), and further, works
for any value of $r$.  
Finally, the query complexity of our algorithm is always better than that of the recent LIS estimation algorithm by Mitzenmacher and Seddighin~\cite{MS21} that outputs a multiplicative $\Omega(\lambda^\epsilon)$-approximation to the LIS length for arbitrary $\epsilon \in (0,1)$.\footnote{We point out that the LIS estimation algorithm of Mitzenmacher and Seddighin~\cite{MS21} uses the algorithm of Rubinstein et al.~\cite{RubinsteinSSS19} as a subroutine. By using our algorithm instead, the query complexity of the algorithm of Mitzenmacher and Seddighin~\cite{MS21} can be improved.} 
\smallskip

\textit{\textbf{Separating Distance Estimation from Erasure-Resilient Testing:}}
As mentioned before, a general method for proving lower bounds on
distance estimation (or tolerant testing~\cite{PRR06}) is via proving lower bounds on erasure-resilient testing~\cite{DixitRTV18}.

Our nonadaptive erasure-resilient tester for monotonicity with complexity $O(\log n)$ and 
our lower bound on the query complexity of nonadaptive algorithms for LIS estimation imply that nonadaptive tolerant testing is strictly harder than nonadaptive erasure-resilient testing for the natural property of monotonicity, thereby making progress towards solving an open question raised by Raskhodnikova, Ron-Zewi, and Varma~\cite{RRV19}.
\subsection{Discussion of Results and Overview of Techniques}

In this section, we state our results more formally,
 and provide an overview of the
techniques used to prove them. 
We use ideas
from~\cite{RubinsteinSSS19}, \cite{NewmanRRS19} and \cite{BCLW19}.

Given a real-valued array $A$ of length $n$, an LIS in $A$ is the longest nondecreasing sequence of values in $A$. 
In other words, the LIS is a largest cardinality set $\mathcal{L}$ of
indices such that for $u, v \in \mathcal{L}$, we have $u < v$ if and
only if $A[u] \le A[v]$. We abuse notation and also use the term LIS
to denote $|\mathcal{L}|$ when this is clear from the context.
A real-valued array of length $n$ can be equivalently viewed
as a function from $[n]$ to the reals. 
Adopting this view, we use the term \emph{monotone array} to refer to a
sorted array.
Throughout, we denote by $r$, the number of (or a
guaranteed upper bound on) distinct values in the array. That is,
$r=|R|$ for   
 $R = \{A[i]:~ i \in [n]\}.$ Thus, for the
unrestricted case it is assumed that $r=n$.  

\subsubsection{Lower bound on the query complexity of nonadaptive LIS
  estimation algorithms}

  Our first result proves that there is no nonadaptive algorithm that approximates the length of the LIS in an array of length $n$ to within an additive error of $\epsilon  n$, for arbitrary constant $\epsilon \in (0,1)$, and has query complexity polylogarithmic in $n$.
\begin{theorem}\label{thm:adaptive-tolerant-lb}
For every $\epsilon \in (0,1)$, every nonadaptive algorithm that on an array $A$ of length $n$, outputs an additive $\epsilon n$-approximation to the length of the LIS in $A$, has to make $\log^{\Omega(\log(1/\epsilon))} n$ queries.
\end{theorem}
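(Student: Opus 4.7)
The plan is to apply Yao's minimax principle, reducing Theorem~\ref{thm:adaptive-tolerant-lb} to the following distributional distinguishing task: construct two distributions $\mathcal{D}_{\text{YES}}, \mathcal{D}_{\text{NO}}$ on arrays $A$ of length $n$ such that, with probability $1-o(1)$, the LIS length of $A \sim \mathcal{D}_{\text{YES}}$ exceeds that of $A \sim \mathcal{D}_{\text{NO}}$ by more than $2\epsilon n$, while every fixed nonadaptive query set $Q \subseteq [n]$ of size $q = \log^{o(\log(1/\epsilon))} n$ produces queried values whose joint distributions under $\mathcal{D}_{\text{YES}}$ and $\mathcal{D}_{\text{NO}}$ have total variation distance $o(1)$. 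Such a gap rules out every nonadaptive LIS estimator with additive error $\epsilon n$ and query complexity $q$.

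Following the pattern-avoidance construction of \cite{BCLW19}, I would build the two distributions by a recursion of depth $d = \Theta(\log(1/\epsilon))$. Starting from a trivial base case on a short array, each recursive step blows every position up into a block, populating the block with a locally ``LIS-heavy'' gadget under $\mathcal{D}_{\text{YES}}$ and with a locally ``LIS-light'' gadget under $\mathcal{D}_{\text{NO}}$. The block sizes, value ranges, and placements should be tuned so that the overall length is $n$, so that gadgets on different levels compose multiplicatively (or at least additively) into the global LIS, and so that the telescoped LIS difference across all $d$ levels is at least $2\epsilon n$. Random horizontal shifts of the gadget placements within each block, together with a random relabelling of the values, provide the distributional symmetry that the indistinguishability argument will exploit.

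The main obstacle is the indistinguishability analysis. For a fixed nonadaptive query set $Q$, I would argue level by level that, under the randomness of the recursive placement, no block at any level contains more than one query from $Q$, except with small probability. As long as this ``one query per block'' condition holds, the marginals of $A|_Q$ are identical under both distributions, since the YES/NO choice at each block only affects the relative order of several entries \emph{inside} the block. A union bound over pairs of queries, iterated across the $d$ levels with block sizes scaling as roughly $n^{1/d}$, will show that avoiding all collisions requires $q < (c \log n)^{d}$ for some absolute constant $c$, which is precisely $\log^{\Omega(\log(1/\epsilon))} n$.

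The hardest technical step is to simultaneously (i) design the gadgets so that the YES/NO LIS gap telescopes to a linear $\Omega(\epsilon n)$ rather than shrinking geometrically with the recursion depth, and (ii) balance the block sizes so that each recursive level contributes exactly one factor of $\log n$ to the collision threshold and not more. Both requirements constrain the same parameters, and aligning them with the $(k,\dots,2,1)$-freeness construction of \cite{BCLW19}---where the forbidden pattern length plays the role of our depth $d$---is where the bulk of the work lies, including the final arithmetic that turns the pattern-freeness lower bound into an additive LIS gap.
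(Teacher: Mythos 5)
Your high-level plan is aligned with the paper: Yao's principle, two distributions built by a BCLW19-style recursion whose depth $h$ is $\Theta(\log(1/\epsilon))$, and a coupling argument showing identical marginals on $Q$ when a certain bad event fails. However, the specific indistinguishability mechanism you sketch would not yield a polylogarithmic lower bound. You propose random horizontal shifts of gadgets in blocks of size roughly $n^{1/d}$ and a ``one query per block'' collision criterion. With $n^{1-1/d}$ blocks of size $n^{1/d}$, the birthday threshold for two queries colliding in a block is polynomial in $n$, so ruling out $\log^{\omega(1)} n$ queries this way is impossible. The paper instead draws $h$ \emph{scales} $j_1>\dots>j_h$ uniformly from $O(\log n)$ possibilities (levels of a binary tree over $[n]$), performs nested value-swaps at those scales, and defines the bad event as the existence of $2^h$ queries in a particular nested LCA configuration across all $h$ scales. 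For a query set of size $o(\log^h n)$, the probability (over scales) of such a configuration is $o(1)$ by the BCLW19 counting argument; conditioned on its absence, a coupling shows the two distributions induce identical views. The randomness lives in the choice of $O(\log n)$-many scales, not in gadget placement, and that is exactly where the $\log^h n$ threshold comes from.

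Two further gaps. First, you omit the reduction to comparison-based algorithms (Fischer's lemma), which is what lets the construction be purely order-based: the paper does not randomly relabel values, it simply assumes WLOG that the algorithm sees only comparisons. Second, your concern (i) --- that the YES/NO gap must be made linear in $\epsilon$ \emph{rather than} shrinking geometrically --- is misdirected. In the paper the gap genuinely shrinks geometrically to $\Theta(2^{-2h})$; the depth $h$ is then \emph{chosen} as $\Theta(\log(1/\epsilon))$ so that this shrunk gap equals $\Theta(\epsilon)$. That trade-off is precisely why the exponent on $\log n$ is $\Theta(\log(1/\epsilon))$ and no better. Trying to engineer a non-shrinking gap while keeping the recursion depth would change (and likely break) the construction.
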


We note that this is the first lower bound on LIS estimation that is
not directly implied by the lower bound for testing
monotonicity~\cite{Fischer04}.

To prove our lower bound, we construct two distributions with different
LIS lengths such that every deterministic nonadaptive
algorithm distinguishing the distributions with probability at least
$2/3$, has query complexity $\log^{\omega(1)} (n)$. 
More specifically, for every natural number $h$, we construct
distributions $\mathcal{D}_0^{(h)}$ and $\mathcal{D}_1^{(h)}$ that are
supported on inputs whose LIS lengths differ by $\exp(-h)$. We then
prove  that every deterministic nonadaptive algorithm that
takes input from the union of the supports of  $\mathcal{D}_0^{(h)}$ and
$\mathcal{D}_1^{(h)},$ and aims to correctly identify the distribution from which
the input is taken, either fails for most inputs or makes $\Omega(\log^h n)$ queries. 
Interestingly, our lower bound construction uses ideas from the lower
bound of Ben-Eliezer, Canonne, Letzter, and Waingarten~\cite{BCLW19} on the query complexity of $1$-sided error nonadaptive testers for
the property of $(k,\dots,2,1)$-freeness, where an array
$A$ of length $n$ is $(k,\dots,2,1)$-free if there are no $k$ indices
$i_1 < i_2 < \dots < i_k$ such that $A[i_1] > A[i_2] > \dots >
A[i_k]$.

\paragraph*{Using reductions from erasure-resilient testing:} 
As mentioned before, a general method for proving lower bounds on
distance estimation is via proving lower bounds on erasure-resilient testing~\cite{DixitRTV18}.

\begin{definition}[Erasure-resilient monotonicity tester]~\label{def:er-testing}
Given
$\epsilon, \alpha \in (0,1)$
and
a real-valued array $A$ containing at most
$\alpha$-fraction of \emph{erased} values\footnote{Erasures are made
  adversarially before the tester makes its queries and the tester is
  unaware of the location of the erasures. A tester that queries the
  value at an erased location is returned a special symbol $\perp$.},
the goal of an $\alpha$-erasure-resilient $\epsilon$-tester for
monotonicity is to determine 
whether
$A$ can be completed to a monotone array or whether every completion of
$A$ has Hamming distance at least $\epsilon n$ to monotonicity.
\end{definition}
 
 Dixit, Raskhodnikova, Thakurta and Varma~\cite{DixitRTV18} observed that the
complexity of erasure-resilient (ER) testing  a property, falls in
between the complexity of standard testing the property and
estimating the distance to that property (with additive error). 
Hence, a lower bound on the complexity of
ER testing monotonicity implies the same lower bound
for estimating the length of the LIS up to an additive error.  The
only previously known ER tester for monotonicity ~\cite{DixitRTV18} is adaptive and has query
complexity $O(\log(n)/\epsilon)$.
Hence, a nontrivial lower bound for (adaptive) LIS estimation cannot be obtained
this way. 

We present  a {\em nonadaptive} ER tester that makes $O(\log
n)$ queries and works for all fraction of erasures. This makes the
results on ER testing monotonicity tight, and also shows
that one cannot obtain a lower bound for LIS estimation via ER testing
even for nonadaptive algorithms.

\begin{theorem}\label{thm:nonadaptive-erasure-resilient-monotonicity-test}
Let $\epsilon, \alpha \in (0,1)$ such that $\alpha + \epsilon <
1$. There exists a nonadaptive $\alpha$-erasure-resilient
$\epsilon$-tester for monotonicity  with query complexity
$O\left(\frac{\log n}{\epsilon^2} + \frac{1}{\epsilon^3}\right)$, for
arrays for length $n$. 
\end{theorem}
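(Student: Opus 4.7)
The plan is to adapt the classical nonadaptive monotonicity tester of Ergun et al., which samples pairs at dyadic distances, to the erasure-resilient setting. My starting point is the following characterization: letting $V \subseteq [n]$ denote the set of non-erased indices, the array $A$ admits a monotone completion if and only if $A|_V$ is sorted, and $A$ is $\epsilon$-far from any monotone completion if and only if $|V| - \lis(A|_V) \ge \epsilon n$. The hypothesis $\alpha + \epsilon < 1$ then guarantees $|V| > \epsilon n$ and, in quantitative form, $\frac{1}{1-\alpha} \le \frac{1}{\epsilon}$, which is what will let us absorb the ``erasure overhead'' into the query complexity without an explicit $\alpha$-dependence.

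The algorithm I would design consists of two components. The first component samples $O(1/\epsilon^3)$ indices uniformly at random and tests every pair of non-erased samples for a monotonicity violation; this catches ``short-range'' violations and accounts for the additive $1/\epsilon^3$ term. The second component samples, for every $\ell \in \{1,\ldots,\lceil \log n\rceil\}$, a batch of $O(1/\epsilon^2)$ random pairs $(i, i+2^\ell)$, testing each for a violation, and accounts for the $\log n/\epsilon^2$ term. The extra $1/\epsilon$ factor over the standard (non-erasure) nonadaptive tester is exactly the cost of requiring both endpoints of a sampled pair to land in $V$, an event whose probability is at least $(1-\alpha)^2 \ge \epsilon^2$. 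The algorithm rejects if and only if some queried pair of non-erased positions is a violation.

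Completeness is immediate: a monotone completion has no violations among non-erased positions. For soundness I plan to prove a \emph{dispersion lemma}: if $|V| - \lis(A|_V) \ge \epsilon n$, then either an $\Omega(\epsilon^2)$ fraction of all pairs within a short range are non-erased violations (so component~1 catches one with high probability), or there is some $\ell^\ast$ for which an $\Omega(\epsilon/\log n)$ fraction of pairs at distance $2^{\ell^\ast}$ are non-erased violations (so component~2 catches one at that level). The natural strategy is to build an auxiliary completion $\hat A$ of $A$ by interpolating over erased positions as in \cite{NewmanRRS19}, apply the standard Ergun-style dyadic decomposition of violating pairs in $\hat A$, and then translate the dispersed violations in $\hat A$ back into non-erased violations in $A$.

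The main obstacle will be executing this translation in the presence of an \emph{arbitrary} erasure pattern. The classical dyadic decomposition crucially relies on the fact that violations at each scale $2^\ell$ combine to cover the whole deficiency, and an adversary could in principle cluster erasures so as to mask violations at specific scales or create spurious violations after completion. The completion step must therefore be done carefully so that (i) no ``fake'' violations are introduced that would spoil soundness, and (ii) the original $\epsilon n$-deficiency in $A|_V$ lifts to a $\Theta(\epsilon n)$ deficiency in $\hat A$ whose violations are still dyadically dispersed. Using the interpolation technique of \cite{NewmanRRS19} together with the standard counting argument at each scale should yield both properties; combining the two components then gives the claimed query complexity $O\!\left(\frac{\log n}{\epsilon^2} + \frac{1}{\epsilon^3}\right)$.
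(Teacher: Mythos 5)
Your proposal takes a genuinely different route from the paper's. The paper's tester (Algorithm~\ref{alg:nonadaptive-monotonicity-tester}) samples a uniformly random index $s$, runs a randomized binary search \emph{for the index $s$ itself} (not for a value) while querying each pivot, additionally queries a constant-sized window of $O(1/\epsilon)$ indices around $s$, and rejects on any violation among non-erased queried points. Its soundness rests on the ``deserted elements'' lemma of~\cite{NewmanRRS19} (Lemma~\ref{lem:bound-on-deserted-elements}): restricting a violating-pair matching to indices that are not $\gamma$-deserted guarantees that for a typical pair $(x,y)$, either $y-x = O(1/\epsilon)$ (handled by the local window) or $S\cap[x,y]$ is dense enough that a random pivot in $[x,y]$ hits a non-erased witness with probability $\Omega(\gamma)$. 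You instead propose the Ergun-style dyadic-pairs spot checker, with an added local component and an interpolated completion $\hat A$ from which you hope to ``translate back'' dispersed violations.

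The gap is exactly where you flag the main obstacle, and it is not a cosmetic one. Interpolating over erased positions and applying the dyadic decomposition to $\hat A$ does not control the \emph{scale} of the witnesses once you map them back to $V$: if $(i,i')$ is a violation in $\hat A$ with $i$ erased and $j<i$ its nearest non-erased surrogate, then the dyadic level of $(j,i')$ can differ from that of $(i,i')$ by an amount the adversary controls by clustering erasures. Nothing in the interpolation step prevents the adversary from shifting essentially all witnesses to scales that your fixed-budget-per-scale sampler underweights, and the standard per-scale counting argument for the dyadic tester does not survive this reindexing. The paper's deserted-elements lemma is precisely the combinatorial tool that caps this distortion --- it bounds, in terms of $\gamma$ and the erasure fraction, how many non-erased indices sit in a locally-sparse region --- and the paper's algorithm is designed around it (search point anchored at a random \emph{index}, plus a local window to catch short pairs). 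You invoke~\cite{NewmanRRS19} in passing but do not actually deploy this lemma; ``should yield both properties'' is the load-bearing step, and as written there is no argument behind it. Separately, your probability accounting is off as stated: with $O(1/\epsilon^2)$ samples at a scale where only an $\Omega(\epsilon/\log n)$ fraction of pairs are non-erased witnesses, the expected number of hits at that scale is $O\bigl(1/(\epsilon\log n)\bigr)=o(1)$; you would need the dispersion to aggregate to $\Omega(\epsilon)$ \emph{summed over scales} for the budget to close, and that is a different (and still unproven) claim. Until the dispersion lemma is proved --- with the scale-reindexing controlled --- the soundness direction does not go through.
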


The ER testers designed by Dixit et
al.~\cite{DixitRTV18} for various properties, are all either adaptive, or
obtained by repeating a (standard) tester that makes independent and uniformly
distributed queries.  
Our tester is different, and is in this sense,
the first nontrivial nonadaptive ER tester for a
natural property. 
Consider an array $A$ of length $n$ with at most $\alpha$ fraction of
erasures, where $\alpha \in [0,1)$.  Our tester samples an index
$s \in [n]$ uniformly at random and does a randomized binary search for
$s$ on the array as if it were monotone.  It queries the array values on
these indices, and looks for violations to monotonicity on the search path to
$s$. In case there are no erasures, this is a good
strategy to detect a violation to monotonicity~\cite{EKK+00}. However, when values at a constant fraction
of indices are erased, it could be the case that most of the values on
the search path are erased. We show that a slightly modified version
of this tester can be used for testing monotonicity.  Specifically, our
tester, in addition to querying the values along the binary search
path, also queries the indices in a small constant-sized interval
around the search point $s$.  To analyze this modified tester, we rely
on a combinatorial lemma by Newman, Rabinovich, Rajendraprasad, and Sohler~\cite{NewmanRRS19}.  A
nonerased index $x \in [n]$ is $\gamma$-deserted for
$\gamma \in (0,1)$ if there exists an interval $I \subseteq [n]$ such
that $x \in I$ and at most $\gamma$ fraction of the values in $I$ are
nonerased.  Roughly speaking, the lemma implies that the fraction of
$\gamma$-deserted indices in $A$ is proportional to
$\gamma\cdot\alpha$.  Using this, we are able to argue that, with high
probability, the index $s$ that we sample as the search point is not
$\gamma$-deserted (for an appropriate choice of $\gamma$) and that it
forms a violation with enough other nonerased indices, so as to ensure a
high probability of success.

\subsubsection{Parameterized and nonadaptive algorithms for LIS estimation}

We present efficient nonadaptive LIS estimation algorithms.
The novelty is that we parameterize the complexity of LIS estimation algorithms in terms of the number of distinct values $r$ in an array.

\smallskip
\noindent We first show an LIS estimation algorithm with query complexity $\tilde{O}(r)$. 
\begin{theorem}\label{thm:O(r)}
There exists a nonadaptive algorithm that, given a real-valued array
$A$ of length $n$ containing at most $r$ distinct values, and a
parameter $\epsilon \in (0,1)$, makes $\tilde{O}(r/\epsilon^3)$
queries and outputs, with probability at least $2/3$, an estimate for the LIS size that is accurate to
within additive $\epsilon n$-error.    
Moreover, the queries of the algorithm are uniformly and independently distributed.
\end{theorem}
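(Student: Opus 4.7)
The plan is to exploit the structure that, when $A$ takes only $r$ distinct values $v_1 < v_2 < \dots < v_r$, an LIS is determined by a choice of $r-1$ breakpoints in $[n]$. Writing $I_j = \{i \in [n] : A[i] = v_j\}$, one has
\begin{equation*}
\lis(A) \;=\; \max_{0 = t_0 \le t_1 \le \dots \le t_r = n}\ \sum_{j=1}^{r} \bigl|\, I_j \cap (t_{j-1}, t_j]\, \bigr|,
\end{equation*}
because any nondecreasing subsequence uses positions of $I_j$ inside a contiguous range, with the ranges for successive $j$ appearing left to right. My proposed algorithm is then as direct as possible: sample a set $S$ of $m = \tilde{O}(r/\epsilon^3)$ indices uniformly and independently from $[n]$, query $A[i]$ for $i\in S$, and output $\widehat L = (n/m)\cdot \lis(A|_S)$, the rescaled LIS of the sampled subarray. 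This is nonadaptive and the queries are uniform and independent, as required.

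To analyse $\widehat L$, I would prove two matching concentration statements. For $\widehat L \ge \lis(A) - \epsilon n$, I fix an optimal breakpoint vector $(t_j^\star)$ and observe that the count $\sum_j |S \cap I_j \cap (t_{j-1}^\star, t_j^\star]|$ is a sum of $m$ i.i.d.\ Bernoulli indicators of mean $\lis(A)/n$, so by Hoeffding it lies within $\epsilon m$ of $(m/n)\lis(A)$ except with probability $\exp(-\Omega(\epsilon^2 m))$; since $\lis(A|_S)$ dominates this particular count, the lower bound follows. For $\widehat L \le \lis(A) + \epsilon n$, I would argue that, simultaneously for every breakpoint vector $(t_j)$, the sampled count along it is at most $(m/n)\sum_j |I_j \cap (t_{j-1}, t_j]| + \epsilon m \le (m/n)\lis(A) + \epsilon m$; taking the maximum over $(t_j)$ then yields the bound.

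The main obstacle is carrying out the union bound in the upper direction, since there are a priori $\binom{n}{r-1}$ candidate breakpoint vectors. I would reduce this by noting that, when evaluating $\lis(A|_S)$, only breakpoints in $\{0\}\cup S\cup\{n\}$ are relevant, cutting the count to $O(m^r)$; the resulting $r\log m$ cost in the Hoeffding exponent is absorbed by taking $m = \tilde{O}(r/\epsilon^2)$ in the regime where the true count of the partition is $\Omega(\epsilon n)$. An extra factor of $1/\epsilon$ (the source of $\epsilon^3$ rather than $\epsilon^2$) comes from using a multiplicative Chernoff bound to handle partitions of small true count together with a scale-dependent $\epsilon$-net over the $O(\log(1/\epsilon))$ relevant size classes. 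Combining the two sides of the concentration argument yields the stated $\pm \epsilon n$ guarantee with success probability at least $2/3$.
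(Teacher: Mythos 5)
Your approach is genuinely different from the paper's. The paper partitions $[n]$ into $t = \Theta(r/\epsilon)$ equal-length subarrays, samples independently inside each one to estimate per-subarray value densities, argues via structural lemmas (``nice'' and ``dense'' subarrays) that an LIS is essentially captured by picking a single \emph{typical} value per subarray, and then enumerates over all increasing ``pseudo-solutions'' of typical values to compute the estimate. Your algorithm, by contrast, just samples $m$ indices uniformly, computes $\lis(A|_S)$, and rescales. The breakpoint identity
$\lis(A) = \max_{0 = t_0 \le \dots \le t_r = n} \sum_j |I_j \cap (t_{j-1}, t_j]|$
is correct, and the lower-tail argument (fix the optimal $\vec{t}^\star$, note the sample count along it is Binomial with the right mean, apply Hoeffding, and observe $\lis(A|_S)$ dominates it) is clean. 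If the upper tail is done right, your estimator is arguably simpler than the paper's and in fact gives a \emph{better} $\epsilon$-dependence of $\tilde{O}(r/\epsilon^2)$ rather than $\tilde{O}(r/\epsilon^3)$.

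The gap is in the upper-tail union bound. Restricting to breakpoints $t_j \in \{0\}\cup S \cup \{n\}$ makes the family of candidate $\vec{t}$ \emph{data-dependent}: you are choosing which partitions to control based on the same sample $S$ whose deviations you want to control, so you cannot simply apply Hoeffding plus a union bound over those $O(m^{r})$ candidates. Two clean fixes exist. (a) Union-bound over \emph{all} breakpoint vectors $\vec{t} \in \{0,1,\dots,n\}^{r-1}$, of which there are at most $\binom{n+r}{r-1} \le n^r$; since for each fixed $\vec{t}$ the sample count is a sum of $m$ i.i.d.\ indicators with mean $\mu(\vec{t}) \le (m/n)\lis(A)$, Hoeffding gives $\Pr[X_S(\vec{t}) > \mu(\vec{t}) + \epsilon m] \le \exp(-2\epsilon^2 m)$, and the union bound closes with $m = O(r \log n/\epsilon^2)$. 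This is rigorous but introduces a $\log n$ factor absent from the paper's bound. (b) Use symmetrization / Sauer--Shelah: the set system of increasing subsets parameterized by $\vec{t}$ has $O(m^{r-1})$ distinct traces on any $m$-point sample, so the standard VC uniform-convergence bound yields $m = O(r\log(1/\epsilon)/\epsilon^2)$ with no $\log n$. Your explanation for the extra $1/\epsilon$ (multiplicative Chernoff for low-mass partitions, a scale-dependent net) is not needed: the additive Hoeffding bound handles all partitions at once regardless of $\mu(\vec{t})$, since you only need $X_S(\vec{t}) \le \mu(\vec{t}) + \epsilon m$ and $\mu(\vec{t}) \le (m/n)\lis(A)$ is automatic. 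So that paragraph is a red herring; the real issue is the data-dependence, and once fixed via (a) or (b), the proof goes through with $\epsilon^2$, not $\epsilon^3$.
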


We mention that the approximation guarantee provided by the algorithm
is quite strong and holds even for non-constant error parameter
$\epsilon$. It matches the approximation guarantee of the
adaptive LIS estimation algorithm by Saks and
Seshadhri~\cite{SaksS10}, which makes $\text{polylog}(n)$ queries when
$\epsilon = \theta(1)$.
In particular, when the length of the LIS $\mathcal{L}$ is a constant fraction of $n$, our algorithm can be used to get a multiplicative $(1 \pm \epsilon)$-approximation for the LIS length.
We add that our algorithm is the only nonadaptive sublinear-time algorithm giving this approximation guarantee as soon as $r = o(n)$. 
Furthermore, when $r = o(\log^k n)$ (for an appropriate power $k$), our algorithm performs much better than the algorithm of Saks and Seshadhri, not only in terms of the dependence of query complexity on the input size $n$, but also in terms of the dependence on the approximation parameter $\epsilon$.

The  high level idea of the algorithm is that it is enough to restrict
attention to special subarrays that are \emph{dense} and
\emph{nice}, as elaborated in the following. 
Let $\mathcal{L}$ be a fixed unknown LIS in the input array.
A subarray is dense if a constant fraction of its indices
belong to $\mathcal{L}$, and 
it is nice if the LIS takes at most one distinct value in the subarray.   
Informally, we divide the array into $O(r/\epsilon)$ subarrays.
This will make most dense subarrays nice with respect to $\mathcal{L}$ (for an appropriate density parameter).
We then sample $O(\log r)$ indices in each
subarray to find the values that are `typical' in each subarray.

Next, our goal is to output as an estimate for $|\mathcal{L}|$, the size of $\mathcal{L}'$, which is the restriction of
$\mathcal{L}$ to such typical values. This will naturally be an
underestimate, but with a small additive error. To estimate the
size of $\mathcal{L}'$, we consider all possible increasing sequences
of the typical values, taking one value from each subarray. Since most
subarrays are nice, the size of $\mathcal{L}'$ restricted to such a sequence of
values is quite close to $|\mathcal{L}'|$. Finally, for a given
nice subarray $A_i$, the largest subsequence in $A_i$ that takes
one given value $v$ can be easily determined -- this is just the distance
to the array taking the value $v$ everywhere.

\medskip
\noindent Next, we use the above $\tilde{O}(r)$-query algorithm to obtain a nonadaptive LIS estimation algorithm with query complexity $\tilde{O}(\sqrt{r})$. 

\begin{theorem}\label{thm:O(root-r)}
There exists a nonadaptive algorithm that, given a real-valued array
$A$ of length $n$ containing at most $r$ distinct values and
$|\emph{LIS}(A)| = \lambda \cdot n$, makes $\tilde{O}(\sqrt{r}\cdot
\emph{poly}(1/\lambda))$ queries and outputs, with probability at
least $2/3$, an estimate $\mathsf{est}$ such that $\Omega (\lambda\cdot |\emph{LIS}(A)|) \le \mathsf{est} \le O(|\emph{LIS}(A)|).$
\end{theorem}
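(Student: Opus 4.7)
The plan is to combine the $\tilde{O}(r/\epsilon^3)$-query additive-error algorithm of Theorem~\ref{thm:O(r)} with the sampling framework of Rubinstein, Seddighin, Song, and Sun~\cite{RubinsteinSSS19}. A direct application of Theorem~\ref{thm:O(r)} with $\epsilon = \Theta(\lambda)$ already gives a constant-factor multiplicative approximation to $|\text{LIS}(A)| = \lambda n$, but at the prohibitive cost of $\tilde{O}(r/\lambda^3)$ queries. The goal here is to shave a $\sqrt{r}$ factor from this, accepting an $\Omega(\lambda)$-factor loss in the multiplicative approximation.

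The main algorithmic idea is to partition the value range $R$ into $k = \tilde{\Theta}(\sqrt{r})$ contiguous value buckets $G_1 < \cdots < G_k$, each holding $\Theta(\sqrt{r})$ consecutive values, and to exploit the structural fact that any LIS $\mathcal{L}$ of $A$ visits these buckets in sorted order, occupying a contiguous window of positions per bucket. For each bucket $G$, the restricted subarray $A|_{I_G}$ (with $I_G = \{i \in [n]: A[i] \in G\}$) contains at most $\sqrt{r}$ distinct values, so Theorem~\ref{thm:O(r)} estimates its LIS length using only $\tilde{O}(\sqrt{r}\cdot\text{poly}(1/\lambda))$ queries per bucket. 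By invoking Theorem~\ref{thm:O(r)} on only $\tilde{O}(\text{poly}(1/\lambda))$ uniformly sampled buckets, the total query budget is kept at $\tilde{O}(\sqrt{r}\cdot\text{poly}(1/\lambda))$. The per-bucket estimates are then combined via a dynamic program over the sampled buckets that respects value-bucket monotonicity across bucket boundaries, together with a discretized candidate position-window per bucket.

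The main obstacle is that $|\text{LIS}(A|_{I_G})|$ can substantially exceed $\mathcal{L}$'s true contribution $\ell_G$ to bucket $G$: since $\text{LIS}(A|_{I_G})$ is not constrained to the position-window at which $\mathcal{L}$ visits $G$, naively summing per-bucket LIS lengths can overcount by up to a factor of $\sqrt{r}$. I expect the fix to proceed in two parts. First, a heavy-hitter pigeonhole argument shows that a constant fraction of $|\text{LIS}(A)|$ is concentrated in $\tilde{O}(\text{poly}(1/\lambda))$ ``heavy'' buckets, which the uniform bucket sampling captures with constant probability. Second, the DP over the sampled buckets, using discretized position windows and per (bucket, window) estimates from Theorem~\ref{thm:O(r)}, assembles a valid (up to additive errors) increasing subsequence of $A$; this automatically upper-bounds the estimate by $|\text{LIS}(A)| + O(\lambda n)$, while the captured heavy buckets supply a lower bound of $\Omega(\lambda\cdot |\text{LIS}(A)|)$. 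The $\Omega(\lambda)$ approximation factor (rather than constant) emerges from the combined losses due to value bucketing, position discretization, and the per-invocation additive errors of Theorem~\ref{thm:O(r)}.
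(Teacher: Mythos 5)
Your proposal takes a genuinely different route from the paper's, and it has at least two gaps that I do not see how to close. The first, and cleaner, gap is the nonadaptive query budget for a single bucket. You want to invoke Theorem~\ref{thm:O(r)} on the restricted array $A|_{I_G}$, where $I_G=\{i:A[i]\in G\}$. That algorithm needs $\tilde{O}(\sqrt{r}\cdot\mathrm{poly}(1/\lambda))$ \emph{uniform samples from $I_G$}. But $I_G$ is a scattered, \emph{unknown} set of indices: you have no way to sample from it nonadaptively except to sample from $[n]$ and filter by value. If the buckets partition the $r$ values into $\tilde\Theta(\sqrt r)$ groups of $\Theta(\sqrt r)$ values each, a typical bucket has index density $\approx 1/\sqrt r$ in $[n]$ (and, since you bucket by value count rather than by index density, it can be far lower for a particular bucket). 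To land $\tilde{O}(\sqrt{r}\cdot\mathrm{poly}(1/\lambda))$ samples inside such an $I_G$ you would need $\tilde{O}(r\cdot\mathrm{poly}(1/\lambda))$ queries per bucket, which blows the budget by a $\sqrt r$ factor. Sample-sharing across buckets does not rescue this: $\tilde{O}(\sqrt{r}\cdot\mathrm{poly}(1/\lambda))$ global samples put only $\tilde{O}(\mathrm{poly}(1/\lambda))$ in a bucket of density $1/\sqrt r$, far short of what Theorem~\ref{thm:O(r)} needs on $\sqrt r$ distinct values. The paper sidesteps exactly this by also partitioning the \emph{index} axis into $x=\Theta(\sqrt r)$ contiguous subarrays, by replacing your equal-value-count buckets with density-balanced layers (the \textsc{Layering} procedure), and by only ever invoking Theorem~\ref{thm:O(r)} on a block that is guaranteed to occupy a $\ge\beta/(16m)$ fraction of a segment of \emph{known contiguous} indices, so that filtering a sample from the segment loses only a $\mathrm{poly}(1/\lambda)$ factor.

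The second gap is the heavy-hitter step. You claim a constant fraction of $|\mathrm{LIS}(A)|$ sits in $\tilde{O}(\mathrm{poly}(1/\lambda))$ heavy buckets; this fails already for $A=\mathrm{id}$ with $r=n$, where the LIS contributes exactly $\sqrt n$ indices to every one of the $\sqrt n$ buckets and no $o(\sqrt n)$ of them carries a constant fraction. Without it, uniformly sampling $\mathrm{poly}(1/\lambda)$ of the $\sqrt r$ buckets misses most of the LIS, and a naive scale-up estimator has unbounded variance because a single equal-value-count bucket may carry up to $n$ indices. The paper's replacement for your ``heavy buckets $+$ DP over position windows'' mechanism is combinatorial: it defines a poset of dense cells, repeatedly deletes antichains of size $>5/\lambda$ (losing only $O(\lambda n)$ of the LIS), and applies Dilworth's theorem to cover the surviving poset by $O(1/\lambda)$ chains. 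Each chain is then split in a staircase fashion into a ``horizontal'' and a ``vertical'' subchain of blocks that are pairwise disjoint in subarrays (resp.\ layers), making per-block LIS contributions additive and amenable to unbiased block sampling. That structure is what actually delivers both the $\Omega(\lambda)$ lower bound and the concentration needed for estimation, and it is absent from your sketch.
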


As mentioned before, this  result is an improvement over a recent LIS
estimation algorithm by Rubinstein, Seddighin, Song and Sun~\cite{RubinsteinSSS19}, 
in terms of the approximation guarantee. 
Additionally, the complexity of our algorithm improves as the number of distinct values in the input array decreases. 
Another advantage of our algorithm (also that of~\cite{RubinsteinSSS19}) is that its query complexity is sublinear, even if $\lambda$ is sub-constant.  

Our $\tilde{O}(\sqrt{r})$-query nonadaptive algorithm is somewhat
complicated. In the following, we present a high-level description of the algorithm.
We denote the input array by $A$ and use $\cL$ to 
denote a fixed LIS in $A$. 
We visualize the array values as points in an
$r \times n$ grid $G_n$. 
The vertical axis of $G_n$ represents the range $R$ of the array and is labeled with the at most $r$
distinct array values in increasing order and the horizontal axis is labeled with the
indices in $[n]$. 
We refer to an index-value pair in the grid as a point.
The grid has $n$ points, to which we do not have direct access.
We use queries to the array to form some \emph{approximate}
picture of the location of points in this grid, and use it
to estimate $|\cL|$.

The main idea is to build, in $\tilde{O}(\sqrt{r})$ queries, a data
structure that possesses enough information to compute an estimate $\mathsf{est}$,
which is a lower bound
on $|\mathcal{L}|$ and is also a reasonably good approximation. 
Roughly speaking, the first step in building this data structure is the following. 
We divide the $r\times n$ grid $G_n$ into $y^*$ rows and $x$ columns that 
partitions $G_n$ into a $y^*  \times x$ grid $G'$ of boxes, where $y^* =
\Theta (\sqrt{r})$ and $x = \Theta (\sqrt{r})$ (both depend also 
on $\lambda$).
Specifically, we divide the interval $[n]$ into $x$ contiguous subarrays.
For $i \in [x]$, let $D_i$ denote the $i$-th subarray.
Additionally, we divide the range $R$ into $y^*$ contiguous intervals of array values,
where for $j \in [y^*]$, we use $I_j$ to denote the $j$-th interval when the intervals are 
sorted in the nondecreasing order of values.
The set of boxes in $G'$ is then $\{(I_j,D_i): ~
i \in [x], j \in [y^*]\}$.

For simplicity, we assume that $r=n$ for the rest of the high-level
description.
The $y^* \times x$ grid of boxes $G'$ induces a poset on the $y^* x$
boxes, which is similar to the natural poset defined on $G_n$. 
Namely, for two boxes
in $G'$ (or for two points in $G_n$), we have $(I_j,D_i)
\preceq (I_t, D_s)$ (or $(i,j)
\leq (t,s)$) if $i \leq s$ and $j \leq t$.
The points in $\mathcal{L}$ form a chain in the
above poset in $G_n$.  Conversely, each chain in the poset
$G_n$ forms an increasing subsequence in the array $A$. Further, the boxes in
$G'$ through which $\mathcal{L}$ passes also forms a chain in the poset in
$G'$. 
On the other hand, every chain of boxes in the poset in $G'$ induces a number of chains
in the poset in $G_n$, but of possibly quite different lengths. Our strategy is
to find a small collection of chains in the poset in $G'$ that cover all boxes through which
the fixed $\mathcal{L}$ passes, and then to estimate the length of an LIS in
each of these chains of boxes. 

Let $I \subseteq R$ be a subset of the range $R$ of values and $B$ be a subarray of $A$. The density of the box $(I,B)$, denoted by $\den(I,B)$, is defined to be
the fraction of indices in the subarray $B$ whose values belong to the interval $I$. 
In other words, for each box $(I_j,D_i) \in
G',$ its density $\den(I_j,D_i)$ is the
fraction of indices in the subarray $D_i$ whose values land in the interval $I_j$.
For $\beta < 1$, a box
$(I_j, D_i)$ is said to be $\beta$-dense, if $\den(I_j,D_i) \geq \beta$.
There can be at most $\frac{1}{\beta}$ boxes that are $\beta$-dense in
any particular subarray $D_i$.

Suppose that we know (a good approximation of) the density of every
box in $G'$ (this is what we require from our data structure, and this
will be achieved via sampling). 
Then,  
we may restrict our attention to the at most
$x/\beta$ dense boxes in $G'$ and compute the LIS only in the
corresponding part of $G_n$. This is obviously  an underestimate
of the size of $\mathcal{L}$,  but one that can be afforded;  deleting
every box that is not $\beta$-dense from the chain of boxes that
$\mathcal{L}$ passes through causes the  deletion at most $\beta n$ points from
$\mathcal{L}$.

We note that the same global idea is also used in the algorithms of \cite{RubinsteinSSS19}
and (implicitly) also of \cite{SaksS10}, but in a completely different
setting (and grid sizes) which makes the first one weaker in term of
approximation guarantees, and the second one necessarily adaptive.

Next, in order to further reduce the number of possible chains of boxes in
which we need to compute LIS, we note that we can delete large
antichains of boxes from $G'$, while not decreasing the LIS size by much.
For this, we first consider a finer partition of each dense box into dense cells of
nearly equal densities, and then define a poset on the set of all
dense cells in the whole array.
We then remove large antichains from this latter poset 
and argue that the removal of dense cells 
participating in these antichains does not
\emph{hurt} the LIS too much.
Finally, by using Dilworth's theorem, we are able to obtain a collection of 
 a {\em constant} number of chains in $G',$ that covers
the restriction of $\mathcal{L}$ to the undeleted boxes.

The next idea is to estimate the LIS in each of the constantly many 
remaining chains. This results in a loss of a multiplicative
constant factor in the LIS size estimation, which we can afford.

At this point, we have reduced the problem to the estimation
of the LIS in a given fixed chain of $\beta$-dense boxes in $G'$.
Such a chain can be partitioned into two chains, one that contains only
strictly horizontal chains on disjoint subarrays, and the other that contains
only strictly vertical chains on disjoint interval ranges (see Figure
\ref{fig:chain1}). We will estimate the LIS in each, losing possibly
another multiplicative $2$-factor, which we are prepared to accept.

The final idea is the following. For the vertical going chain, one can
just sample a constant number of vertically going subchains, estimate the LIS length in each
one of them, and use these
estimates to estimate the LIS length in the vertical chain. 
By the Hoeffding bound, this
will be a good approximation. 
When $r = n$, estimating the LIS in a single vertical going subchain is trivial; we just query all
  $n/x = \tilde{O}(\sqrt{n})$ points in the subarray $D_i$
  corresponding to that subchain. 
  For smaller $r$, this 
is  not possible, and what we do is to reduce to the 
algorithm implied in Theorem~\ref{thm:O(r)}, using the fact that most
vertically going chains span a small range (this later fact will 
have to be argued from the way the data structure is formed).

For horizontally going chains, we will need a bit more from our data
structure.  The partition $G'$ of $G_n$ will
  be such that every layer formed by $i \in [y^*]$ contains either a small
  fraction of points from $\mathcal{L}$, or it contains only one range
  value. This is the only place in which we actually make use of the
fact that $y^* = \Omega(\sqrt{n})$, which lower bounds the query
complexity of the algorithm.  Having this guarantee on the grid $G'$, 
   it would have been enough
  to sample a constant number of horizontally going subchains, and
  estimate the LIS within. Further,  by the guarantee above,  each horizontal layer in $G'$
  contains only a small number of distinct values. This implies that  we could again employ our algorithm
  of Theorem~\ref{thm:O(r)}. However, this will make the whole
  algorithm adaptive (as one has to `locate' the horizontal
  segments). Instead, we show that we can concentrate on short
  (spanning a constant number of boxes) subchains, which will allow us
  to employ the algorithm given by Theorem~\ref{thm:O(r)} nonadaptively.

\subsubsection{Separating erasure-resilient testing from tolerant testing}

Tolerant testing is a generalization of property testing~\cite{RubinfeldSudan96,GGR98} defined by Parnas, Ron and Rubinfeld~\cite{PRR06}.
Specifically, a $(\delta, \epsilon + \delta)$-tolerant tester  of monotonicity distinguishes, with probability at least $2/3$, between the cases that the distance of $A$ to monotonicity is less than $\delta  n$ and at least $(\epsilon + \delta)  n$, where $\epsilon, \delta \in (0,1)$.

It has been observed~\cite{PRR06} that a tolerant tester for a property is equivalent to an algorithm for
estimating the distance to that property with an additive error guarantee.
Hence, the task of estimating the LIS up to an additive error is equivalent to tolerant monotonicity testing.
This allows us to restate Theorem~\ref{thm:adaptive-tolerant-lb} in terms of tolerant testing as follows.
\begin{theorem}\label{thm:estimation-to-tol}
For every $\epsilon \in (0,1)$, there exists a constant $\delta \in
(0,1)$ such that every nonadaptive $2$-sided error $(\delta, \delta +
\epsilon)$-tolerant  tester of monotonicity has query complexity $\log^{\Omega(\log(1/\epsilon))} n$.
\end{theorem}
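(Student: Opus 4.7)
The plan is to derive Theorem~\ref{thm:estimation-to-tol} directly from Theorem~\ref{thm:adaptive-tolerant-lb} via the classical equivalence between additive-error distance estimation and tolerant testing observed in~\cite{PRR06}. Since the distance to monotonicity of a real-valued array $A$ of length $n$ equals $n - |\text{LIS}(A)|$, any algorithm that approximates $|\text{LIS}(A)|$ to within additive error $\epsilon n$ also approximates the distance to monotonicity to within the same additive error, and conversely. Hence Theorem~\ref{thm:adaptive-tolerant-lb} immediately gives a $\log^{\Omega(\log(1/\epsilon))} n$ nonadaptive lower bound for estimating the distance to monotonicity within additive $\epsilon n$, and all that remains is to extract a single constant $\delta$ witnessing the tolerant-tester lower bound.

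To pin down the specific $\delta$, I would revisit the two hard distributions $\mathcal{D}_0^{(h)}$ and $\mathcal{D}_1^{(h)}$ used in the proof of Theorem~\ref{thm:adaptive-tolerant-lb}, which are described as being supported on inputs whose LIS lengths differ by a $\exp(-h)$ fraction. Choosing the parameter $h$ so that $\exp(-h) \ge 2\epsilon$ (which keeps $h = \Theta(\log(1/\epsilon))$ and therefore preserves the $\log^{\Omega(\log(1/\epsilon))} n$ lower bound on distinguishing these two distributions), the LIS lengths in the supports of $\mathcal{D}_0^{(h)}$ and $\mathcal{D}_1^{(h)}$ are $p \cdot n$ and at least $(p + 2\epsilon) \cdot n$ respectively, for some fixed $p \in (0,1)$ determined by the construction. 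Setting $\delta := 1 - p - \tfrac{3\epsilon}{2}$, every array in the support of $\mathcal{D}_1^{(h)}$ has distance strictly below $\delta n$ to monotonicity, while every array in the support of $\mathcal{D}_0^{(h)}$ has distance at least $(\delta + \epsilon) n$.

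Consequently, any $2$-sided-error $(\delta, \delta + \epsilon)$-tolerant tester must accept with probability $\ge 2/3$ on inputs drawn from $\mathcal{D}_1^{(h)}$ and reject with probability $\ge 2/3$ on inputs drawn from $\mathcal{D}_0^{(h)}$; this is exactly the distinguishing task ruled out by the proof of Theorem~\ref{thm:adaptive-tolerant-lb} below $\log^{\Omega(\log(1/\epsilon))} n$ queries, so the same lower bound applies. The only thing to verify is that the LIS lengths on the supports of $\mathcal{D}_0^{(h)}$ and $\mathcal{D}_1^{(h)}$ are sharply concentrated (ideally, combinatorially determined) so that the threshold $\delta$ lies cleanly in the gap $2\epsilon$ between the two distance profiles. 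This should be immediate from the $(k,\dots,2,1)$-free-style construction inherited from~\cite{BCLW19}, which produces inputs whose LIS lengths are fixed by the parameters of the construction and independent of $n$. Since the resulting $p$ is a universal constant of the construction, $\delta$ lies in $(0,1)$ as required, completing the reduction.
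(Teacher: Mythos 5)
Your proof matches the paper's approach: Theorem~\ref{thm:estimation-to-tol} is obtained from Theorem~\ref{thm:adaptive-tolerant-lb} via the tolerant-testing/distance-estimation equivalence of~\cite{PRR06}, choosing $\delta$ to sit in the gap between the distances to monotonicity of the two hard distributions so that arrays from $\mathcal{D}_1^{(h)}$ must be accepted and arrays from $\mathcal{D}_0^{(h)}$ must be rejected, which is exactly the distinguishing task ruled out by the lower bound. One small inaccuracy in your final paragraph: the LIS lengths of arrays in the supports of $\mathcal{D}_0^{(h)}$ and $\mathcal{D}_1^{(h)}$ are \emph{not} deterministically fixed by the construction parameters — they depend on the random per-block choices (e.g., whether a $j_1$-block is a $00$- or $11$-block) — and are only concentrated with probability $1 - 2^{-\Omega(h)}$ by a Hoeffding argument (this is precisely Lemma~\ref{lem:dist-lowerbound-general}); that high-probability concentration (with the actual gap being $\Theta(2^{-2h})$ rather than $e^{-h}$) is what lets $\delta$ separate the supports, and the sub-constant failure probability is absorbed harmlessly into the Yao-style distinguishing argument.
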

Theorem~\ref{thm:estimation-to-tol} and
Theorem~\ref{thm:nonadaptive-erasure-resilient-monotonicity-test}
together imply that for the property of monotonicity, nonadaptive
tolerant testing is strictly harder than nonadaptive ER testing, and also significantly less efficient than adaptive tolerant testing.
Our results make progress towards answering the open question raised by Raskhodnikova, Ron-Zewi, and Varma~\cite{RRV19} on the existence of natural properties for which one can show a separation between tolerant testing and ER testing in terms of query complexity.

\ignore{
Here, we briefly outline the intuition behind our lower bound for the case $h = 2$. 
The distributions and the argument for general $h$ is obtained inductively.

Our starting point is a distribution $\mathcal{D}$ over real-valued arrays described by Ben-Eliezer, Canonne, Letzter, and Waingarten~\cite{BCLW19} for proving a lower bound of $\Omega(\log^2 n)$ on the query complexity of nonadaptive $1$-sided error $(4,3,2,1)$-freeness testing.
In particular, to sample an array $A$ from $\mathcal{D}$, we first set $A[u] = u$ for all $u \in [n]$.
For $j \in [\log n]$, a $j$-block is a contiguous set of indices in $A$ of length $2^j$ such that the value of the $j$-th bit in the first and second halves of the set are all $0$ and $1$, respectively.
Sample two distinct numbers (called the scales) $i, j \in [\log n]$ such that $i < j$.
Swap the array values in the left and right halves of all $j$-blocks, and of all $i$-blocks.
It is easy to see that this array is are $1/4$-far from $(4,3,2,1)$-freeness.
Ben-Eliezer, Canonne, Letzter, and Waingarten~\cite{BCLW19} argue that every deterministic tester that queries $o(\log^2 n)$ indices fail, with high probability, to detect four indices $w,x,y,z \in [n]$ such that $A[w] > A[x] > A[y] > A[z]$.
The important point here is that such four indices always satisfy: (1) $w, x$ and $y,z$ are in left and right halves of the same $j$-block, and (2) $w$ and $x$ are in the left and right halves, respectively, of an $i$-block, and (3) $y$ and $z$ are in the left and right halves of an $i$-block.
Essentially, Ben-Eliezer, Canonne, Letzter, and Waingarten~\cite{BCLW19} show that if the number of queries is $o(\log^2 n)$, the probability of having four queries as above is small, where the probability is over the choice of scales.

Our lower bound is obtained by a nontrivial modification of the above construction.
We begin the same way, by first setting $A[u] = u$ for all $u \in [n]$ and sampling two scales $i, j \in [\log n]$ as earlier.
Then, for every $j$-block, the values in the left and right halves are swapped.
So far, the steps are identical for both distributions $\mathcal{D}_0^{(2)}$ and $\mathcal{D}_1^{(2)}$.
Next, in $\mathcal{D}_0^{(2)}$, for roughly half of the $j$-blocks, we perform $i$-swaps in both left and right halves of those $j$-blocks, where an $i$-swap refers to swapping the values in the left and right halves of an $i$-block.
In the case of $\mathcal{D}_1^{(2)}$, for roughly half of the $j$-blocks, we perform $i$-swaps only in the left half and for the remaining $j$-blocks, we perform $i$-swaps only in the right half.
The key idea is that in order to distinguish the distributions, we need to know whether there are $i$-swaps in both or neither halves of a $j$-block (corresponding to $\mathcal{D}_0^{(2)})$, or there are $i$-swaps in only one of the halves of the $j$-block.
Now, a tester that does not query four indices $w, x, y ,z \in [n]$, spaced as in the aforementioned paragraph, cannot determine the above and we get our lower bound of $\Omega(\log^2 n)$.
It is also easy to see that there is a significant difference in the distances of $\mathcal{D}_0^{(2)}$ and $\mathcal{D}_1^{(2)}$ from monotonicity.
}

\subsection{Organization}
We describe our notation in Section~\ref{sec:prelims}. 
Our lower bound on the query complexity of nonadaptive algorithms for LIS estimation is
presented in Section~\ref{sec:adaptive-lb}. 
In Section~\ref{sec:na-er-monotonicity-tester}, we describe our nonadaptive 
erasure-resilient monotonicity tester.
Our nonadaptive and parameterized algorithms for LIS estimation are presented in Section~\ref{sec:nonadaptive-LIS-algorithms}.

\section{Notations and Preliminaries}\label{sec:prelims}

For a natural number $n$, we use $[n]$ to denote the set $\{1, 2, \dots, n\}$.
For a real-valued array $A$ of length $n$, we use $A[i]$ to denote the $i$-th entry of $A$ for $i \in [n]$.
For $x \leq y \in [n]$ we denote by  $[x,y]$ the set $\{x, x+1, \dots,
y\}$. 
The array $A$ is monotone if for every two indices $u, v \in [n]$ such that $u < v$, we have $A[u] \le A[v]$.
If $A$ is not monotone, two indices $u,v \in [n]$ are said to violate monotonicity if $u < v$ and $A[u] > A[v]$.
For $\epsilon \in (0,1)$, we say that $A$ is $\epsilon$-far from monotone if the values on at least $\epsilon \cdot n$ indices need to be modified to get{} a monotone array. 
$A$ is $\epsilon$-close to monotone if there is a way to modify the values on fewer than $\epsilon \cdot n$ indices to get a monotone array.
For a parameter $\alpha \in [0,1)$, we say that $A$ is $\alpha$-erased, if at most $\alpha$ fraction of the array values evaluate to a special symbol $\perp$.
An assignment of values to the erased points in an array is called a completion.
An $\alpha$-erased array is monotone if there exists a completion that is monotone; it is $\epsilon$-far from monotone if every completion is $\epsilon$-far from monotone.
We assume that algorithms accesses an input array $A$ via an oracle; that is when the algorithm makes a query $i \in [n]$, the oracle returns a special symbol $\perp$ if the array value at index $i$ is erased, and $A[i]$ otherwise.
An algorithm is \emph{adaptive} if its queries depend on the answers to its previous queries, and is \emph{nonadaptive} otherwise.

A partially ordered set (poset) is a set $\mathcal{P}$ associated with a reflexive, transitive, antisymmetric order relation $\preceq$ on its elements.
We denote the poset by $\langle\mathcal{P}, \preceq \rangle$. 
A chain in $\langle\mathcal{P}, \preceq \rangle$ of length $k$ is a sequence of elements $x_1 \preceq x_2 \preceq \dots \preceq x_k$.
An antichain is a set $S \subseteq \mathcal{P}$ such that for $u,v \in S$ neither $u \preceq v$ nor $v \preceq u$.

\section{Lower Bounds for Nonadaptive LIS Estimation}\label{sec:adaptive-lb}

In this section, we prove our lower bounds (Theorem~\ref{thm:adaptive-tolerant-lb}) on the query complexity
of $2$-sided error nonadaptive algorithms for estimating the distance
of real-valued arrays of length $n$ from monotonicity up to an additive error of $\epsilon \cdot n$ for some constant $\epsilon \in (0,1)$.
Equivalently, our lower bounds also hold for algorithms that $(\delta, \epsilon + \delta)$-tolerant test monotonicity for constants $\delta, \epsilon \in (0,1)$.
Interestingly, our lower bounds use ideas from the lower bound on the query complexity of 1-sided error nonadaptive testers for the property of $(k,\dots,2,1)$-freeness~\cite{BCLW19}, where an array $A$ of length $n$ is $(k,\dots,2,1)$-free if there are no $k$ indices $i_1 < i_2 < \dots < i_k$ such that $A[i_1] > A[i_2] > \dots > A[i_k]$.  

An algorithm is said to be \emph{comparison-based} if its decisions are based only on the ordering relation between the queried values, and not on the values themselves. 
The following Lemma~\ref{lem:fischer}, which follows from the work of Fischer~\cite{Fischer04}, states that it is enough to restrict our attention to comparison-based algorithms.
\begin{lemma}\cite{Fischer04}\label{lem:fischer} 
There is an optimal comparison-based algorithm for computing an additive $\epsilon n$-approximation to the LIS in real-valued arrays of length $n$ for all constant $\epsilon \in (0,1)$.
\end{lemma}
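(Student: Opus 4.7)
The plan is to apply the standard rank-invariance argument from Fischer's work~\cite{Fischer04}. The key observation is that $|\lis(A)|$ depends on the array $A$ only through the order type of its entries: for every strictly monotone bijection $f:\R\to\R$, we have $|\lis(f(A))|=|\lis(A)|$, where $f(A)[i]:=f(A[i])$. Hence the correct answer is insensitive to monotone transformations of the values, which opens the door to a comparison-based reduction.

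Given any randomized algorithm $\mathcal{A}$ that additive-$\epsilon n$-approximates the LIS with probability at least $2/3$ using $q$ queries, I would construct a comparison-based algorithm $\mathcal{A}'$ of the same complexity as follows. Conceptually, $\mathcal{A}'$ samples a random strictly monotone bijection $f$ independently of $\mathcal{A}$'s internal randomness and runs $\mathcal{A}$ on the transformed input $f(A)$. Since $f$ is an infinite object, we cannot materialize it up front, but $\mathcal{A}'$ generates it via lazy sampling: it maintains the list of (true value, simulated value) pairs it has committed to so far, and whenever $\mathcal{A}$ requests the value at an index $i$, $\mathcal{A}'$ queries $A[i]$, determines its rank among previously queried values, draws a fresh real in the open interval between the appropriate neighboring simulated values, and feeds this simulated value back to $\mathcal{A}$. $\mathcal{A}'$ finally outputs whatever $\mathcal{A}$ outputs. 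By construction, every decision of $\mathcal{A}'$ depends on $A$ only through the order type of the queried values together with its own independent randomness, so $\mathcal{A}'$ is comparison-based and makes exactly $q$ queries; it is nonadaptive whenever $\mathcal{A}$ is.

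For correctness, since the choice of $f$ is (conceptually) made independently of $\mathcal{A}$'s random bits, one can condition on $f$ first: for every fixed $f$, the effective input $f(A)$ presented to $\mathcal{A}$ is a fixed real-valued array with $|\lis(f(A))|=|\lis(A)|$, so $\mathcal{A}$ succeeds on it with probability at least $2/3$; averaging over $f$ preserves this. Applying the reduction to an optimal randomized algorithm yields the optimal comparison-based algorithm claimed in the lemma. The only subtle point is to verify that the lazy sampling procedure is distributionally equivalent to sampling $f$ in advance and then applying it; this is standard, since the constraints accumulated by the queries always leave a nonempty open subinterval of $\R$ from which the next simulated value can be drawn, and $\mathcal{A}$'s behavior depends only on the order type of the simulated values it receives.
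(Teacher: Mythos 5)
Your reduction has a genuine gap precisely at the ``only subtle point'' you defer: the lazy sampling you describe is \emph{not} distributionally equivalent to drawing a strictly monotone bijection $f$ in advance, and the justification you offer --- that ``$\mathcal{A}$'s behavior depends only on the order type of the simulated values it receives'' --- is exactly what is false here. By hypothesis $\mathcal{A}$ is \emph{not} comparison-based, so its next query and its output may depend on the actual reals it is fed, not merely their relative order. In your lazy sampler, the simulated value assigned to a given array value $A[i]$ depends on the order in which distinct values were first revealed, and that order is governed by $\mathcal{A}$'s adaptive choices, hence by $\mathcal{A}$'s coins $\omega$. Concretely, with two values $a<b$ and subinterval-uniform sampling on $(0,1)$: revealing $a$ first yields the pair $\bigl(x,\;x+(1-x)y\bigr)$, whereas revealing $b$ first yields $\bigl(xy,\;x\bigr)$, for independent $x,y\sim U(0,1)$ --- two different joint laws (for instance $\E[f(a)]$ is $1/2$ versus $1/4$). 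So the implicit $f$ is correlated with $\omega$, and the step ``condition on $f$ first and apply $\mathcal{A}$'s per-input guarantee'' is unavailable. (The need to sample from unbounded intervals at the endpoints is a separate, more cosmetic problem.)

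This gap cannot be repaired simply by choosing a cleverer sampling rule, because the obstruction is structural. Since $\mathcal{A}'$ must be comparison-based, the law of the simulated values it generates can depend only on the order type of the answers seen, in particular only on the number $k'$ of distinct values encountered. Matching that to a fixed, $\omega$-independent distribution on increasing $f$ would force the restriction of $f$ to \emph{every} $k'$-element subset of the array's value set $\{b_1<\cdots<b_k\}$ to have the same joint law; already with $k'=1$ this would make $f(b_1)$ and $f(b_2)$ identically distributed while $f(b_1)<f(b_2)$ almost surely, which is impossible. This is why Fischer's argument (which the paper invokes for this lemma and does not reprove) proceeds differently: via the infinite Ramsey theorem one finds an infinite set $T\subset\R$ on which the behavior of each deterministic algorithm in the support of $\mathcal{A}$ is forced to be a function of order type alone; any input array can then be order-embedded into $T$, which preserves $|\lis|$. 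That route sidesteps the distributional-consistency problem entirely. If you want to keep a probabilistic flavor, you would at minimum need to first reduce to the case where the queried order type determines the conditional law of the answers --- but that is essentially what the Ramsey step buys you, so you cannot skip it.
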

Even though Fischer~\cite{Fischer04} proves the above statement in the context of testing monotonicity in the standard model, his proof also works for the case of tolerant testing monotonicity, and in turn for LIS estimation. 
In the rest of this section, we restrict our attention to comparison-based algorithms for LIS estimation.

\subsection{An $\Omega(\log^2 n)$ Lower Bound}\label{sec:log-squared-lower-bound}

As a starting point, we prove an $\Omega(\log^2 n)$ lower bound. 
Throughout this section, we assume that $n$ is of the form $2^{2^x}$ for some natural number $x$.

We prove the lower bound using Yao's method.
Specifically, we describe two distributions $\mathcal{D}_0$ and $\mathcal{D}_1$ over real-valued arrays of length $n$, with different distances to monotonicity, and show that every deterministic nonadaptive comparison-based algorithm distinguishing these distributions with probability at least $2/3$, has to make $\Omega(\log^2 n)$ queries. 

For ease in describing our distributions, we first define some notation. 
We think of the indices of an array of length $n$ as the leaves of an ordered binary tree $\mathcal{T}$ of height $\log(n) + 1$.
We associate bit positions in the $\log n$-bit representation of the numbers in $[n]$ with the non-leaf nodes of $\mathcal{T}$.
The root is associated with the most significant bit (or the bit position $\log n$).
Every node at distance $i \in [\log (n) - 1]$ from the root is associated with bit position $\log (n) - i$.
The bit position associated with a node is also referred to as its
\emph{level} (and the level of the root is $\log n$).
The edges connecting a node with its left and right children are labeled $0$ and $1$, respectively.
Clearly, the string obtained by concatenating all the edge labels on a root-to-leaf path in $\mathcal{T}$ gives the binary representation of the index corresponding to the leaf.
For two indices $x,y \in [n]$ (which are, by definition, the leaves in $\mathcal{T}$), we use $\mathsf{LCA}(x,y)$ to denote the lowest common ancestor of $x$ and $y$ in $\mathcal{T}$.
For $j \in [\log n]$  there are obviously $n/2^j$
subtrees of $\mathcal{T},$ each rooted at level $j$. When ordered
from left to right, these $n/2^j$ subtrees partition $[n]$ into blocks
of size $2^j$. The ``the $\ell$-th
$j$-block'' refers to the $\ell$-th block from left whose size is $2^j$. 

\paragraph{The distributions $\mathcal{D}_0$ and $\mathcal{D}_1$.} We first describe the steps that are common to constructing the distributions $\mathcal{D}_0$ and $\mathcal{D}_1$. 
Sample numbers $j_1, j_2 \in [\log n]$ such that $j_1 < \log (n) - 14,$ and $j_2 < j_1 - 14$. 
\ignore{
$j_1 < \log(n) - \log (h+2) - (4h+4)$, and $j_2 < j_1 - \log (h+2) - (4h+4)$.
}
We refer to the numbers $j_1, j_2$ as the \emph{scales} of the distributions.

We start with the monotone array $A$ in which $A[u] = u$ for all $u \in [n]$. 
Swap the array values between the left and right halves of every
$j_1$-block. See the left part of Figure~\ref{fig:D0block} to see how the relative values in each $j_1$-block
of the array will look like at this point.  
For $\ell \in [n/2^{j_1}]$, let $B_\ell$ denote the $\ell$-th $j_1$-block. 
\begin{itemize}
  \item {\bf Distribution} $\mathcal{D}_0$: 
  Independently for each $\ell \in [n/2^{j_1}]$: 
  \begin{enumerate}
    \item with probability $\frac{1}{2}$, for each $j_2$-block inside $B_\ell$, swap the array values between the left and right halves of that $j_2$-block. 
  \end{enumerate}
  \item {\bf Distribution} $\mathcal{D}_1$: 
  Independently for each $\ell \in [n/2^{j_1}]$:
  \begin{enumerate}
  \item with probability $\frac{1}{2}$, for each $j_2$-block inside the left half of $B_\ell$, swap the array values between the left and right halves of that $j_2$-block,
  \item with the remaining probability $\frac{1}{2}$, for each $j_2$-block inside the right half of $B_\ell$, swap the array values between the left and right halves of that $j_2$-block.
  \end{enumerate} 
\end{itemize}

The relative values taken by the array $A$ on indices in an arbitrary $j_1$-block 
in distributions $\mathcal{D}_0$ and $\mathcal{D}_1$ can be
visualized as in Figures~\ref{fig:D0block} and~\ref{fig:D1block}. 
We note that in both $\mathcal{D}_0,\mathcal{D}_1,$ all values in the $\ell$-th $j_1$-block are
smaller than the values in the $(\ell+1)$-th $j_1$-block for all $\ell \in [(n/2^{j_1}) - 1]$.

  \begin{figure}
    \begin{center}
    \includegraphics[scale=0.3]{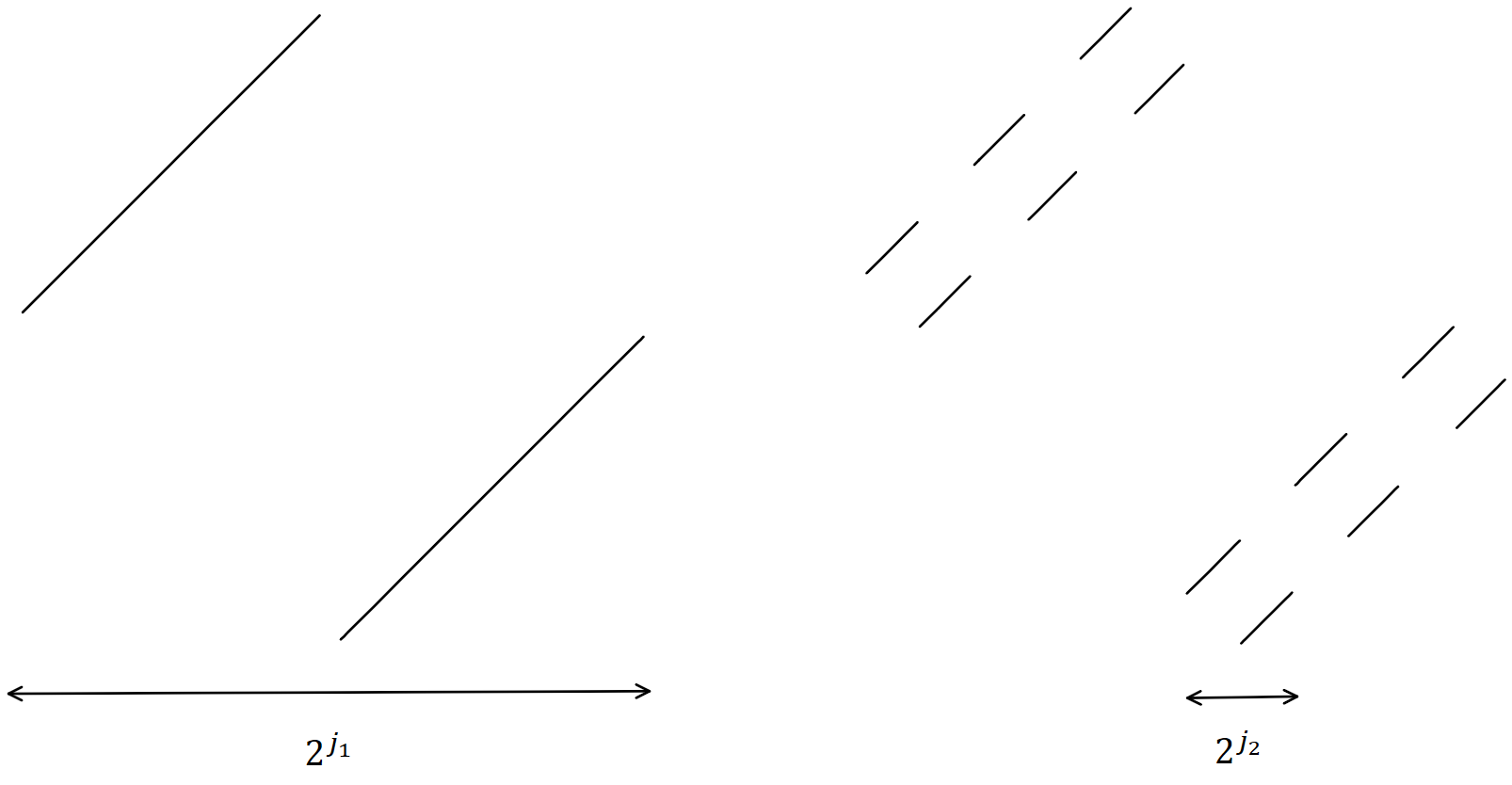}
  \end{center}
    \caption{The relative values in $j_1$-blocks of $\mathcal{D}_0$ either look like the left or right diagrams above, with equal probability.}
    \label{fig:D0block}
    \end{figure} 
  \begin{figure}
  \begin{center}
    \includegraphics[scale=0.3]{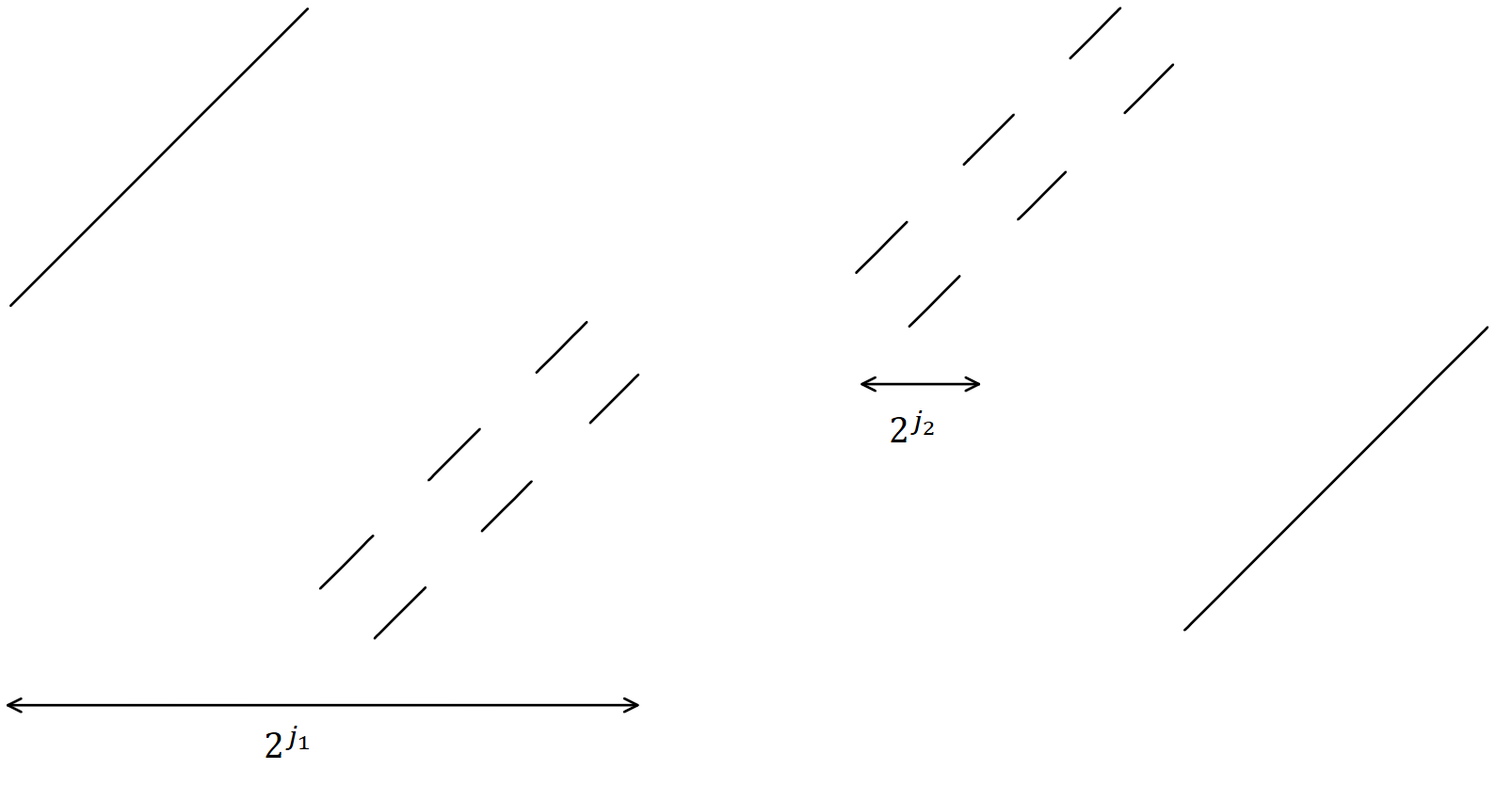}~
  \end{center}
  \caption{The relative values in $j_1$-blocks of $\mathcal{D}_1$ either look like the left or right diagrams above, with equal probability.}
  \label{fig:D1block}
\end{figure} 

We bound the distance to monotonicity of the arrays sampled from the distributions in Lemma~\ref{lem:distance-to-monotonicity-log-squared}.

\ignore{
\begin{lemma}\label{lem:distance-to-monotonicity-log-squared}
The distance to monotonicity of every array sampled from $\mathcal{D}_0$
is, with probability $1 - \frac{1}{2^{2h+3}}$, within $\frac{5}{8} \pm \frac{1}{2^{2h+2}}$. 
The distance to
monotonicity of every array sampled from $\mathcal{D}_1$ is equal to
$1/2$.
\end{lemma}
}

\begin{lemma}\label{lem:distance-to-monotonicity-log-squared}
The distance to monotonicity of every array sampled from $\mathcal{D}_0$
is, with probability $1 - \delta/2$, within $\frac{5}{8} \pm \delta$, where $\delta \le \frac{1}{2^6}$. 
The distance to
monotonicity of every array sampled from $\mathcal{D}_1$ is equal to
$1/2$.
\end{lemma}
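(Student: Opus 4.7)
The plan is to reduce $\lis(A)$ to a sum, one term per $j_1$-block, of simple block-level LIS values that can be read off deterministically for $\mathcal{D}_1$ and that concentrate around their mean for $\mathcal{D}_0$.

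First I would note that every $j_1$- or $j_2$-swap only permutes values inside its own block, so after all swaps the set of values in the $\ell$-th $j_1$-block $B_\ell$ remains $\{(\ell-1)2^{j_1}+1,\dots,\ell\cdot 2^{j_1}\}$. In particular every value in $B_\ell$ is strictly smaller than every value in $B_{\ell+1}$, so an increasing subsequence of $A$ is just a concatenation of increasing subsequences of the $B_\ell$'s and
\[
\lis(A)\;=\;\sum_{\ell=1}^{n/2^{j_1}} \lis(B_\ell).
\]
Likewise, the $j_1$-swap puts the larger half of $B_\ell$'s values into the left half of its positions, so no increasing subsequence can cross the two halves $L_\ell, R_\ell$ of $B_\ell$ and $\lis(B_\ell)=\max(\lis(L_\ell),\lis(R_\ell))$.

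Second, I would compute $\lis$ of one half $H\in\{L_\ell,R_\ell\}$ according to whether its $j_2$-sub-blocks have been swapped. If $H$ has no $j_2$-swaps then $H$ is an arithmetic progression, so $\lis(H)=2^{j_1-1}$. If every $j_2$-sub-block of $H$ has been swapped, the very same reasoning applied at scale $j_2$ gives $\lis=2^{j_2-1}$ inside each $j_2$-sub-block; because the values in consecutive $j_2$-sub-blocks are strictly separated, these block-level LISs concatenate and
\[
\lis(H)\;=\;\frac{2^{j_1-1}}{2^{j_2}}\cdot 2^{j_2-1}\;=\;2^{j_1-2}.
\]

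Combining these, for $\mathcal{D}_1$ every $B_\ell$ has one swapped and one unswapped half, so $\lis(B_\ell)=\max(2^{j_1-1},2^{j_1-2})=2^{j_1-1}$ deterministically, giving $\lis(A)=n/2$ and distance exactly $n/2$. For $\mathcal{D}_0$ each $B_\ell$ is independently ``fully swapped'' (both halves swapped, $\lis(B_\ell)=2^{j_1-2}$) or ``not swapped'' ($\lis(B_\ell)=2^{j_1-1}$), each with probability $1/2$; thus $\lis(A)$ is a sum of $N:=n/2^{j_1}$ independent $\{2^{j_1-2},2^{j_1-1}\}$-valued variables with mean $3\cdot 2^{j_1-3}$ per block and expectation $3n/8$ overall. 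By Hoeffding's inequality,
\[
\Pr\!\bigl[\,\lvert\lis(A)-3n/8\rvert>\delta n\,\bigr]\;\le\;2\exp\!\bigl(-32\delta^2 N\bigr).
\]
The constraint $j_1<\log n-14$ forces $N\ge 2^{15}$, which makes this tail probability much smaller than $\delta/2$ for $\delta$ in the claimed range. On the complementary event the distance to monotonicity $n-\lis(A)$ lies in $5n/8\pm\delta n$, proving the lemma.

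The main obstacle is the second step: showing that no clever chaining across $j_2$-sub-blocks beats $2^{j_1-2}$ per half. This is resolved by the observation that the LIS of a concatenation of value-separated segments equals the sum of the LISs of the segments---exactly the same principle that drives the top-level reduction to a sum over $j_1$-blocks, so the argument is recursive in structure and uniformly tight at both scales.
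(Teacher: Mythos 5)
Your proof is correct and follows essentially the same structure as the paper's: decompose into $j_1$-blocks (which are value-separated, so per-block contributions add), compute the per-block distance exactly in each of the two or four cases, and apply Hoeffding to the i.i.d.\ sum for $\mathcal{D}_0$. The only cosmetic difference is in the per-block computation: you reduce $\dist(B_\ell)=2^{j_1}-\lis(B_\ell)$ and evaluate $\lis(B_\ell)=\max(\lis(L_\ell),\lis(R_\ell))$ recursively via the concatenation identity for value-separated segments, whereas the paper argues the fully-swapped case via an exact cover of $B_\ell$ by disjoint decreasing $4$-tuples (lower bound) plus an explicit near-monotone completion (upper bound); both give the same $\{1/2,3/4\}$ values and both approaches rely on the same structural fact that the $j_1$-swap forbids crossing between the two halves.
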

\begin{proof}
Consider an array $A$ sampled from one of the distributions. 
Let $j_1 > j_2$ be the scales used. 
First, observe that there are no violations to monotonicity across $j_1$-blocks. 
Therefore, it is enough to focus on repairing individual $j_1$-blocks and making them monotone (without inducing new violations). 
Consider a $j_1$-block $B_\ell$ for $\ell \in [n/2^{j_1}]$.

Assume that $A$ is constructed from $\mathcal{D}_0$, and that $B_\ell$
is such that the values in the left and right halves of every
$j_2$-block in $B_\ell$ are swapped (happens with probability
$\frac{1}{2}$ for that block). 
Then we need to modify the values of at
least $3/4$ fraction of indices in $B_\ell$ to make it monotone,
since $B_\ell$ contains an exact cover by disjoint decreasing
subsequences, each of size $4$. 
Further, it is easy to see that by correcting a $3/4$ fraction of
indices we can make $B_\ell$ monotone.
In case swapping of values is done for none of the $j_2$-blocks in $B_\ell$ (happens with probability $\frac{1}{2}$), then we can repair $B_\ell$ if and only if we modify the values on half the indices in $B_\ell$. 
Therefore, the expected distance to monotonicity of each block $B_\ell$, and thereby, of $A$ is equal to $5/8$. 
Note that the specific values of the scales did not matter in the above argument.

Let $\delta = 1/2^6$. 
We now show that the distance of $A$ from monotonicity is $\frac{5}{8}
\pm \delta$, with high probability. 
For a block $B_\ell, \ell \in [n/2^{j_1}]$, let $\dist(B_\ell)$ denote the distance of the
block from monotonicity, normalized by the block length $2^{j_1}$.
We can see that the random variable $\frac{\sum_{\ell} \dist(B_\ell)}{n/2^{j_1}}$ corresponds
to the normalized Hamming distance of $A$ from monotonicity.
By Hoeffding's bound, we have:
\begin{align*}
\Pr\left[\left|\frac{\sum_{\ell} \dist(B_\ell)}{n/2^{j_1}} - \frac{5}{8}\right| > \frac{1}{2^{6}}\right] \le \frac{2}{\exp(8)} \leq \frac{1}{2^{7}} = \frac{\delta}{2}.
\end{align*}

\ignore{Indeed, let $S$ be the
number of $j_1$-blocks for which the swap of values happen inside
every $j_2$-block in $B_\ell, \ell \in [n/2^{j_1}]$ (namely the block
is as depicted in the right-hand side of Figure~\ref{fig:D0block}). Since
such a swap occurs with probability $1/2$, by a Chernoff bound, we conclude
that, with probability at least $1-2\exp(-\delta^2 \cdot \mu/3)$, the value $S 
\in [\mu(1-\delta), \mu(1+ \delta)],$ for any small enough $\delta$
and $\mu = n/2^{j_1 + 1}$. Setting $\delta = \frac{1}{20}$ and using the
bound $j_1 < \log n - 2\log\log n$ implies the result.}

\ignore{
For $\ell \in [n/2^{j_1}]$, let $X_\ell$ denote the indicator random variable for whether or not the swap of values happen inside every $j_2$-block in $B_\ell$.
We know that $\E[X_\ell] = \frac{1}{2}$.
Let $\mu$ be shorthand for $\E[\sum_{\ell \in [n/2^{j_1}]} X_\ell] = \frac{n}{2^{j_1+1}} > \frac{\log^2 n}{2}$.
By a Chernoff bound, 
\begin{align*}
\Pr[\sum_{\ell \in [n/2^{j_1}]}X_\ell \notin [(1-\delta)\cdot \mu, (1+\delta)\cdot\mu]] \le 2\exp(-\delta^2 \cdot \mu) < 2\exp(-\delta^2 \log^2 (n) /2).
\end{align*} 

Therefore, with probability at least $1 - 2\exp(-\delta^2 \log^2 (n) /2)$, the distance of $A$ from monotonicity is within $\frac{5}{8} \pm \frac{\delta}{8}$.
We obtain the desired bound by setting $\delta = \frac{1}{20}$.
}

Assume now that $A$ is constructed from $\mathcal{D}_1$.
For $\ell \in [n/2^{j_1}]$, if the swap of values happens within every $j_2$-block in the left half of $B_\ell$, then we can repair $B_\ell$ by setting every value in the left half of that block to the smallest value in the right half of the same block.
An analogous repair can be done if the swap happens in the right hand
side of the block.
In both cases, we only change values of at most half the number of indices in each block. 
The reader can easily convince themselves that at least half the values per block
need to be changed to make a $j_1$-block monotone, which concludes the proof for the given scales.
As before, the argument is independent of the choice of the scales. 
\end{proof}

We now show that every nonadaptive comparison-based deterministic algorithm distinguishing
$\mathcal{D}_0$ and $\mathcal{D}_1$ with high probability needs to
make $\Omega(\log^2 n)$ queries.
Lemma~\ref{lem:log-squared-lowerbound-indistinguishability} argues this and completes the proof of our result.
\begin{lemma}\label{lem:log-squared-lowerbound-indistinguishability}
  Every comparison-based nonadaptive deterministic algorithm that, with probability at least $2/3$, distinguishes the distributions $\mathcal{D}_0$ and $\mathcal{D}_1$ has to make $\Omega(\log^2 n)$ queries.
\end{lemma}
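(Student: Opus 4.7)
The plan is to apply Yao's minimax principle. Fix any deterministic nonadaptive comparison-based algorithm with query set $Q \subseteq [n]$ of size $k$. Being comparison-based, its output is determined by the linear order that $A$ induces on $Q$, so it suffices to show that the distribution of this comparison pattern has total variation distance below $1/3$ between $\mathcal{D}_0$ and $\mathcal{D}_1$ whenever $k \leq c \log^2 n$ for a sufficiently small constant $c > 0$.

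The central definition is a \emph{distinguishing 4-tuple} at scales $(j_1, j_2)$: four queries $q_1 < q_2 < q_3 < q_4$ in $Q$ such that $\mathsf{LCA}(q_1,q_2)$ and $\mathsf{LCA}(q_3,q_4)$ both lie at level $j_2$, while $\mathsf{LCA}(q_2,q_3)$ lies at level $j_1$. First I would verify that, conditional on the scales, the comparison pattern on $Q$ depends only on the swap indicators of the $j_2$-blocks that are \emph{split} by $Q$ (i.e.\ contain queries in both halves), since all other comparisons among queries are pinned down by the tree structure and the deterministic $j_1$-swap. If no distinguishing 4-tuple exists at $(j_1,j_2)$, then in every $j_1$-block $B_\ell$ the split $j_2$-blocks all lie in only one half of $B_\ell$. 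Under both $\mathcal{D}_0$ and $\mathcal{D}_1$, the indicators of split $j_2$-blocks restricted to a single half of $B_\ell$ are all equal to a single fair coin, and the halves of distinct $j_1$-blocks are independent; hence, conditioned on such scales, the comparison patterns on $Q$ are identically distributed in $\mathcal{D}_0$ and $\mathcal{D}_1$.

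The main technical step is then to bound $\Pr_{(j_1,j_2)}[\text{a distinguishing 4-tuple in }Q\text{ exists}]$, which upper-bounds the total variation distance of the pattern distributions. For fixed $j_2$, let $T_{j_2}$ denote the set of $j_2$-blocks split by $Q$. A distinguishing 4-tuple exists at $(j_1, j_2)$ iff $j_1$ equals the level of $\mathsf{LCA}(b,b')$ for some $b, b' \in T_{j_2}$, so the number of bad $j_1$ values is at most the number of branching nodes of the Steiner subtree of $T_{j_2}$ inside $\mathcal{T}$, which is at most $|T_{j_2}| - 1$. Applying the same Steiner-tree observation to $Q$ itself, $\sum_{j_2} |T_{j_2}| \leq k - 1$, because each split $j_2$-block corresponds to a distinct branching node of the Steiner subtree of $Q$ in $\mathcal{T}$, and this subtree has at most $k-1$ branching nodes. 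Since the number of valid scale pairs is $\Theta(\log^2 n)$, averaging gives $\Pr_{(j_1,j_2)}[\text{distinguishing 4-tuple}] = O(k/\log^2 n)$, which is below $1/3$ once $k$ is at most a sufficiently small multiple of $\log^2 n$.

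The step I expect to be most delicate is the ``no distinguishing 4-tuple implies identical pattern distributions'' claim. There one must catalog all the comparisons among queries in $Q$ and check that comparisons across different $j_1$-blocks and between distinct $j_2$-blocks inside the same half of a common $j_1$-block are deterministic functions of $(j_1,j_2)$ (and hence agree under both distributions); that comparisons inside a single $j_2$-block depend only on that block's swap indicator; and that, in the absence of a distinguishing 4-tuple, the joint distribution of the swap indicators that actually affect $Q$ is the same under $\mathcal{D}_0$ and $\mathcal{D}_1$. Once this bookkeeping is in place, the Steiner-tree counting in the last paragraph is clean and gives the desired $\Omega(\log^2 n)$ lower bound.
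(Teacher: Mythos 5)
Your proof is correct and mirrors the paper's proof: same distinguishing-$4$-tuple bad event, same indistinguishability argument conditional on its absence (your phrasing in terms of identically distributed swap indicators of the split $j_2$-blocks is equivalent to the paper's explicit coupling between $\mathcal{D}_0$- and $\mathcal{D}_1$-arrays), and same overall bound on the probability of the bad event over the random scales. The only difference is that you spell out a self-contained Steiner-tree count ($\sum_{j_2}|T_{j_2}| \le k-1$ via branching nodes of the Steiner subtree of $Q$, and at most $|T_{j_2}|-1$ bad values of $j_1$ per $j_2$, giving $O(k/\log^2 n)$), whereas the paper invokes the corresponding counting argument of Ben-Eliezer, Canonne, Letzter, and Waingarten~\cite{BCLW19} as a black box.
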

\begin{proof}
  Consider an arbitrary deterministic comparison-based nonadaptive algorithm $T$ that makes $o(\log^2 n)$ queries and aims to distinguish $\mathcal{D}_0$ and $\mathcal{D}_1$.
  Let $Q \subseteq [n]$ denote the set of queries that $T$ makes.
  
  Consider an array $A$ sampled according to one of the distributions.
  Recall that $\mathcal{T}$ denotes an ordered binary tree whose leaves are the indices of $A$. 
  Let $j_1,j_2 \in [\log n]$ such that $j_2 < j_1$ denote the scales used while constructing $A$.  
  Let $E$ denote the (bad) event that $Q$ contains four indices
        $w < x < y <z \in [n]$ such that for some $\ell \in
        [n/2^{j_1}]$, (1) Each of $\{w,x,y,z\}$ belongs to the same $\ell$-th $j_1$-block, (2) $\mathsf{LCA}(w,x)$ and $\mathsf{LCA}(y,z)$ are both nodes in $\mathcal{T}$ at level $j_2$, and (3) $\mathsf{LCA}(x,y)$ is at level $j_1$.
  By an argument of Ben-Eliezer, Canonne, Letzter, and Waingarten~\cite{BCLW19}, the probability, over the choice of the scales $j_1, j_2$, of the event $E$ is at most $1/3$. 
  In the rest of the proof, we fix the scales $(j_1, j_2)$ for which $E$ does not happen.
  
  Let $x,y \in Q$ be such that $\mathsf{LCA}(x,y)$ is at level $j_2$ in $\mathcal{T}$.
  In the rest of the proof, for simplicity, we refer to such
        queries as being \emph{$j_2$-cousins}.  Let $B$ be a
        $j_1$-block, and let $Q_L(B) $ be the queries in $Q$ that are
        in the left half of $B$, and $Q_R(B)$ the queries in $Q$ that
        are in the right half of $B$.
  By our conditioning, for each $j_1$-block $B$, either all
        $j_2$-cousins in $B$ belong to $Q_L(B)$ or to $Q_R(B)$  but not both.
          Consider the half of $B$ that does not contain any $j_2$-cousins.
  In the algorithm's view, the array values in that half are increasing, whether $A$ is sampled from $\mathcal{D}_0$ or $\mathcal{D}_1$.

  Assume that $A$ is sampled from $\mathcal{D}_0$.
  We show that there is a way to sample an array $A'$ from $\mathcal{D}_1$ such that the algorithm's view on $A$ and $A'$ are identical.
  The scales of $A'$ have to be identical to those of $A$. 
  We only need to specify how swapping of values is done inside the $j_1$-blocks, as part of constructing $A'$.
  
  Note that we only need to consider the $j_1$-blocks in which at least two queries fall.
  Consider such a block $B$, and assume that $Q_R(B)$ contains no
        $j_2$-cousins. 
  Consider the case that swapping of values was done within every $j_2$-block inside the block $B$ while constructing $A$.
  Then, in $A'$, we swap the values only within the $j_2$-blocks inside the left half of $B$.
  In the other case that no swapping of values (within $j_2$-blocks) was done while constructing $A$, we swap the values only within the $j_2$-blocks inside the right half of $B$. 
  It is easy to see that the relative values within block $B$ in the array $A'$ are consistent with that of an array sampled from $\mathcal{D}_1$.     
  One can make similar arguments about coupling the arrays $A$ and $A'$ on blocks $B$ such that the only occurrences of indices $x,y \in Q \cap B$ that are $j_2$-cousins are in the right half of $B$. 
  
  We conclude that for any scales $j_1, j_2$ for which $E$
        does not happen, the view of the algorithm  making queries $Q$ is
        identical on $A'$ and $A$. Hence the algorithm  cannot distinguish
        between the case that the array is sampled from $\mathcal{D}_0$ or from
        $\mathcal{D}_1$ for such scales. As this is true for any scales for
        which $E$ does not happen, this concludes the proof.
        
  Observe that the only place in the analysis where we made use of the bound on the number of queries is in arguing that the event $E$ happens with low probability. 
\end{proof}

\subsection{A $\log^{\omega(1)} n$ Lower Bound}\label{sec:super-polylog-lb}
In what follows we strengthen the $\Omega(\log^2 n)$ lower bound and prove Theorem~\ref{thm:adaptive-tolerant-lb}. 
We make use of the
idea of Ben-Eliezer, Canonne, Letzter, and Waingarten~\cite{BCLW19} for lower bounding query complexity of nonadaptive
detection of larger forbidden order patterns. The idea is to use more
than just two scales. This, in turn, makes the difference between the
distances of arrays sampled from the two distributions smaller,
which is why we only get a $\log^{\omega(1)} n$ lower bound.

Throughout this section, we assume that $n$ is a power of $2$ and that all logarithms are base $2$.
Let $h \ge 2$ be an integer parameter. 
We first describe the two distributions $\mathcal{D}_0^{(h)}$ and
$\mathcal{D}_1^{(h)}$ on real-valued arrays, such that
no comparison-based deterministic nonadaptive algorithm that makes $o(\log^h n)$
queries can distinguish between the 
distributions with high probability. Further, the distance to
monotonicity of each array sampled from $\mathcal{D}_0^{(h)}$ will be
significantly different than that of arrays sampled from $\mathcal{D}_1^{(h)}$.

The distributions are defined recursively.
For the base case $h = 2$, let $\mathcal{D}_0^{(2)}$ and $\mathcal{D}_1^{(2)}$ be equal to 
$\mathcal{D}_0$ and $\mathcal{D}_1$ defined with scales $j_1, j_2$ as in Section~\ref{sec:log-squared-lower-bound}, respectively.

For the general case, we begin with an array 
$A$ of length $n$ such that $A[u] = u$ for all $u \in [n]$. 
Then we sample scales $j_1, j_2, \dots, j_h \in [\log (n) - \log (h+2) - (4h+4)]$ such that $j_1 > j_2 > \dots > j_h$ and that for every $i \in [h-1]$, we have $j_{i+1} \le j_i - \log (h+2) - (4h+4)$.
For each $\ell \in [n/2^{j_1}]$, we swap the values in the left and right halves of the $j_1$-block $B_\ell$. 
\begin{itemize}
  \item {\bf Distribution} $\mathcal{D}_0^{(h)}$: 
  Independently for each $\ell \in [n/2^{j_1}]$:
  \begin{enumerate}
  \item with probability $1/2$, rearrange the values within the left half of $B_\ell$ as if we are sampling an array of length $2^{j_1-1}$ from $\mathcal{D}_0^{(h-1)}$, and rearrange the values 
  within the right half of $B_\ell$ as if we are sampling an
        array of length $2^{j_1-1}$ from $\mathcal{D}_0^{(h-1)}$, where
        the scales used in both cases are $j_2, j_3, \dots, j_h$. We
        refer to such a $j_1$-block as a $00$-block.
  \item with the remaining probability $1/2$, rearrange the values within the left half of $B_\ell$ as if we are sampling an array of length $2^{j_1-1}$ from $\mathcal{D}_1^{(h-1)}$, and rearrange the values 
  within the right half of $B_\ell$ as if we are sampling an
        array of length $2^{j_1-1}$ from $\mathcal{D}_1^{(h-1)}$, where
        the scales used are $j_2, j_3, \dots, j_h$.  We
        refer to such a $j_1$-block as a $11$-block. 
  \end{enumerate}
  
  \item {\bf Distribution} $\mathcal{D}_1^{(h)}$: Independently for each $\ell \in [n/2^{j_1}]$: 
  \begin{enumerate}
  \item with probability $1/2$, rearrange the values within the left half of $B_\ell$ as if we are sampling an array of length $2^{j_1-1}$ from $\mathcal{D}_0^{(h-1)}$, and rearrange the values 
  within the right half of $B_\ell$ as if we are sampling an
        array of length $2^{j_1-1}$ from $\mathcal{D}_1^{(h-1)}$, where
        the scales used in both cases are $j_2, j_3, \dots, j_h$.  We
        refer to such a $j_1$-block as a $01$-block.
  \item with the remaining probability $1/2$, rearrange the values within the left half of $B_\ell$ as if we are sampling an array of length $2^{j_1-1}$ from $\mathcal{D}_1^{(h-1)}$, and rearrange the values 
  within the right half of $B_\ell$ as if we are sampling an
        array of length $2^{j_1-1}$ from $\mathcal{D}_0^{(h-1)}$, where
        the scales used in both cases are $j_2, j_3, \dots, j_h$.  We
        refer to such a $j_1$-block as a $10$-block.
  \end{enumerate}
\end{itemize}
Thus, conditioned on the scales being $j_1, \ldots , j_h$,  the following is the
difference between $\mathcal{D}_0^{(h)}$ and $\mathcal{D}_1^{(h)}$. 
In
every $j_1$-block $B$, the distribution $\mathcal{D}_0^{(h)}$ induces identical
distributions on the first and second halves of $B$, where these are
either both $\mathcal{D}_0^{(h-1)}$ or both $\mathcal{D}_1^{(h-1)}$ (each with
probability $1/2$; i.e., 00 or 11 blocks). The distributions that $\mathcal{D}_1^{(h)}$ induces
 on the first and second halves of $B$ 
are either $(\mathcal{D}_0^{(h-1)}, \mathcal{D}_1^{(h-1)})$
respectively or $(\mathcal{D}_1^{(h-1)}, \mathcal{D}_0^{(h-1)})$ (each
with probability $1/2$; i.e., 01 or 10 blocks).

\ignore{
\begin{lemma}
  Let $p(n,h)=\exp\left(\frac{-\log n}{200\cdot 2^{2^{h-1}}}\right) + \frac{(n\log^2 n)^h}{n^{\log n}}$. The distance to monotonicity of an array sampled from $\mathcal{D}_0^{(h)}$ is, with probability at least $1-p(n,h)$, within $1 - \frac{1}{2^{h-1}} + \frac{1}{2^{2^{h-1}}}\cdot \frac{1}{2^{h-1}} \pm \frac{1}{10 \cdot 2^{2^{h-1}}}\cdot \frac{1}{2^{h-1}}$.
  The distance to monotonicity of an array sampled from $\mathcal{D}_1^{(h)}$ is, with probability at least $1-p(n,h)$, equal to $1 - \frac{1}{2^{h-1}}$.
      \end{lemma}
\begin{proof}
  We prove the lemma by induction. 
  The base case $h = 2$ follows from Lemma~\ref{lem:distance-to-monotonicity-log-squared}. 
  Assume that the lemma is true for $h = t -1$. 
  We show that it is true for $h = t$. 
  Let $j_1, j_2, \dots, j_t$ denote the scales used in sampling from the distributions $\mathcal{D}_0^{(t)}$ and $\mathcal{D}_1^{(t)}$.
  
  Consider an array sampled from $\mathcal{D}_1^{(t)}$.
  By definition, we know that this array is constructed by sampling $n/2^{j_1 + 1}$ arrays each from either $\mathcal{D}_0^{(t-1)}$ or $\mathcal{D}_1^{(t-1)}$.
  Let $F$ be the event that 
  the distances to monotonicity of these arrays are $1 - \frac{1}{2^{t-2}}$ (if sampling from $\mathcal{D}_1^{(t-1)}$) and $1 - \frac{1}{2^{t-2}} + \frac{1}{2^{2^{t-2}}}\cdot \frac{1}{2^{t-2}} \pm o(1)$ (if sampling from $\mathcal{D}_0^{(t-1)}$), respectively. 
  By the inductive hypothesis and a union bound, $F$ does not hold with probability at most $\frac{n}{2^{j_1 - 1}}\cdot \frac{(n\log^{2} n)^{h-1}}{n^{\log n}}$. 
  Since $j_1 > (h-1)\cdot \log(\log^2 n)$, this is at most $\frac{n^h}{n^{\log n}} \le \frac{(n\log^2 n)^h}{n^{\log n}}$.
  In the rest of the analysis, we condition on $F$.

  For $\ell \in [n/2^{j_1}]$, consider the $\ell$-th $j_1$-block $B_\ell$.
  By construction, every value in the left half of $B_\ell$ is larger than every value in its right half.
  Hence, we need to change the values of all of the left half or all of the right half.
  Consider the case that the left half of $B_\ell$ is an array sampled from $\mathcal{D}_1^{(t-1)}$ and that the right half is an array sampled from $\mathcal{D}_0^{(t-1)}$.
  By our conditioning, the right half of $B_\ell$ is further from monotonicity than the left half. 
  Suppose we decide to fully change the values in the right half in order to eliminate the violations to monotonicity within the right half and also between the right and left halves. 
  We still need to modify at least $1 - \frac{1}{2^{t-2}}$ fraction of indices in the left half to make $B_\ell$ monotone.
  Overall, we need to modify $\frac{1}{2} + \frac{1}{2} \left(1 - \frac{1}{2^{t-2}}\right) = 1 - \frac{1}{2^{t-1}}$ fraction of indices in $B_\ell$. 
  It is easy to show that $B_\ell$ can be made monotone by modifying the values on a $1 - \frac{1}{2^{t-1}}$ fraction of indices.
  That is, the distance of $B_\ell$ from monotonicity is equal to $1 - \frac{1}{2^{t-1}}$ in the case that the left half of $B_\ell$ is an array sampled from $\mathcal{D}_1^{(t-1)}$ and that the right half is an array sampled from $\mathcal{D}_0^{(t-1)}$. 
  Similarly, we can see that the distance of $B_\ell$ from monotonicity is equal to $1 - \frac{1}{2^{t-1}}$ in the case that the left half of $B_\ell$ is an array sampled from $\mathcal{D}_0^{(t-1)}$ and that the right half is an array sampled from $\mathcal{D}_1^{(t-1)}$. 
  Since these cases occur with equal probability, we can conclude that, conditioned on $F$, the distance of an array sampled from $\mathcal{D}_1^{t}$ is equal to $1 - \frac{1}{2^{t-1}}$.
  
  Combining arguments similar to the above and a Chernoff bound as in the proof of Lemma~\ref{lem:distance-to-monotonicity-log-squared}, we can also show that, conditioned on $F$, with probability at least $1 - 2\exp(-\delta^2 \log^2 (n)/6)$, the distance of $\mathcal{D}_0^{(t)}$ from monotonicity is $1 - \frac{1}{2^{t-1}} + \frac{1}{2^{2^{t-1}}}\cdot \frac{1}{2^{t-1}} \pm \frac{\delta}{2^{t-1}\cdot 2^{2^{t-2}}}$ for any $\delta \in (0,1)$.
  By setting $\delta = \frac{1}{10 \cdot 2^{2^{t-2}}}$, we get our lemma.
  \end{proof}
      }

      \medskip 
 \begin{lemma}\label{lem:dist-lowerbound-general}
With probability at least $1 - \frac{1}{2^{2h+3}}$, the distance to monotonicity of an array sampled
from $\mathcal{D}_1^{(h)}$ is $d^{(h)}_1  = 1 - \frac{1}{2^{h-1}} \pm \frac{3}{2^{2h + 2}}$, and with
high probability the distance to monotonicity of an array sampled
from $\mathcal{D}_0^{(h)}$ is  $d^{(h)}_0 = 1 - \frac{1}{2^{h-1}} + \frac{1}{2^{2h-1}} \pm \frac{3}{2^{2h + 2}}$. 
      \end{lemma}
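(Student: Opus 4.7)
The plan is to prove both bounds by induction on $h$, with the base case $h=2$ supplied by Lemma~\ref{lem:distance-to-monotonicity-log-squared} using $\delta = 1/2^6 \le 3/2^{2h+2}$. For the inductive step, I first record the structural decomposition: because the values in the $\ell$-th $j_1$-block $B_\ell$ lie strictly below those of $B_{\ell+1}$, the normalized distance of $A$ equals $(1/k_1)\sum_\ell \dist(B_\ell)/L$, where $k_1 = n/2^{j_1}$ and $L = 2^{j_1}$. Within $B_\ell$, the scale-$j_1$ swap places the top $L/2$ values on the left indices and the bottom $L/2$ on the right, so no increasing subsequence can straddle the two halves; hence $\lis(B_\ell) = \max(\lis(\text{left}),\lis(\text{right}))$, and $\dist(B_\ell)/L = 1/2 + \min(d_\text{left}, d_\text{right})/2$, where $d_\text{left}, d_\text{right}$ are the normalized distances of the two halves.

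To leverage this, I would strengthen the inductive hypothesis to assert that (i) under $\D{1}{h}$ the normalized distance equals $d_1^{(h)}$ \emph{deterministically}, and (ii) under $\D{0}{h}$ the normalized distance lies in the \emph{deterministic} interval $[d_1^{(h)}, d_1^{(h)} + 1/2^h]$. These two claims are mutually reinforcing. For $\D{1}{h}$, every block is $01$ or $10$, so one half has distance exactly $d_1^{(h-1)}$ by IH(i) while the other is of type $\D{0}{h-1}$ and has distance at least $d_1^{(h-1)}$ by IH(ii); the minimum is therefore $d_1^{(h-1)}$ almost surely, yielding a per-block dist$/L$ equal to $d_1^{(h)}$ almost surely, and the average gives $X_1^{(h)} = d_1^{(h)}$ deterministically. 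For $\D{0}{h}$, a $00$-block has $\min(d_\text{left}, d_\text{right}) \in [d_1^{(h-1)}, d_1^{(h-1)} + 1/2^{h-1}]$ and an $11$-block has it equal to $d_1^{(h-1)}$, so every per-block distance lies in $[d_1^{(h)}, d_1^{(h)} + 1/2^h]$, as does the average. Thus the lemma's claim for $\D{1}{h}$ is immediate: the distance equals $d_1^{(h)}$ with probability one, well inside the stated window and exception probability.

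It remains to establish concentration for $\D{0}{h}$. Since $X_0^{(h)} := \dist(A)/n$ is the average of $k_1 \ge (h+2)2^{4h+4}$ i.i.d.\ block contributions each supported in an interval of length $1/2^h$, Hoeffding's inequality yields $\Pr[|X_0^{(h)} - \E[X_0^{(h)}]| > \eta] \le 2\exp(-2\eta^2 k_1 \cdot 4^h)$, which for $\eta = \Theta(1/2^{2h+2})$ is tiny relative to $1/2^{2h+3}$. The only remaining task is to verify that $\E[X_0^{(h)}]$ lies within $O(1/2^{2h+4})$ of $d_0^{(h)}$. A direct per-block expectation computation produces the recurrence $\Delta_h := d_0^{(h)} - \E[X_0^{(h)}] \le (\Delta_{h-1} + J_{h-1})/4$, where $J_{h-1} \le \sqrt{\mathrm{Var}(X_0^{(h-1)})/2}$ is the Jensen-type downward bias from taking $\min$ of two i.i.d.\ copies; using Popoviciu's inequality on the IH(ii) range $1/2^{h-1}$ together with the $k_1^{(h-1)}$-fold averaging from the preceding level yields $\mathrm{Var}(X_0^{(h-1)}) \le 1/(4^h \, k_1^{(h-1)})$, whence $\Delta_h = O(1/2^{2h+4})$. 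The main obstacle is exactly this mean control: the $\min$ operation systematically shifts the per-block expectation downward, and without packaging both a deterministic support bound (for Popoviciu) and the recursively propagated variance bound (for Jensen) into the strengthened IH, the accumulated bias could dominate the $3/2^{2h+2}$ tolerance.
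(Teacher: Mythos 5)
Your approach is correct but takes a genuinely different route from the paper's, and in one respect it is actually more careful. The paper's Claim~\ref{clm:lower-bound-induction} packages four inductive estimates (Equations~\ref{eqn:a}--\ref{eqn:d}), tracking expectations and concentration windows together with parameters $\alpha_s,\gamma_s,\delta_s,p_s$. You instead observe a sharper deterministic structure hidden in the formula $\dist(B_\ell)=\tfrac12+\tfrac12\min\{\dist(L_\ell),\dist(R_\ell)\}$: under $\mathcal{D}_1^{(h)}$ every $j_1$-block is $01$ or $10$, so the half drawn from $\mathcal{D}_1^{(h-1)}$ has distance exactly $d_1^{(h-1)}$ (by your IH(i)) and the other half has distance at least $d_1^{(h-1)}$ (by your IH(ii)); hence the per-block distance is $d_1^{(h)}$ with probability one. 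Likewise under $\mathcal{D}_0^{(h)}$, $11$-blocks give exactly $d_1^{(h)}$ and $00$-blocks land in $[d_1^{(h)},d_1^{(h)}+2^{-h}]$. I checked the mutual induction of (i) and (ii), including the base case $h=2$ where $\Delta_2=0$ and the $\mathcal{D}_1^{(2)}$ distance is exactly $1/2$, and it goes through. This reduces the $\mathcal{D}_1$ half of the lemma to a statement that holds with probability one, which is strictly stronger than what the paper proves.

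The genuine added content is your handling of the mean under $\mathcal{D}_0^{(h)}$. The paper's inductive step writes
$\E[\min\{\dist(L_\ell),\dist(R_\ell)\}\mid L_\ell,R_\ell\sim\D{0}{t-1}]=\E[d_0^{(t-1)}]$,
which elides the Jensen-type downward bias $\E[\min(X,Y)]\le\E[X]$ for i.i.d.\ $X,Y$, and there is no visible slack budgeted for it in the paper's closing inequalities $\gamma_{t-1}+2\alpha_{t-1}\le 2\alpha_t+2\gamma_t$. You name this bias $J_{h-1}=\tfrac12\E|X-Y|\le\sqrt{\Var(X_0^{(h-1)})/2}$ and bound the variance using Popoviciu on the deterministic per-block support window (which is available precisely because you strengthened the IH), plus the $k_1^{(h-1)}$-fold i.i.d.\ averaging coming from the scale gap $j_1-j_2\ge\log(h+2)+4h+4$. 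Feeding this into your recurrence $\Delta_h=\tfrac14(\Delta_{h-1}+J_{h-1})$ with $\Delta_2=0$ indeed gives $\Delta_h=O(2^{-(2h+4)})$, which, combined with Hoeffding over the $k_1=n/2^{j_1}\ge(h+2)2^{4h+4}$ blocks at error radius $\eta=\Theta(2^{-(2h+2)})$, lands comfortably within the lemma's $\pm 3/2^{2h+2}$ window and $2^{-(2h+3)}$ failure probability. One small streamlining: you do not need to carry a ``recursively propagated variance bound'' as a separate item of the strengthened IH, since the variance bound at level $h-1$ is a direct consequence of the deterministic support bound (your IH(ii)) via Popoviciu and independence across blocks. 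With that, the proposal is a complete and valid plan, arguably tighter than the paper's own argument.
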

\begin{proof}
Let $j_1, j_2, \dots, j_h$ denote the scales used to sample from the distributions. Note that these are random variables.

\begin{claim}\label{clm:lower-bound-induction}
Let $s \in [h]$. Let $d^{(s)}_b$ denote the (normalized Hamming) distance to monotonicity of an array sampled from $\mathcal{D}_b^{(s)}$ for $b \in \{0,1\}$.
Let $\alpha_s = \frac{1}{2^{2h - s + 2}}$, $\gamma_s = \frac{1}{2^s}, \delta_s = \frac{1}{2^{2h-s+3}},$ and $p_s = \frac{1}{2^{2h+4}}$.
Then we have:
\begin{align}
\E[d_1^{(s)}] &= 1 - \frac{1}{2^{s-1}} \pm \frac{\alpha_s}{2^{s-1}} \label{eqn:a}\\
\E[d_0^{(s)}] &= 1 - \frac{1}{2^{s-1}} + \frac{\gamma_s}{2^{s-1}} \pm \frac{\alpha_s}{2^{s-1}}. \label{eqn:b}
\end{align}
Moreover,
\begin{align}
\Pr\left[d_1^{(s)} \notin \E[d_1^{(s)}] \pm \frac{\delta_{s}}{2^{s-1}}\right] &\leq p_s \label{eqn:c}\\
\Pr\left[d_0^{(s)} \notin \E[d_0^{(s)}] \pm \frac{\delta_{s}}{2^{s-1}}\right] &\leq p_s. \label{eqn:d}
\end{align}
\end{claim}
\begin{proof}
We the prove the claim by induction on $s \in [h]$.

\noindent{\bf Base Case} $s = 2$:
The proof is identical to the proof of Lemma~\ref{lem:distance-to-monotonicity-log-squared}.
\smallskip

\noindent Assume that the claim holds for $s \leq t - 1 < h$.
\smallskip

\noindent{\bf Inductive Step} $s = t$:
Let $A_b^{(t)}$ denote an array of length $n$ sampled from the distribution $\mathcal{D}_b^{(t)}$ for $b \in \{0,1\}$.
For a contiguous subarray $A'$ of $A_b^{(t)}$, let $\dist(A')$ denote the normalized Hamming distance of $A'$ to monotonicity, where the normalization factor is $|A'|$.
Let $B_\ell$ denote the $\ell$-th $j_1$-block of an $n$-length array, where $\ell \in [n/2^{j_1}]$.
Let $L_\ell$ and $R_\ell$ denote the left and right halves of $B_\ell$, respectively.
It can be observed that 
\begin{align}
\dist(B_\ell) = \frac{1}{2} + \frac{1}{2} \cdot \min \{\dist(L_\ell),\dist(R_\ell)\}~\mbox{for all}~\ell. \label{eqn:e}
\end{align}

\noindent {\it Proving Equations~\ref{eqn:a} and~\ref{eqn:c} for $s = t$}:
Consider an array $A_1^{(t)}$.
We can see that, $d_1^{(t)} = \frac{\sum_{\ell} \dist(B_\ell)}{n/2^{j_1}}$.
In order to bound $\E[d_1^{(t)}]$, it is enough to bound $\E[\dist(B_\ell)]$ for an arbitrary $\ell$ and then use the linearity of expectation. We now bound $\E[\dist(B_\ell)]$ from both above and below. 
Note that with probability $1/2$, we have $L_\ell \sim \D{0}{t-1}, R_\ell \sim \D{1}{t-1}$ and with probability $1/2$, we have $L_\ell \sim \D{1}{t-1}, R_\ell \sim \D{0}{t-1}$.
Since $\dist(B_\ell)$ is identically distributed for all $\ell$, we have 
\begin{align}
\E[\dist(B_\ell)] = \E[\di{1}{t}].\label{eqn:f}
\end{align}

By the inductive hypothesis and a union bound, we know that, with probability at least $1 - 2p_{t-1}$:
\begin{align*}
\dist(L_\ell) \in \E\left[\dist(L_\ell)\right] \pm \frac{\delta_{t-1}}{2^{t-2}},~\mbox{and}~\dist(R_\ell) \in \E\left[\dist(R_\ell)\right] \pm \frac{\delta_{t-1}}{2^{t-2}}.
\end{align*}
That is, with probability at least $1 - 2p_{t-1}$:
\begin{align}
\min\{\dist(L_\ell),\dist(R_\ell)\} \in \E[\di{1}{t-1}] \pm \frac{\delta_{t-1}}{2^{t-2}}. \label{eqn:g}
\end{align}
The above holds because, by our assumption, $\E[\di{1}{t-1}] + \frac{\delta_{t-1}}{2^{t-2}} < \E[\di{0}{t-1}] - \frac{\delta_{t-1}}{2^{t-2}}$.

Let $I$ be a shorthand for $\E[\min\{\dist(L_\ell),\dist(R_\ell)\}]$.
From all of the above, we have:
\begin{align*}
(1 - 2p_{t-1})\cdot (\E[\di{1}{t-1}] - \frac{\delta_{t-1}}{2^{t-2}}) \leq I \leq \E[\di{1}{t-1}] + \frac{\delta_{t-1}}{2^{t-2}} + 2p_{t-1}.
\end{align*} 
Using the above inequalities, and Equation~\ref{eqn:e}, we get upper and lower bounds on $\E[\di{1}{t}]$ in terms of $\E[\di{1}{t-1}]$.
\begin{align*}
\E[\di{1}{t}] &\leq \frac{1}{2} + \frac{1}{2} \cdot \left(\E[\di{1}{t-1}] + \frac{\delta_{t-1}}{2^{t-2}} + 2p_{t-1}\right)\\
\E[\di{1}{t}] &\geq \frac{1}{2} + \frac{1}{2} \cdot \left((1 - 2p_{t-1})\cdot (\E[\di{1}{t-1}] - \frac{\delta_{t-1}}{2^{t-2}})\right)
\end{align*}

By further using the conditions on $\E[\di{1}{t-1}]$ by the induction hypothesis, we get the following requirements, which are all satisfied by our setting of parameters:
\begin{align*}
\delta_{t-1} + \alpha_{t-1} + 2^{t}\cdot p_{t-1} &\leq \alpha_t\\
\alpha_{t-1} + \delta_{t-1} + 2^{t-1}p_{t-1} &\leq \alpha_t + 2p_{t-1} \cdot (1 + \alpha_{t-1} + \delta_{t-1}).
\end{align*}
The above finishes the proof of Equation~\ref{eqn:a} in the inductive step.
Note that, the second inequality above is redundant, as it is already implied by the first one.

We now prove Equation~\ref{eqn:c} for $s = t$. 
By Hoeffding's inequality, in association with our assumption that $j_1 < \log (n) - \log (h+2) - (4h+4)$, 
\begin{align*}
\Pr\left[\left|\frac{\sum_{\ell} \dist(B_\ell)}{n/2^{j_1}} - \E[\di{1}{t}]\right| > \frac{\delta_t}{2^{t-1}}\right] \leq 2\exp(-\frac{2n}{2^{j_1}}\cdot \frac{\delta_t^2}{2^{2t-2}}) = \frac{2}{\exp(2h+4)} \leq p_t.
\end{align*}

\noindent\textit{Proving Equations~\ref{eqn:b} and~\ref{eqn:d} for $s=t$}:
Consider an array $A_0^{(t)}$. 
For $\ell \in [n/2^{j_1}]$, we have that, with probability $1/2$, the subarrays $L_\ell, R_\ell \sim \D{0}{t-1}$, and with probability $1/2$, the subarrays $L_\ell, R_\ell \sim \D{1}{t-1}$. 
As before, 
$$\E[\di{0}{t}] = \E[\dist(B_\ell)],$$
for arbitrary $\ell \in [n/2^{j_1}]$.
Using earlier notation, $I = \E[\min\{\dist(L_\ell), \dist(R_\ell)\}]$.

We have
\begin{align*}
I &= \frac{1}{2} \cdot \E[\min\{\dist(L_\ell), \dist(R_\ell)\}|L_\ell, R_\ell \sim \D{0}{t-1}] + \frac{1}{2} \cdot \E[\min\{\dist(L_\ell), \dist(R_\ell)\}|L_\ell, R_\ell \sim \D{1}{t-1}]\\
& = \frac{1}{2} \cdot \E[\di{0}{t-1}] + \frac{1}{2} \cdot \E[\di{1}{t-1}]
\end{align*} 

Thus, using the bounds on $\E[\di{0}{t-1}]$ and $\E[\di{1}{t-1}]$ obtained via the induction hypothesis, we can bound $\E[\di{0}{t}] = \E[\dist(B_\ell)] = \frac{1}{2} + \frac{I}{2}$ as:
\begin{align*}
\E[\di{0}{t}] &\leq \frac{1}{2} + \frac{1}{4} [1 - \frac{1}{2^{t-2}} + \frac{\gamma_{t-1}}{2^{t-2}} + \frac{\alpha_{t-1}}{2^{t-2}} ] + \frac{1}{4} [ 1 - \frac{1}{2^{t-2}} + \frac{\alpha_{t-1}}{2^{t-2}}]\\
\E[\di{0}{t}] &\geq \frac{1}{2} + \frac{1}{4} [1 - \frac{1}{2^{t-2}} + \frac{\gamma_{t-1}}{2^{t-2}} - \frac{\alpha_{t-1}}{2^{t-2}} ] + \frac{1}{4} [ 1 - \frac{1}{2^{t-2}} - \frac{\alpha_{t-1}}{2^{t-2}}]
\end{align*}
Imposing the conditions necessary to prove the induction statement, we get the following inequalities:
\begin{align*}
\gamma_{t-1} + 2\alpha_{t-1} &\leq 2\alpha_{t} + 2\gamma_t\\
2\gamma_t + 2\alpha_{t-1} &\leq \gamma_{t-1} + 2\alpha_t 
\end{align*}

The above are satisfied by our setting of parameters since we have that $\alpha_{t-1} \le \alpha_t$ and $2\gamma_t = \gamma_{t-1}$.

The proof of Equation~\ref{eqn:d} for $s = t$ is identical to that of the proof of Equation~\ref{eqn:c} and is a simple application of Hoeffding's inequality. 
\end{proof}
The lemma follows by applying the Claim~\ref{clm:lower-bound-induction} to the case that $s = h$.
\end{proof}
\ignore{
\noindent
{\em Proof idea.}  The proof is by induction and is not presented
here. The top level idea is as follows.
We
prove the lemma for each tuple of scales $(j_1, j_2, \dots, j_t),$ used in sampling from the
distributions $\mathcal{D}_0^{(t)}$ and $\mathcal{D}_1^{(t)}$. In each
distribution, the distance is the average distance on the $n/2^{j_1 +
  1}$  subarrays of size $2^{j_1+1}$.  Let $B$ be such a
subarray. Then either $B$ is a $(00)$ or $(11)$ block if sampled from
$\mathcal{D}_0^h$, each with probability $1/2$, or if sampled from
$\mathcal{D}_1^h$ it is a $(01)$ or a $(10)$ block, each w.p $1/2$. In
all cases, all values in the first half of $B$ are larger than all values
in the second half of $B$. Thus it is easy to see that the distance in
$B$ is $1/2 + d'/2$ where $d'$ is the distance of either the left or
right half of $B$, whichever is smaller. But these
corresponding parts of $B$  are distributed according to $\mathcal{D}_1^{h-1}$
or $\mathcal{D}_0^{h-1}$. We then bound the expected distance. We
avoid further details. \qed }

\begin{lemma}\label{lem:correctness-lowerbound-general}
Every deterministic comparison-based algorithm  needs $\Omega(\log^h n)$ queries to distinguish $\mathcal{D}_0^{(h)}$ and $\mathcal{D}_1^{(h)}$, for all positive integer $h \ge 2$.
\end{lemma}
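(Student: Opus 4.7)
The proof proceeds by induction on $h$, with the base case $h=2$ provided by Lemma~\ref{lem:log-squared-lowerbound-indistinguishability}. Assume the statement for level $h-1$; we establish it for $h$. Fix an arbitrary deterministic comparison-based nonadaptive algorithm $T$ with query set $Q$ of size $q = o(\log^h n)$, and show that $T$ fails to distinguish $\mathcal{D}_0^{(h)}$ from $\mathcal{D}_1^{(h)}$ with probability at least $1/3$.

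The plan is to recursively define a \emph{witness} structure inside $Q$. At scales $(j_1, \ldots, j_h)$, the set $Q$ \emph{witnesses level $h$} if some $j_1$-block $B_\ell$ has the property that both $Q \cap L_\ell$ and $Q \cap R_\ell$ witness level $h-1$ with scales $(j_2, \ldots, j_h)$; a witness at level $2$ is precisely the four-index configuration defined in the proof of Lemma~\ref{lem:log-squared-lowerbound-indistinguishability}. Let $E$ be the event that $Q$ witnesses level $h$ under the randomly sampled scales. The proof splits into two parts: (a) bounding $\Pr[E] \leq 1/3$ over the random choice of scales, and (b) constructing a coupling, conditioned on $\neg E$, between $\mathcal{D}_0^{(h)}$ and $\mathcal{D}_1^{(h)}$ under which $T$ observes identical views.

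For step (b), condition on scales for which $\neg E$ holds, and consider each $j_1$-block $B_\ell$ separately. By definition of $E$, at most one of $Q \cap L_\ell$ or $Q \cap R_\ell$ can witness level $h-1$; call that half \emph{informed} and the other \emph{uninformed}. By the inductive hypothesis applied at length $2^{j_1-1}$, the queries lying in the uninformed half cannot distinguish $\mathcal{D}_0^{(h-1)}$ from $\mathcal{D}_1^{(h-1)}$ at scales $(j_2, \ldots, j_h)$ (the scale separation requirements guarantee that $2^{j_1-1}$ is large enough for the inductive bound to apply). We then couple block-by-block: sample the informed half from $\mathcal{D}_b^{(h-1)}$ with a uniform $b \in \{0,1\}$; to produce a draw from $\mathcal{D}_0^{(h)}$, draw the uninformed half from $\mathcal{D}_b^{(h-1)}$ (yielding a $00$ or $11$ block), while to produce $\mathcal{D}_1^{(h)}$, draw it from $\mathcal{D}_{1-b}^{(h-1)}$ (yielding a $01$ or $10$ block). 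The induction guarantees that these two choices look identical to $T$ on the uninformed half, and the informed half is sampled identically in both couplings, so the views agree everywhere.

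The main obstacle is step (a): bounding the probability that $h$ randomly chosen scales induce a witness in a fixed query set of size $o(\log^h n)$. This generalizes the two-scale bound from Ben-Eliezer, Canonne, Letzter, and Waingarten~\cite{BCLW19} invoked in the proof of Lemma~\ref{lem:log-squared-lowerbound-indistinguishability}. I would proceed by a nested union bound together with a recursive application of the $(h-1)$-scale probability bound inside each half: for every $j_1$ and every $j_1$-block $B_\ell$, apply the inductive probability bound over $(j_2, \ldots, j_h)$ to each of $Q \cap L_\ell$ and $Q \cap R_\ell$ and multiply, then sum over $\ell$ and $j_1$. The scale-separation constraint $j_{i+1} \leq j_i - \log(h+2) - (4h+4)$ is calibrated so that the losses at each of the $h$ iterations contribute only constant factors overall, keeping $\Pr[E] \leq 1/3$ whenever $q = o(\log^h n)$.
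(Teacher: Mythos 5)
Your definition of a \emph{witness} structure coincides exactly with the paper's inductive notion of an \emph{$h$-bad} set of queries, and your coupling argument for step (b) is essentially the same as the paper's: condition on scales under which at most one half of each $j_1$-block contains a $(t-1)$-bad set, and, per block, couple the draws from $\mathcal{D}_0^{(h)}$ and $\mathcal{D}_1^{(h)}$ so that the bad half (your ``informed'' half) is drawn identically while the good half is swapped between $\mathcal{D}_0^{(h-1)}$ and $\mathcal{D}_1^{(h-1)}$, which the inductive hypothesis renders invisible. One small technical remark: for this coupling to yield \emph{identical} views (not merely statistically close ones), the inductive statement you actually need is the conditional one --- ``given that $Q$ does not witness level $h-1$, the views on $\mathcal{D}_0^{(h-1)}$ and $\mathcal{D}_1^{(h-1)}$ are identical'' --- which is stronger than the indistinguishability lemma you cite as the hypothesis. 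The paper handles this by proving a separate nested claim by induction with exactly that conditional formulation.

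The real gap is step (a). The paper does not prove the probability bound at all; it cites Ben-Eliezer, Canonne, Letzter, and Waingarten~\cite{BCLW19}, who show that for $|Q| = o(\log^h n)$ the scales-are-bad event has probability at most $1/3$. Your suggested replacement --- apply the $(h-1)$-level probability bound to each half of each $j_1$-block, \emph{multiply}, and union-bound over $\ell$ and $j_1$ --- does not work as stated. The two events ``$Q\cap L_\ell$ witnesses level $h-1$'' and ``$Q\cap R_\ell$ witnesses level $h-1$'' are both determined by the same shared scales $(j_2,\dots,j_h)$, so they are not independent and their probabilities cannot be multiplied. Moreover, even granting the multiplication, each factor is a constant (of order $1/3$), so the product does not shrink with $h$, and the union bound over the roughly $|Q|$ candidate blocks times $\log n$ choices of $j_1$ blows up. The actual argument in~\cite{BCLW19} is considerably more delicate: it does a careful recursive accounting of the ``profile'' of query sets and cannot be reduced to a naive union bound plus multiplication. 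You should either cite their result directly, as the paper does, or reproduce their combinatorial argument --- the sketch you propose would not close.
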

\begin{proof}
We prove the statement by induction. The idea, as in the case of
$h=2$, is to condition on some high probability event that depends only on the scales,
which are chosen identically for $\mathcal{D}_0^{(h)}$ and
$\mathcal{D}_1^{(h)}$. 
Under such a conditioning, the
distributions induce identical distributions on any fixed set of
queries. This in turn will be shown by a coupling argument as done for
$h=2$. The details follow.

For the base case $h = 2, $ the statement is proved in Section~\ref{sec:log-squared-lower-bound}.
Assume that the statement is true for all $h \le t - 1$ and prove the statement 
for  $h = t$. 
Consider a deterministic nonadaptive algorithm  $T$ whose set of queries
is $Q$, with $|Q| = o(\log^t n)$. 

We first define inductively, what we mean by a set of queries to be $h$-\emph{bad}.
For $h = 2$, let $j_1, j_2$ be the scales used in sampling.
A set of $4$ queries $\{w, x, y, z\} \subseteq Q$ is $2$-bad 
 if they satisfy the following. 
There exists a $j_1$-block $B$ such that (1) all the four indices are
in $B$, (2) $w,x$ belong to the same $j_2$-block in the left half of
$B$, and (3) $y,z$ belong to the same $j_2$-block in the right half of
$B$ (this is equivalent to being $j_2$-cousins from
Section~\ref{sec:log-squared-lower-bound}). 

Assume that we have defined a set of $2^{t-1}$ queries to be
$(t-1)$-bad. 
We now perform the inductive step for $h = t$.
Recall, at first, that the left and right halves of each $j_1$-block in an array sampled from either $\mathcal{D}_0^{(t)}$ or $\mathcal{D}_1^{(t)}$ look like arrays sampled from $\mathcal{D}_0^{(t-1)}$ or $\mathcal{D}_1^{(t-1)}$.
Let $j_1, j_2, \dots, j_t$ be the scales used in sampling from $\mathcal{D}_0^{(t)}$ and $\mathcal{D}_1^{(t)}$. 
Consider a set of $k = 2^t$ queries $\{x_1, x_2, \dots x_k\} \subseteq Q$. 
This set is $t$-bad 
 if it satisfies the following.
There exists a $j_1$-block $B$ such that (1) the indices $\{x_1, x_2, \dots x_{k/2}\}$ belong to the left half of $B$, (2) the indices $\{x_{k/2 + 1}, x_{k/2 + 2}, \dots x_{k}\}$ belong to the right half of $B$, and (3) the sets $\{x_1, x_2, \dots x_{k/2}\}$ and $\{x_{k/2 + 1}, x_{k/2 + 2}, \dots x_{k}\}$ are $(t-1)$-bad with respect to distributions $\mathcal{D}_0^{(t-1)}$ and $\mathcal{D}_1^{(t-1)}$ (sampled with scales $j_2, j_3, \dots, j_t$).

Let $E_t$ be the bad event that there exists a $t$-bad set of queries with respect to $\mathcal{D}_0^{(t)}$ or $\mathcal{D}_1^{(t)}$ in $Q$.  
Ben-Eliezer, Canonne, Letzter, and Waingarten~\cite{BCLW19} show that for $|Q| = o(\log^t n)$, the probability, over the choice of the scales $j_1, j_2, \dots, j_t$, of the event $E_t$ is at most $\frac{1}{3}$. 

\begin{claim}
Conditioned on $E_h$ not happening, the views of every deterministic comparison-based algorithm is identical on arrays sampled from $\mathcal{D}_0^{(h)}$ and $\mathcal{D}_1^{(h)}$.
\end{claim}
\begin{proof}
We prove this by induction.
The base case $h = 2$ comes from the proof of Lemma~\ref{lem:log-squared-lowerbound-indistinguishability}.
Assume that the claim holds true for $h \le t-1$.
We now prove the claim for the case that $h = t$.

Consider an array $A$ sampled from $\mathcal{D}_0^{(t)}$ with scales $j_1, j_2, \dots, j_t$.
Conditioned on $E_t$ not happening, for every $j_1$-block $B$, either the
left or right halves of $B$ does not contain any $(t-1)$-bad set of queries.
In other words, the event $E_{t-1}$ does not happen in either the left half or the right half of $B$. 
Recall that the left and right halves of $B$ are themselves arrays sampled from $\mathcal{D}_0^{(t-1)}$ or $\mathcal{D}_1^{(t-1)}$ with scales $j_2, \dots, j_t$. 

We now show a way to sample an array $A'$ from $\mathcal{D}_1^{(t)}$ such that the view of the algorithm  on both $A$ and $A'$ are identical.
That is, we specify whether each $j_1$-block of $A'$ is either a $01$-block or a $10$-block.

Consider a $j_1$-block $B$. 
If $E_{t-1}$ happens neither in the left nor in the right half of $B$, then we can associate $B$ in $A'$ with a $01$-block or a $10$-block uniformly at random. 
By the inductive hypothesis, no algorithm  can determine which distribution was the halves are sampled from.
Hence, the view of the algorithm  on $B$ is identical whether the array is $A$ or $A'$.

Assume now that $E_{t-1}$ happens only in the left half of $B$ and that $B$ is a $00$-block in $A$. 
Then we associate $B$ in $A'$ with a $01$-block. 
In case $B$ is a $11$-block in $A$, we associate $B$ in $A'$ with a $10$-block. 
Note that our association is consistent with how a block is sampled when an array is sampled from $\mathcal{D}^{(t)}_1$, since $B$ being a $01$-block or a $10$-block in $A'$ is equally likely and is independent of other blocks.
Since the algorithm  cannot determine whether the right half of $B$ is sampled from $\mathcal{D}_0^{(t-1)}$ or $\mathcal{D}_1^{(t-1)}$, the view of the algorithm  on $B$ is identical whether the array is $A$ or $A'$.
Similarly, we can also show an association for the case that $E_{t-1}$ happens only in the right block of $B$. 
This completes the proof of the claim.
\end{proof}
Note that the only place in the proof where we used the bound on the number of queries was in bounding the probability of occurrence of a bad set of indices.
\end{proof} 

      We note that, since for constant $h$, the distance from monotonicity of arrays sampled from both distributions
      are constants, and that these distances (and therefore, LIS lengths) differ by $2^{-\Theta(h)}$. Lemmas~\ref{lem:dist-lowerbound-general} and~\ref{lem:correctness-lowerbound-general} together directly imply Theorem~\ref{thm:adaptive-tolerant-lb}.

\section{Nonadaptive Erasure-Resilient monotonicity Tester}\label{sec:na-er-monotonicity-tester}

In this section, we prove
Theorem~\ref{thm:nonadaptive-erasure-resilient-monotonicity-test}. That
is, we show that there exists a nonadaptive erasure-resilient tester for monotonicity of real-valued arrays.
We present a basic tester in
Algorithm~\ref{alg:nonadaptive-monotonicity-tester}; that is, it is
a nonadaptive $1$-sided error tester for monotonicity with a relatively
small probability of success. Then the probability of success can be
amplified to the standard value of $2/3$, by repeating
Algorithm~\ref{alg:nonadaptive-monotonicity-tester} for $\left\lceil\frac{120}{\epsilon^2}\right\rceil$
times.
Algorithm~\ref{alg:nonadaptive-monotonicity-tester} gets as input, a parameter $\epsilon \in (0,1)$, and has oracle access to a real-valued array $A$ that has at most an $\alpha$ fraction of its values erased, where $\alpha < 1 - \epsilon$.
We point out that our tester does not require knowledge of $\alpha$. 

The intuition behind the algorithm is quite simple. As a first strategy, one can sample a
random index $s \in [n]$, query it, and run a randomized
binary search for the value $f(s)$, provided that $f(s)$ is not
erased. One rejects, if a violation to monotonicity is found among the queried nonerased values. Such an
algorithm is obviously a $1$-sided error algorithm. Let $A$ be $\epsilon$-far from
monotone. If $A$ contained no
erasures, this algorithm is known
to detect a violation to monotonicity with high probability. Further, the expected
depth of a randomized binary search tree, which is also the expected query complexity of this algorithm, is $O(\log n)$.
The problem with having 
erasures is that once a pivot in the search is erased, the search does
not have any indication as to what direction it should go, and such an
event will occur with high probability when there are a constant fraction of
erasures. Instead, we search for $s$ rather than for $f(s)$ (which
obviously will end in finding $s$ at the end), and hope to find
a violation to monotonicity among the values on the set of points that are sampled. We then
show that this tester has a good success probability using a lemma from
\cite{NewmanRRS19}.
The details follow.


\begin{algorithm}
  \caption{Nonadaptive erasure-resilient monotonicity tester}
  \begin{algorithmic}[1]
    \Require parameter $\epsilon \in (0,1)$, oracle access to an $\alpha$-erased array $A$ of length $n$, where $\alpha \le 1 - \epsilon$ 
    \State Sample $s \in [n]$ uniformly and independently at random and query $A[s]$.\label{step:search-point}
    \State $\ell \gets 1, r \gets n$.
    \Repeat 
    \State Sample a pivot $p \in [\ell,r]$ uniformly and independently at random and query $A[p]$. 
    \If {$s \in [\ell, p]$}
    \State $r \gets p$
    \Else 
    \State $\ell \gets p$
    \EndIf
    \Until {$p = s$}
    \State Query the values of $A$ at $20/\epsilon$ indices to the left and to the right of $s$. \label{stp:sequence-queries}
    \State \textbf{Reject} if any pair of nonerased indices
                queried in this iteration violate monotonicity, and
                otherwise \textbf{Accept}.
  \end{algorithmic}
\label{alg:nonadaptive-monotonicity-tester}
\end{algorithm}

 To analyze the success we
need the following definitions, and a lemma from~\cite{NewmanRRS19}.
\begin{definition}\label{def:deserted-element}
  Given a set $S \subseteq [n]$ and a parameter $\gamma \in [0,1]$, an element $i \in S$ is called \emph{$\gamma$-deserted}, if there exists an interval $I \subseteq [n],$ containing $i,$ such that $|S \cap I| < \gamma |I|$. 
\end{definition}

\begin{lemma}[\cite{NewmanRRS19}]\label{lem:bound-on-deserted-elements}
  Let $S \subseteq [n]$ with $|S| \geq \delta n$. 
  For every $\gamma < 1$, at most 
  $3 \gamma (1 - \delta)  n/(1 - \gamma)$ 
  elements of $S$  are $\gamma$-deserted.
\end{lemma}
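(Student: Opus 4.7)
The plan is to apply a one-dimensional Vitali-type covering argument together with the witness density condition. Let $T = [n] \setminus S$, so that $|T| \le (1-\delta)n$, and let $D \subseteq S$ denote the set of $\gamma$-deserted elements. For each $i \in D$, fix a witness interval $I_i \subseteq [n]$ with $i \in I_i$ and $|I_i \cap S| < \gamma |I_i|$; equivalently, $|I_i \cap T| > (1-\gamma)|I_i|$. The goal is to show that the total ``mass'' of $D$ is controlled by the total mass of $T$, amplified by the factor $\gamma/(1-\gamma)$ which captures both the density contrast on each witness interval and the overall fraction of non-$S$ points.

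The first step is to extract from the family $\{I_i : i \in D\}$ a subfamily $\mathcal{I}$ that still covers $D$ and has bounded overlap, i.e., every point of $[n]$ lies in at most $3$ intervals of $\mathcal{I}$. Since we are working with intervals on a line, this is standard: one can process witness intervals in order of decreasing length and keep each interval only if it is not already sufficiently covered by previously kept intervals, or equivalently invoke the classical Vitali covering lemma in $\mathbb{R}$. I expect this covering step to be the main obstacle, because the precise multiplicity one can guarantee is exactly what determines the constant $3$ in the conclusion.

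Once $\mathcal{I}$ is in hand, the bound follows by double counting:
\begin{align*}
|D| \;\le\; \sum_{I \in \mathcal{I}} |I \cap S| \;<\; \gamma \sum_{I \in \mathcal{I}} |I| \;<\; \frac{\gamma}{1-\gamma} \sum_{I \in \mathcal{I}} |I \cap T| \;\le\; \frac{3\gamma\,|T|}{1-\gamma} \;\le\; \frac{3\gamma(1-\delta)\,n}{1-\gamma}.
\end{align*}
The first inequality uses that $\mathcal{I}$ covers $D$ and $D \subseteq S$; the second uses the witness property $|I \cap S| < \gamma|I|$ of each selected interval; the third uses $|I| < |I \cap T|/(1-\gamma)$, again from the witness property; and the last two use the multiplicity bound of $3$ for $\mathcal{I}$ on $T$ and the hypothesis $|T| \le (1-\delta)n$.

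The crux of the argument is thus the covering step. The density inequalities on witness intervals are essentially tautological restatements of being $\gamma$-deserted, and the double-counting is immediate; the content of the lemma lies in the geometric observation that $\gamma$-deserted points cannot cluster too densely, which is captured by the bounded-overlap covering.
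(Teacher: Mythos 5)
The paper does not actually prove this lemma; it cites it from \cite{NewmanRRS19}, so there is no in-paper proof to compare against. Judged on its own merits, your argument is correct and essentially standard: each deserted point gets a witness interval on which $T = [n]\setminus S$ is dense, you thin the witness family to one of bounded multiplicity covering $D$, and the double count transfers the bound from $T$ to $D$. The chain of inequalities is right, and the factor $\gamma/(1-\gamma)$ arises exactly as you say.

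One remark on the ``crux'' step, since you flagged it yourself. Invoking ``the classical Vitali covering lemma'' is not quite the right reference: Vitali produces a \emph{disjoint} subfamily whose \emph{enlargements} cover, which does not feed into your double count (it would bound $|D|$ by $3\sum|I_i|$, not by $\sum|I_i\cap S|$ over a covering family). What you actually need is a one-dimensional Besicovitch-type selection: a subfamily of the witness intervals that still covers $D$ and has bounded pointwise multiplicity. In 1D this has a short greedy proof and in fact gives multiplicity $2$, not $3$: order the deserted points $d_1<\cdots<d_m$, take $J_1$ to be a witness interval containing $d_1$ with the largest right endpoint, let $d_{i_2}$ be the first deserted point not in $J_1$, take $J_2$ containing $d_{i_2}$ with largest right endpoint, and so on; a maximality argument shows $J_k\cap J_{k+2}=\emptyset$. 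Plugging multiplicity $2$ into your computation would even improve the constant to $2$, so the stated bound with $3$ is certainly safe. Your proof is correct once the covering step is instantiated with this (or any) bounded-overlap selection rather than Vitali proper.
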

\begin{lemma}\label{lem:nonadaptive-monotonicity-rejection}
Algorithm~\ref{alg:nonadaptive-monotonicity-tester} rejects, with probability at least $\epsilon^2/60$, every $\alpha$-erased real-valued array $A$ of length $n$ that is $\epsilon$-far from monotone.
\end{lemma}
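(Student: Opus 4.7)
The plan is to lower-bound $\Pr[\text{reject}]$ as a product of two $\Omega(\epsilon)$ factors: the probability that the sampled index $s$ falls into a structured subset $B'$ of ``good'' non-erased violators, and the conditional probability that, given $s\in B'$, the queries reveal a violation.

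I construct $B'$ as follows. Let $N\subseteq[n]$ be the non-erased indices, so $|N|\ge(1-\alpha)n$, and let $L$ be a longest nondecreasing subsequence of $A$ restricted to $N$. Since every completion of $A$ is $\epsilon$-far from monotone, we must have $|L|\le|N|-\epsilon n$: otherwise, the completion that extends $L$ by filling the erased positions with values interpolating along $L$ is $(|N|-|L|)$-close to monotone, which would be less than $\epsilon n$. Hence $B:=N\setminus L$, the set of non-erased violators, has $|B|\ge\epsilon n$. I then invoke Lemma~\ref{lem:bound-on-deserted-elements} with $S=N$ and $\gamma=\Theta(\epsilon)$ (e.g.\ $\gamma=\epsilon/30$) to bound the number of $\gamma$-deserted elements of $N$ by at most $\epsilon n/2$; removing them from $B$ yields $B'$ with $|B'|\ge\epsilon n/2$. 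Consequently $\Pr[s\in B']\ge\epsilon/2$.

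Conditioning on $s\in B'$, I argue that the probability of detection is at least $\epsilon/30$. Since $s\notin L$, by LIS-maximality either the index-predecessor $l^-\in L$ of $s$ satisfies $A[l^-]>A[s]$ or the successor $l^+\in L$ satisfies $A[l^+]<A[s]$; WLOG the former. LIS-maximality further forces every non-erased $u\in(l^-,l^+)\setminus\{s\}$ to satisfy $A[u]<A[l^-]$ or $A[u]>A[l^+]$, so each such $u$ forms a violation with $l^-$ or $l^+$. In the \emph{close} subcase $s-l^-\le 20/\epsilon$, the index $l^-$ lies in the nearby window of line~\ref{stp:sequence-queries} and is non-erased (being in $L\subseteq N$), so the pair $(l^-,s)$ is deterministically detected. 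In the \emph{far} subcase $s-l^->20/\epsilon$, detection relies on the random binary search: using the uniformity of the pivot on each shrinking interval containing $s$, the non-desertedness of $s$ (which guarantees a $\gamma$-fraction of non-erased indices in every interval around $s$), and the structural property above, I show that with probability $\Omega(\epsilon)$ the binary search queries an index that, together with another queried index, forms a detected violation.

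The main obstacle is the far-witness subcase. The naive estimate $\Pr[l^-\text{ is a BST-ancestor of }s]=1/(s-l^-+1)\le\epsilon/20$ is of the right asymptotic order but too tight to finish the proof on its own. The refined argument must therefore account also for hits on $l^+$ and on non-erased indices inside $(l^-,l^+)$, each of which is forced to be part of a violation by LIS-maximality, and it must use the deserted-bound lemma of~\cite{NewmanRRS19} to ensure that enough non-erased indices are accessible around every shrinking interval on the binary search path to sustain the $\Omega(\epsilon)$ conditional bound. Multiplying the two $\Omega(\epsilon)$ factors then yields $\Pr[\text{reject}]\ge(\epsilon/2)(\epsilon/30)=\epsilon^2/60$.
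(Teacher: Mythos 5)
Your high-level plan (a product of two $\Omega(\epsilon)$ factors, one for $s$ landing in a structured set, one for detection given that) mirrors the paper, and your ``close'' subcase and use of Lemma~\ref{lem:bound-on-deserted-elements} are fine. But the construction of the structured set is genuinely different from the paper's, and that difference is where the proof breaks: you take $B'$ to be (the non-deserted part of) $N\setminus L$ for an LIS $L$ of the non-erased array, whereas the paper takes a maximum matching $M$ of disjoint \emph{violating pairs} supported on $N$, restricts it to non-deserted endpoints to get $M^*$, and conditions on $s=\mathsf{maj}(x,y)$ for $(x,y)\in M^*$, where $\mathsf{maj}(x,y)$ is the endpoint violating with at least half of $S\cap[x,y]$. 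This extra structure is not cosmetic; it is exactly what makes the far-witness subcase go through.

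The gap is in your far-witness subcase, which you yourself flag as ``the main obstacle'' and never actually close. You observe that $\Pr[l^-\ \text{is a BST-ancestor of}\ s]\approx 1/(s-l^-)$ is too small and propose to supplement it with ``hits on non-erased indices inside $(l^-,l^+)$, each of which is forced to be part of a violation by LIS-maximality.'' But being \emph{part of some violation} is not enough: LIS-maximality only tells you that such a $u$ violates with $l^-$ or $l^+$, not with $s$, and the algorithm rejects only when \emph{both} endpoints of a violating pair are among the queried indices. The binary search queries $s$ and its pivot ancestors; hitting $u$ as a pivot gives a rejection only if $u$ also violates with $s$ or with another queried pivot, and nothing in your setup guarantees that. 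Concretely, one can have $A[l^-]>A[s]$ with all other non-erased $u\in(l^-,l^+)$ satisfying $A[u]>A[l^+]$ (hence $A[u]>A[s]$ for $u>s$, no violation with $s$) or $A[u]<A[s]$ for $u<s$ (again no violation with $s$); then the only violating partner of $s$ on the search path would be $l^-$ itself, and you are back to the $1/(s-l^-)$ bound you already ruled out. The paper avoids this entirely: since $s=\mathsf{maj}(x,y)$, at least half of $S\cap[x,y]$ violate \emph{with $s$}, $s$ is always queried, and by non-desertedness (measured with respect to the violating-pair set $S$, not with respect to all of $N$ as in your version) there are $\ge\gamma(y-x)$ such indices, so the pivot chosen in the last interval containing both $x$ and $y$ lands on a violating partner of $s$ with probability $\ge\gamma/2$. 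Without a matching/maj-style structure ensuring $s$ has many violating partners near it, your conditional $\Omega(\epsilon)$ bound does not follow, so the proposal as written is incomplete and the sketched refinement would not fix it.
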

\noindent
{\bf Proof.}
Let $E \subseteq [n]$ denote the set of erased indices in $A$.
The fact that $A$ is $\epsilon$-far from monotone implies that
there is a matching of disjoint violating pairs of size at least
$|M|= \ell \geq \epsilon n/2$, where $M =\{(x_i,y_i) |~ i=1, \ldots,
\ell \}$. Let $S= \bigcup_{i \in [\ell]}
\{x_i,y_i\} \subseteq [n] \setminus E$.
By applying Lemma~\ref{lem:bound-on-deserted-elements} with
respect to $S$, $\delta = \epsilon$ and $\gamma  = \delta/10$, we get
that at most $3 \gamma (1 - \delta)  n/(1 - \gamma) \leq \frac{3
  \delta n}{10(1-\delta/10)} \leq \epsilon n/9$ indices from $S$ are
$\gamma$-deserted.  Let $S_\gamma$ be the indices in $S$ that are 
$\gamma$-deserted (with respect to $S$).

Let $M^*$ be the matching obtained by restricting $M$ onto the
indices that are not $\gamma$-deserted -- that is, $M^* =
M|_{S \setminus S_\gamma}$. Since for each deserted index in $S_\gamma$, its removal may result
in the deletion of one edge, \[|M^*| \geq \epsilon n/2 - \epsilon n /9  >
\epsilon n/3.\]

Consider an edge $(x,y) \in M^*$ such that $x < y$.
Every index $z \in S \cap [x,y]$ is such that either $(x,z)$ or $(z,y)$ violates monotonicity. 
Let $\mathsf{maj}(x,y)$ denote the element $a \in \{x,y\}$  that violates monotonicity with at least half of the indices in $S \cap [x,y]$, where we break ties arbitrarily.

Consider an execution of Algorithm~\ref{alg:nonadaptive-monotonicity-tester}.
Let $F_{[x,y]}$ denote the event that the index $\mathsf{maj}(x,y)$ is chosen as the search index $s$ in Step~\ref{step:search-point}. 

First, consider the case that $y-x \leq 1+ 20/\epsilon$. Conditioned on $F_{[x,y]}$, both $x$ and $y$ are queried in Step~\ref{stp:sequence-queries}, and hence, Algorithm~\ref{alg:nonadaptive-monotonicity-tester} rejects with probability $1$. 

Next, consider the case that $y - x > 1+ 20/\epsilon  = 1+ 2/ \gamma$. 
Conditioned on $F_{[x,y]}$, the algorithm ends in the index $\mathsf{maj}(x,y)$, and hence, the indices $x,y$ are \emph{separated by a pivot}. 
Consider the last interval $I'$ in which $x$ and $y$ were together. 
Clearly, $[x,y] \subseteq I'$.
As $x,y$ are separated, the pivot in $I'$ must have been chosen from the interval $[x,y]$.
Note that since neither $x$ nor $y$ is
$\gamma$-deserted, there are at least $\gamma (y-x) > 2$ indices in $S \cap [x,y]$.
By definition, every index in $S \cap [x,y]$ is nonerased.
Since each index in $[x,y]$ is equally likely to be chosen as a pivot, an element in $S \cap [x,y]$ that forms a violation with $\mathsf{maj}(x,y)$ is chosen as the pivot for $I'$ with probability at least $\gamma/2$. 

We conclude that the probability of Algorithm~\ref{alg:nonadaptive-monotonicity-tester} finding a violation to monotonicity is at least
\[\sum_{(x,y) \in M^*} \Pr[F_{[x,y]}] \cdot \Pr[\text{tester finds a violation}|F_{[x,y]}] \geq \frac{\epsilon}{3} \cdot
\frac{\epsilon}{20}.               ~ ~ \qed        \]  

\begin{proof}[Proof of Theorem~\ref{thm:nonadaptive-erasure-resilient-monotonicity-test}]
Algorithm~\ref{alg:nonadaptive-monotonicity-tester} clearly accepts
every monotone array with probability $1$. Further, its expected query
complexity is at most $b\log n + \frac{40}{\epsilon}$, where $b\log n$, for an absolute constant $b$,
is the expected depth a random binary tree. We can terminate and accept if
the algorithm has not detected a violation to monotonicity after, say $10(b\log n + \frac{40}{\epsilon})$ queries. 
This is done to ensure that the
worst-case query complexity is $O(\log n + \frac{1}{\epsilon})$ while maintaining the $1$-sidedness of error and the success probability as
  $\Omega(\epsilon^2)$. Repeating
  Algorithm~\ref{alg:nonadaptive-monotonicity-tester} for
  $O(1/\epsilon^2)$ times implies the result.
\end{proof}

\section{Parameterized and Nonadaptive Algorithms for LIS Estimation}\label{sec:nonadaptive-LIS-algorithms}
Our final goal in this paper is to present sublinear-time approximation
algorithms that, for an array of
length $n$ containing at most $r$ distinct values, 
approximates the length of the LIS within a bounded 
additive error (Theorem~\ref{thm:O(r)}) 
or a bounded multiplicative error (Theorem~\ref{thm:O(root-r)}). 

\ignore{We extensively use ideas from
\cite{RubinsteinSSS19} and~\cite{SaksS10}. In both works, an
approximation of the LIS is done for the general case that
$r=n$ and the guarantees are with
high probability. The relevant result in~\cite{RubinsteinSSS19} is a
$\tilde{O}(\sqrt{n})$ nonadaptive algorithm that, for an array for which
  the LIS is of size larger than $\lambda n,$ outputs an
  approximation $\tilde{L}$ such that $c_1 \cdot \lambda^3 \leq \tilde{L} \leq
  c_2 \cdot |\text{LIS}|$ for some universal constants $c_1, c_2$. 
  The result in \cite{SaksS10} is an adaptive
  $(1+\epsilon)$-approximation algorithm of the LIS that makes
  $\text{poly}\log(n)$ queries.

  We first present a simple {\em nonadaptive} algorithm that, for an
  array with at most $r$ distinct values, makes 
$\tilde{O}(r/\epsilon^3)$ queries and approximates the length of the LIS with an
additive error bounded by $\epsilon n$. Hence e.g., for
$r=n^{1/3}$ it make $\tilde{O}(n^{1/3})$ nonadaptive queries and
makes an arbitrary small additive error. This is the only known nonadaptive
sublinear query algorithm, that for $r = o(n)$, achieves arbitrary small
additive linear error. Furthermore, for $r< \text{poly}\log n$ this is better than
the known adaptive algorithm, and it bridges the gap to the constant
range case, in which there is a trivial approximation algorithm. Moreover, 
our algorithm is nonadaptive, and makes uniformly distributed, independent queries.  

Next, we use the above algorithm to design an
$\tilde{O}(\sqrt{r})$-query nonadaptive algorithm that, for an
array in which the LIS is of
size larger than $\lambda n$ outputs an approximation $\tilde{L}$ such
that $c_1\cdot \lambda \leq \tilde{L} \leq
  c_2 \cdot |\text{LIS}|$ for universal constant $c_1, c_2$. In particular, for the general
  case ($r=n$), it improves the result of
  \cite{RubinsteinSSS19}. }

\subsection{Structural Lemmas and a $\tilde{O}(r)$-Query Nonadaptive Algorithm}\label{sec:O(n)}

We first prove some simple structural 
claims about the restrictions of an LIS to subarrays in an array. 
Consider an array $A$ of length $n$ containing at most $r$ distinct values.
We use $\mathcal{L}$ to denote the set of {\em indices} in an 
arbitrary {\em fixed} LIS of $A$. We refer to the array values
at the indices of $\mathcal{L}$ as the values of $\mathcal{L}$.
Let $t < n$ be a parameter, we partition $[n]$ into $t$
contiguous intervals of equal length $[n]  = \bigcup_{i=1}^t I_i$. This
defines the corresponding $t$ subarrays 
 $A_i, ~ i=1, \ldots ,t$ where $A_i$ is the subarray restricted to $I_i, ~ i=1, \ldots ,t$.
Given a parameter $t' \geq 1$, a subarray $A_i$ is said to be $t'$-\emph{nice}
if the number of distinct values taken by the
indices in $\mathcal{L} \cap A_i$ is 
at most $t'$. 

\begin{claim}\label{cl:15}
  At least $|\mathcal{L}| - nr/(tt')$ indices in $\mathcal{L}$ belong
  to $t'$-nice subarrays of $A$.
\end{claim}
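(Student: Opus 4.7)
The plan is to bound the number of $\mathcal{L}$-indices lying in non-$t'$-nice subarrays by $nr/(tt')$. The key structural observation is that since $\mathcal{L}$ is nondecreasing, for each distinct value $v$ used by $\mathcal{L}$, the set of positions at which $\mathcal{L}$ takes the value $v$ forms a contiguous block in $[n]$. Let $v_1 < v_2 < \cdots < v_k$ denote the distinct values used by $\mathcal{L}$ (so $k \leq r$); these partition the $\mathcal{L}$-positions into $k$ value-blocks, creating $k-1$ \emph{transitions} between consecutive blocks.

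For each subarray $A_i$, let $d_i$ denote the number of distinct LIS values appearing in $\mathcal{L} \cap A_i$. I would first show that $A_i$ contains exactly $d_i - 1$ of the global transitions. Indeed, if two consecutive distinct values $w < w'$ used in $\mathcal{L} \cap A_i$ were not globally consecutive, then any intermediate $\mathcal{L}$-value $v^*$ with $w < v^* < w'$ would have its entire position-block sandwiched between positions of $w$ and $w'$; since $A_i$ is a contiguous interval containing positions of both $w$ and $w'$, it would also contain all positions of $v^*$, contradicting $v^* \notin \mathcal{L} \cap A_i$. Next, I would observe that each global transition lies in at most one subarray: if $A_i$ contains positions of both $v_j$ and $v_{j+1}$, then $A_i$ must contain the specific points $\max\{x : A[x] = v_j, x \in \mathcal{L}\}$ and $\min\{x : A[x] = v_{j+1}, x \in \mathcal{L}\}$, each of which belongs to a unique subarray in the partition. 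Combining these two facts yields $\sum_i (d_i - 1) \leq k - 1 \leq r - 1$.

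Let $B$ denote the set of non-$t'$-nice subarrays, so that $d_i \geq t' + 1$ for every $i \in B$. Then
\[
|B|\cdot t' \;\leq\; \sum_{i \in B}(d_i - 1) \;\leq\; r - 1 \;<\; r,
\]
which gives $|B| < r/t'$. Since each subarray has length $n/t$, the number of $\mathcal{L}$-indices in non-nice subarrays is at most $|B|\cdot (n/t) < nr/(tt')$, proving the claim. I do not anticipate a serious obstacle; the only mildly delicate step is the ``no double counting'' argument for transitions, but it follows cleanly from the contiguity of value-blocks inside $\mathcal{L}$ and the disjointness of the subarrays in the partition.
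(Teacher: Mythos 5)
Your proof is correct and takes essentially the same route as the paper's, whose terse argument is simply that $\mathcal{L}$ can share at most one value between distinct subarrays (since $\mathcal{L}$ is nondecreasing and the $A_i$ are contiguous), hence at most $r/t'$ subarrays fail to be $t'$-nice. Your transitions-based counting is a more carefully justified version of that same bound.
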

\begin{proof}
 $\mathcal{L}$ has at most one value (for the reader, note; value -- not indexes) in common in distinct
 subarrays, hence there are at most $r/t'$ subarrays $A_i, i=1, \ldots, t$
 for which $\mathcal{L} \cap A_i$ takes more than $t'$ values. 
  Thus, the number of indices in $\mathcal{L}$ that are ``lost to" subarrays that are not $t'$-nice is at most $r/t' \cdot (n/t)$. 
\end{proof}

For $\lambda \in (0,1]$, a subarray 
$A_i, i \in [t]$ is $\lambda$-\emph{dense} if $|\mathcal{L} \cap A_i|
\ge \lambda |A_i|$, and it is $\lambda$-{\em sparse} otherwise.  
Let $\mathcal{L}_{t', \lambda}$ denote the
restriction of $\mathcal{L}$ to $\lambda$-dense $t'$-nice subarrays of $A$. 

\begin{claim}
  \label{cl:17}
        At least $|\mathcal{L}| - nr/(tt') - \lambda n$ indices in
        $\mathcal{L}$ belong to $t'$-nice $\lambda$-dense subarrays
        $A_i, ~ i \in [t]$.
\end{claim}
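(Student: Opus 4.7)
The plan is to combine Claim~\ref{cl:15} with a simple bound on how many indices of $\mathcal{L}$ can be lost to $\lambda$-sparse subarrays. The key observation is that the two ``bad'' phenomena (being non-nice and being sparse) can be accounted for independently by a union bound on the index sets.

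More concretely, I would first recall from Claim~\ref{cl:15} that the number of indices of $\mathcal{L}$ that lie in subarrays that fail to be $t'$-nice is at most $nr/(tt')$. Next, I would bound the number of indices of $\mathcal{L}$ lying in $\lambda$-sparse subarrays. By definition each subarray $A_i$ has length $n/t$, so a $\lambda$-sparse subarray contributes strictly fewer than $\lambda \cdot n/t$ indices to $\mathcal{L}$. Since the total number of subarrays is $t$, even if every subarray were $\lambda$-sparse, the total contribution of sparse subarrays to $\mathcal{L}$ would be at most $t \cdot \lambda \cdot n/t = \lambda n$.

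The conclusion then follows from a union bound on the indices of $\mathcal{L}$ that are excluded: an index of $\mathcal{L}$ that does not belong to a $t'$-nice $\lambda$-dense subarray must either lie in a non-nice subarray or in a sparse one. Summing the two bounds gives at most $nr/(tt') + \lambda n$ excluded indices, so at least $|\mathcal{L}| - nr/(tt') - \lambda n$ indices of $\mathcal{L}$ remain in $t'$-nice $\lambda$-dense subarrays, as claimed.

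There is essentially no obstacle here; the whole content is the pigeonhole-style argument for Claim~\ref{cl:15}, which is already in place, plus the trivial counting observation that sparse subarrays cannot collectively contain more than $\lambda n$ indices of $\mathcal{L}$. The only care needed is to ensure the two bad events are counted on disjoint (or at worst overlapping) subsets of $\mathcal{L}$ and combined via a union bound, which is immediate.
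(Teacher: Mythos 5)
Your proof is correct and follows essentially the same route as the paper: invoke Claim~\ref{cl:15} to account for the non-nice subarrays, observe that each $\lambda$-sparse subarray contributes fewer than $\lambda n/t$ indices of $\mathcal{L}$ so that all of them together contribute at most $\lambda n$, and combine the two losses. The only cosmetic difference is that the paper bounds the contribution of nice-but-sparse subarrays while you bound all sparse subarrays, but both give the same $\lambda n$ term.
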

\begin{proof}
  A $t'$-nice subarray that is not $\lambda$-dense
  contributes at most $\lambda n/t$ to the size of
  $\mathcal{L}$, and so in total, all non-dense $t'$-nice subarrays contribute at
  most $\lambda n$ to size of $\mathcal{L}$. 
        The claim follows, using Claim~\ref{cl:15}.
\end{proof}

\subsubsection{$\tilde{O}(r)$-Query Nonadaptive Algorithm}

The main idea is to use large enough $t$, so that most subarrays
$A_i, ~ i \in [t]$ are $1$-nice and $\lambda$-dense for an appropriate value of
$\lambda$. Then finding the LIS in each subarray is easy, and leads to a good
approximation of the LIS size of $A$.

The formal description of our algorithm is given as Algorithm~\ref{alg:O(n)}. The
algorithm gets oracle access to an array $A$ of length $n$, and gets as inputs, the error parameter $\epsilon \in (0,1)$, and
an upper bound $r$ on
the number of distinct values in $A$.
Informally, we divide the array into $t =4r/\epsilon$ subarrays.
This will make most dense subarrays $1$-nice with respect to a fixed
(unknown) LIS $\mathcal{L}$ (for an appropriate density parameter).
We then sample $O(\log r)$ indices in each
subarray to find the values that are `typical' in each subarray.

Next, our goal is to output as an estimate for $|\mathcal{L}|$, the size of $\mathcal{L}'$ which is the restriction of
$\mathcal{L}$ to such typical values. This will naturally be an
underestimate, but with a small additive error. To estimate the
size of $\mathcal{L}'$, we consider all possible increasing sequences
of the typical values, taking one value from each subarray. Since most
subarrays are $1$-nice, there exists such a sequence of typical values such that the size of an LIS restricted to that sequence is quite close to $|\mathcal{L}'|$. Finally, for a given
$1$-nice subarray $A_i$, the largest subsequence in $A_i$ that takes
one given value $v$ can be easily determined - this is just the distance
to the constant function of value $v$. The details follow.

For a value $v \in \R$ and a set of indices $S
\subseteq [n]$, we denote by $\mathsf{den}(v,S)$, the
density of $v$ with respect to $S$ -- namely, $\mathsf{den}(v,S) = |\{i \in S|~
  A[i]=v\}|/|S|$. We say that $v$ is $\nu$-typical with respect to $S$ if
$\mathsf{den}(v,S) \geq \nu$.

\begin{algorithm}
  \caption{$\tilde{O}(r/\epsilon^3)$ LIS estimation algorithm}
  \begin{algorithmic}[1]
    \Require parameter $\epsilon > 0$; oracle access to array $A$ of length $n$; upper bound $r$ on the number of distinct values in $A$
    \State $\delta \gets \epsilon/8$, $t \gets 4r/\epsilon$
    \State Divide the array into $t$ equal-length subarrays $A_1, A_2, \dots, A_{t}$.
    \For {$i \in [t]$}
    \State {\bf find typical values in each subarray as follows: } Sample uniformly at
                random, and
                independently, a (multi)set of indices
                $S_i$ in $A_i$, of size $s=\frac{1}{\delta^2} \log (6tr)$. Make
                queries to the array values at indices in $S_i$. \label{stp:r-algo-index-sampling}
    \State\label{step:V_i}   Let $S_i^* \subseteq S_i$ contain all indices
                $j \in S_i$ for which $A[j]$ is $(\frac{\epsilon}{4} - \frac{\delta}{2})$-typical with
                respect to $S_i$. 
     Let $V_i^*$ be the corresponding set of array values. \label{stp:r-algo-Vi-formation}
      \EndFor
  \vspace{0.6cm}              
          \State \Comment{\textsf{For a subset $P \subseteq [t],$ of subarrays, we call a sequence of values
                    $(v_i)_{i \in P}$ a {\em pseudo-solution} if
                    $(1)~v_i \in V_i^*~\text{for}~i\in
                    P,~\text{and}~(2)~v_i \le v_j~\text{for}~i<j$}.}
\vspace{0.6cm}                
  \State  For each $~P \subseteq [t]$ and {each pseudo-solution $(v_i)_{i \in P}, $} \label{stp:pseudo-solution-iteration}
    \State\label{step:output} ~ ~ Compute $\sum_{i \in P} \mathsf{den}({v_i,
                  S_i})$. 
    \State Output $\hat{\mathcal{L}}$ to be the maximum
                over all pseudo-solutions of the value obtained in Step \ref{step:output}.
  \end{algorithmic}
  \label{alg:O(n)}
\end{algorithm}

\begin{lemma}
With probability at least $2/3$, the estimate 
output by
Algorithm~\ref{alg:O(n)} is
within $\pm \epsilon n$ of the true length of the LIS in the array
$A$. 
Further, the algorithm is nonadaptive, making
$\tilde{O}(r/\epsilon^3)$ queries.
\end{lemma}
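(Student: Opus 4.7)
The plan is to separate the lemma into two parts: the query complexity bound, which reduces to a direct count, and the additive-error guarantee, for which the engine will be a single concentration inequality combined with the structural Claims~\ref{cl:15} and~\ref{cl:17}. For the query count, each of the $t = 4r/\epsilon$ subarrays contributes $s = \Theta(\log(tr)/\delta^2) = \Theta(\epsilon^{-2}\log(r/\epsilon))$ queries, for a total of $\tilde{O}(r/\epsilon^3)$; moreover these queries are uniformly and independently distributed by construction of Step~\ref{stp:r-algo-index-sampling}.

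For correctness, I would first set up the good event $E$ that for every $i \in [t]$ and every value $v$ in the range of $A$,
\[
|\den(v, S_i) - \den(v, A_i)| \le \delta/2.
\]
By Hoeffding applied to the i.i.d.\ Bernoulli indicators $\mathbf{1}[A[j]=v]$ for $j \in S_i$, each single event fails with probability at most $2\exp(-s\delta^2/2)$, so with $s$ chosen so that this is $\ll 1/(tr)$, a union bound over the at most $tr$ (value, subarray) pairs gives $\Pr[E] \ge 2/3$. All subsequent analysis is conditional on $E$. I will interpret the output of Step~\ref{step:output} as the natural LIS estimate, i.e.\ $\sum_{i \in P}\den(v_i,S_i)\,|A_i| = (n/t)\sum_{i \in P}\den(v_i,S_i)$; the factor $|A_i|$ is the only sensible scaling in the proof.

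Next I would prove the no-overestimate direction. Fix any pseudo-solution $(v_i)_{i \in P}$. For each $i \in P$, the set of indices in $A_i$ whose value equals $v_i$ forms a constant (hence non-decreasing) subsequence of $A_i$. Because the $v_i$'s are non-decreasing along $P$ and the subarrays are ordered, the union of these sets over $i \in P$ is a non-decreasing subsequence of $A$, so its cardinality is at most $|\cL|$. Using $E$, the algorithm's value on this pseudo-solution is at most $\sum_{i \in P}(\den(v_i,A_i)+\delta/2)|A_i| \le |\cL| + (\delta/2)n \le |\cL| + \epsilon n$. For the no-underestimate direction I would exhibit one specific pseudo-solution of large value: apply Claim~\ref{cl:17} with $t'=1$ and $\lambda = \epsilon/4$ to see that at least $|\cL| - \epsilon n/2$ indices of $\cL$ lie in subarrays $A_i$ that are both $1$-nice and $(\epsilon/4)$-dense. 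For any such $i$, let $v_i^\star$ be the single value taken by $\cL\cap A_i$; then $\den(v_i^\star,A_i) \ge \epsilon/4$, so conditional on $E$ we have $\den(v_i^\star,S_i) \ge \epsilon/4 - \delta/2$ and thus $v_i^\star \in V_i^\star$. The sequence $(v_i^\star)$ over these subarrays is non-decreasing (it is read off from $\cL$) and therefore a legal pseudo-solution. Its value is at least $\sum_i(\den(v_i^\star,A_i)-\delta/2)|A_i| \ge (|\cL|-\epsilon n/2) - (\delta/2)n \ge |\cL| - \epsilon n$.

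The main obstacle is really a bookkeeping one: the three parameters $\delta$, the density threshold $\epsilon/4$ in Claim~\ref{cl:17}, and the typicality cutoff $\epsilon/4 - \delta/2$ in Step~\ref{stp:r-algo-Vi-formation} must be aligned so that (i) any value that is $(\epsilon/4)$-dense in $A_i$ is, under $E$, guaranteed to land in $V_i^\star$, and (ii) the two $O(\delta n)$ sampling errors and the $O((\lambda + r/(tt'))n)$ structural slack together fit under $\epsilon n$. The choices $\delta = \epsilon/8$, $t = 4r/\epsilon$, $t'=1$, $\lambda = \epsilon/4$ work with room to spare, and this is the only place where one must be careful; the rest of the argument is a direct combination of Hoeffding, Claim~\ref{cl:17}, and the two observations above.
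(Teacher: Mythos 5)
Your proof is correct and follows essentially the same route as the paper: you define the same good event $E$ via Hoeffding plus a union bound over $(i,v)$ pairs, apply Claim~\ref{cl:17} with $t'=1$, $\lambda=\epsilon/4$ to show that the "nice and dense" pseudo-solution is a valid witness (this is precisely the paper's Claim~\ref{clm:lis-sufficiently-sampled}), and use the increasing-subsequence observation for the no-overestimate direction. The one thing you add that the paper glosses over is making the scaling of the output by $|A_i| = n/t$ explicit, which is a genuine (if minor) imprecision in the paper's Step~\ref{step:output}.
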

\begin{proof} 
Let $t, s$ and $\delta$ be as defined in
Algorithm~\ref{alg:O(n)}.
Let $E$ denote the event that 
for all $i \in [t]$ and $v \in \R,$ $|\mathsf{den}(v,S_i) -
  \mathsf{den}(v,A_i)| \leq
  \delta/2$.
\begin{claim}\label{clm:sampling-only-heavy-values}
$\Pr(E) \geq 5/6$.
\end{claim}
\begin{proof}
  Consider an arbitrary subarray $A_i$ and a value $v \in
  \mathbb{R}$. The expected value of $\mathsf{den}(v,S_i)$ is equal to
  $\mathsf{den}(v,A_i)$ and hence, by a Hoeffding bound, with probability at most $2\exp(-\delta^2s/12)= \frac{1}{6tr}$, we have $|\mathsf{den}(v,S_i) -
  \mathsf{den}(v,A_i)| \geq
  \delta/2$. Using the union bound over the $t$ distinct
  subarrays and possibly all $r$ values, the difference above is
  bounded for each value in each subarray.
      \end{proof}
Let $V_i^*$ be as defined in Step~\ref{step:V_i} of the algorithm. 
We note that the event $E$ implies that $V_i^*$
contains every $\frac{\epsilon}{4}$-typical value with respect to
$A_i$, and contains no value that is not $(\frac{\epsilon}{4} -
          \delta)$-typical with respect to $A_i$.
 Henceforth, we condition on $E$.
By this conditioning, and the setting of $\delta$, for each $i \in [t],$ we have $|V_i^*|\leq \frac{1}{(\epsilon/4) - \delta}
\leq 8/\epsilon$. 
Hence, the total number of pseudo-solutions 
is at most $\left(\frac{8}{\epsilon}\right)^t$ . 
The following claim is immediate. 
\begin{claim}\label{clm:lis-sufficiently-sampled}
Conditioned on $E$, for every $\epsilon/4$-dense $1$-nice 
subarray $A_i$, the value at the 
indices $\mathcal{L} \cap A_i$ is included in $V_i^*$. \qed
\end{claim}

Let $P \subseteq [t]$ and a corresponding pseudo-solution
$\mathcal{V}= (v_i)_{i \in P}$ (see
Algorithm~\ref{alg:O(n)}). 
An extension of $\mathcal{V}$ is the set 
$\mathcal{I}_{\mathcal{V}} = \bigcup_{i \in P}\{j \in A_i: ~ A[j] = v_i\}$.
Note that for every pseudo-solution 
$\mathcal{V}$, its extension $\mathcal{I}_{\mathcal{V}}$ is an increasing sequence in $A$.
We denote by $\mathcal{L}_{1,\epsilon/4}$, the restriction of the fixed LIS $\mathcal{L}$ to $1$-nice
$\epsilon/4$-dense subarrays.
Claim~\ref{clm:lis-sufficiently-sampled} implies that one of the
pseudo-solutions that Algorithm~\ref{alg:O(n)} considers in
Step~\ref{stp:pseudo-solution-iteration} is such that its
 extension corresponds to $\mathcal{L}_{1,\epsilon/4}$.
Claim~\ref{cl:17} and the setting of $t$ in
Algorithm~\ref{alg:O(n)}
imply that for such a restriction,  
$|\mathcal{L}_{1,\epsilon/4}| \ge |\mathcal{L}| - (\epsilon n/2)$.

\begin{claim}\label{clm:correct-estimation-of-length}
Conditioned on $E$, with probability at least $5/6$, the estimate $\hat{\mathcal{L}}$ output by Algorithm~\ref{alg:O(n)} is within $\pm \epsilon n$ of the length of LIS in the array.
\end{claim}
\begin{proof}
  Consider an arbitrary pseudo-solution $\mathcal{V}= (v_i)_{i
          \in P}$ for some $P \subseteq [t]$, and its corresponding extension.
Conditioned on $E$, it is clear that the cardinality of the extension $\mathcal{I}_{\mathcal{V}}$ of $\mathcal{V}$ is estimated within $\pm \epsilon n/16$ of its true cardinality.
In particular, since Algorithm~\ref{alg:O(n)} considers a pseudo-solution whose extension corresponds to $\mathcal{L}_{1,\epsilon/4}$, 
we have that $\hat{\mathcal{L}} \ge |\mathcal{L}_{1,\epsilon/4}| - \epsilon n / 16$,
which is at least $|\mathcal{L}| - \epsilon n$. 
Moreover, since the extension of every pseudo-solution considered by the algorithm
is an increasing sequence, we also have that $\hat{\mathcal{L}} \le |\mathcal{L}| + \epsilon n /16 \le |\mathcal{L}| + \epsilon n $.
The claim follows.
\end{proof}
\noindent The statement of the lemma follows by applying a union bound to Claim~\ref{clm:sampling-only-heavy-values} and Claim~\ref{clm:correct-estimation-of-length}.
Finally, it is clear from the description that the query 
complexity of our algorithm is $\tilde{O}(r/\epsilon^3)$, and no query
depends on the values of previously made queries. 
 \end{proof}
\begin{remark}
All the above guarantees hold even if Algorithm~\ref{alg:O(n)} were to first sample a set $S$ of $2t\frac{\log(6tr)}{\delta^2}$ indices uniformly and independently at random from the whole array and let $S_i$ be equal to $S \cap A_i$.
The additional step in the analysis is to show that $|S_i|$ is at least $\frac{\log (6tr)}{\delta^2}$, for each $i \in [t]$, which follows from an application of the Chernoff bound. 
In other words, the algorithm is not only nonadaptive, but also \emph{sample-based}, where a sample-based algorithm whose queries are to points that are sampled uniformly and independently at random.
\end{remark}
\begin{remark}
Algorithm~\ref{alg:O(n)} can also be made to estimate
  the length of the LIS on subarrays obtained by either restricting
  the set of indices, or set of values, or both, of the original array. This will be
  used in the next section.
\end{remark}
\begin{remark}
The running time of the algorithm can be made to $\tilde{O}(r^2/\epsilon^3)$ by using dynamic programming in Steps~\ref{stp:pseudo-solution-iteration} and~\ref{step:output}.
\end{remark}

\subsection{$\tilde{O}(\sqrt{r})$-Query Nonadaptive Algorithm}\label{sec:sqrt}
Let $\mathcal{L}$ denote the set of points in an arbitrary and fixed LIS
in the input array $A$. For simplicity of the
presentation, we assume that our algorithm knows a
lower bound $\lambda n$ on $|\mathcal{L}|$. Disregarding this assumption, the algorithm
will output (w.h.p) a lower bound  estimate of the size of an increasing sequence in
$A$. If $\lambda n$ is indeed a bound as assumed, it will be guaranteed
that the estimate is within the multiplicative error that is
stated. Hence $\lambda$ can be checked by running the
algorithm, in parallel, for a geometrically decreasing sequence of
$\lambda$'s. The reader should think of $\lambda < 1$ as a small
constant (although the algorithm works for $\lambda = o(1)$ as well). 

Throughout this section, we visualize the array values as points in an
$r \times n$ grid $G_n$. 
The vertical axis of $G_n$ represents the range $R$ of the array and is labeled with the at most $r$
distinct array values in increasing order and the horizontal axis is labeled with the
indices in $[n]$. 
We refer to an index-value pair in the grid as a point.
The grid has $n$ points, to which we do not have direct access.

We divide the $r\times n$ grid $G_n$ into $y^*$ rows and $x$ columns that 
partitions $G_n$ into a $y^*  \times x$ grid $G'$ of boxes, where $y^* =
\Theta (\sqrt{r})$ and $x = \Theta (\sqrt{r})$ (both depend also 
on $\lambda$).
Specifically, we divide the interval $[n]$ into $x$ contiguous subarrays.
For $i \in [x]$, let $D_i$ denote the $i$-th subarray.
Additionally, we divide the range $R$ into $y^*$ contiguous intervals of array values,
where for $j \in [y^*]$, we use $I_j$ to denote the $j$-th interval when the intervals are 
sorted in the nondecreasing order of values.
The set of boxes in $G'$ is then $\{(I_j,D_i): ~
i \in [x], j \in [y^*]\}$.

The $y^* \times x$ grid of boxes $G'$ induces a poset $\langle \mathcal{P}, \preceq \rangle$ on the $y^* x$
boxes, which is similar to the natural poset defined on $G_n$. 
Namely, for two boxes
in $G'$ (or for two points in $G_n$), we have $(I_j,D_i)
\preceq (D_s,I_t)$ (or $(i,j)
\leq (s,t)$) if $i \leq s$ and $j \leq t$.
The points in $\mathcal{L}$ form a chain in the
above poset in $G_n$.  Further, each chain in the poset
in $G_n$ forms an increasing subsequence in the array $A$. The boxes in
$G'$ through which $\mathcal{L}$ passes also forms a chain in the poset in
$G'$. 

Every chain of boxes in the poset in $G'$ induces a number of chains
in the poset in $G_n$, but of possibly quite different lengths. 

\begin{definition}[Density of a box]\label{def:density}
Let $I \subseteq R$ be a subset of the range $R$ of values and $B$ be a subarray of $A$. The density of the box $(I,B)$, denoted by $\den(I,B)$, is defined to be
the fraction of indices in the subarray $B$ whose values belong to the interval $I$. 
\end{definition}
\noindent In other words, for each box $(I_j,D_i) \in
G',$ its density $\den(I_j,D_i)$ is the
fraction of points in the subarray $D_i$ that land in the box $(I_j,D_i)$.
For $\beta < 1$ (that the reader can think of as a small constant), a box
$(I_j, D_i)$ is said to be $\beta$-dense, if $\den(I_j,D_i) \geq \beta$.

\subsubsection{Forming the grid $G'$ of boxes}

Our goal in this section is to describe a procedure that determines
the grid $G'$ of boxes. Specifically, as we do not
know the range $R$ and only know an upper bound $r$ on its size
$|R|$, we start by forming an approximation of $R$ and an
approximation of the densities of subinterval ranges in $R$ in the
array $A$. 
To do this, we first partition $R$ into $\tilde{O}(\sqrt{r})$ sub-ranges called \emph{layers}.  
For the sake of generality, we describe the procedure for a subarray $B$ of the 
array $A$.

More generally, given a subarray $B$, and a parameter $y$, our goal is to partition the range $R$ into roughly $y$ intervals of roughly
  equal densities, where the densities are with respect to $B$. 
  We note that although
  the size $r$ of the range $R$ might be relatively large, it is
  possible that some values appear in $B$ much more frequently
  than
  others. One of our goals is to identify such values and well-approximate their densities. 
  We now define a `nice' partition as follows. Given $y$ and $B$, a
  nice $y$-partition of the values in $B$ is a partition $R = \cup_{i=1}^{y^*} I_i$, if
   for
  each $i \in [y^*],$ either $I_i$ contains only one value $v_i$ and
  $\den(I_i,B) \geq  \frac{1}{2y}$, or $\den(I_i,B) \leq \frac{2}{y}$.
  In the former case,
  we call $I_i$ a single-valued layer. In the latter,
  we say that $I_i$ is
  a multi-valued layer (although in an extreme case it might
  contain only one value).  
  We also require that $y^*
  \leq 2y$.

  Next, we describe our procedure $\textsc{Layering}(B,y,t)$ that forms a
  $y$-nice partition of a subarray $B$, along with a good approximation of the densities of the
  single-valued layers. 
  This is quite technical, although standard. We
  advise the reader to avoid it on first reading, and assume that, when needed, we have a nice
  partition along with a good approximation to the densities of layers.

  \begin{algorithm}[h]
\begin{algorithmic}[1]
\Procedure{Layering}{$B,y,t$}
\State \textsf{Goal: To divide the set of array values in the subarray $B$ into
  roughly $y$ contiguous intervals of roughly equal densities. The parameter $t$ is
  used to control the success probability.}
\State Sample a set of $\ell = t\cdot y \log y$ indices
$S$ from $B$, uniformly at random and independently. 
\Comment{{\sf Note that a value $v$
is expected to appear  in proportion to its density in the
array. Hence the collection of values obtained is a multiset of size $\ell$.}}

\State We sort the multiset of values $V = \{B[p]: ~ p \in S \}$ to form a strictly increasing sequence $\mathsf{seq}
= (v'_1 <
 \ldots < v'_q)$, where with each $i \in [q]$ we associate a weight $w_i$
 that equals the multiplicity of $v_i'$ in the multiset $V$ of
 values. \Comment{{\sf Note that
 $\sum_{i \in [q]} w_i = \ell$.}}
\State We now partition the sequence $W = (w_1, \ldots ,w_q)$ into maximal
disjoint contiguous subsequences $W_1, \ldots W_{y^*}$ such that for each $j \in [y^*]$, either $\sum_{w \in
  W_j} w <  2t \log y$, or $W_j$ contains only one member $w$ for
  which $w > t \log y$. 

  Note that this can be done greedily as follows. 
If $w_1 > t\log y$ then $W_1$ will contain
only $w_1$, otherwise $W_1$ will contain the maximal subsequence
$(w_1, \ldots, w_i)$ whose
sum is at most $2t \log y$. We then delete the members of
$W_1$ from $W$ and repeat the process. 
For $i \in [y^*]$, let $w(W_i)$ denote the total weight in $W_i$.

Correspondingly, we obtain a
partition of the sequence $\seq$ of sampled values into at most $y^*$ subsequences $\{S_i\}_{i \in [y^*]}$. 
Some subsequences
contain only one value of weight at least $t \log y$ and are called \emph{single-valued}.  
The remaining subsequences are called \emph{multi-valued}.

For a subsequence $S_i$, let $\alpha_i = \min(S_i)$ and $\beta_i = \max(S_i)$.
Let $\beta_0 = -\infty$.
Note that $\alpha_i \leq \beta_i$  and $\beta_{i-1} < \alpha_i$ for all $i \in [y^*]$.

\State For $i \in [y^*]$, we associate with the subsequence $S_i$, an interval (layer) $I_i \subseteq
R$, where $I_i = (\beta_{i-1}, \beta_i]$, and an approximate density
$\widetilde{\den}(I_i,B) = w(W_i)/\ell$. 
\EndProcedure
\end{algorithmic}
\label{proc:layering}
\end{algorithm}

Let $\{I_i\}_{i = 1}^{y^*}$ be the
set of layers that are
created by a call to \textsc{Layering}$(B,y,t)$. 
Recall that $w(W_i) \ge t \log y$ if $I_i$ is a single-valued layer and $w(W_i) < 2t\log y$ if $I_i$ is multi-valued, where $W_i$ denotes the sum of multiplicities of the values in the sample $S_i$.

For a multi-valued layer  $I_i$,  let $E_i$ denote the event that
$\den(I_i,B) < \frac{4}{y}$. 
For a single-valued layer $I_i$ such that $\den(I_i, B) \ge \frac{1}{2y}$, 
let $E_i$ denote the event  that $\frac{\den(I_i, B)}{2} \leq \widetilde{\den}(I_i, B) \leq \frac{3}{2}\den(I_i,B)$. 
For a single-valued layer $I_i$ such that $\den(I_i, B) < \frac{1}{2y}$, let $E_i$ be the event that
$\widetilde{\den}(I_i,B) \leq \frac{3}{2}\den(I_i,B)$. 
Let $E = \bigcap_{i = 1}^{y^*} E_i$. 
The following claim asserts that the layering above well-represents
the structure of the range w.r.t.\ $B$.

\begin{claim}\label{clm:layering}
 \textsc{Layering}$(B,y,t)$ returns a collection of intervals
 $\{I_i\}_{i=1}^{y^*}$ such that,
 $y^* \leq 
    2y$, and $\Pr(E) = 1 - \exp(\Omega(-t))$.
\end{claim}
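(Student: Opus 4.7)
My plan is to prove the two assertions separately. The combinatorial bound $y^* \le 2y$ will follow from a direct analysis of the greedy construction, while the probability estimate on $E$ will be handled by a case analysis over the three flavors of layer events, each bounded by a Chernoff tail estimate together with a union bound.

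For the count bound, I observe that each single-valued block $W_j$ has total weight strictly greater than $t \log y$ by definition, and I claim that every multi-valued block $W_j$ other than possibly the last one has total weight at least $t \log y$. Indeed, the greedy process stops extending $W_j$ only when including the next weight would push the sum past $2 t \log y$; since every weight that is a candidate for inclusion in a multi-valued block is at most $t \log y$ (otherwise it would have been peeled off as its own single-valued block), the sum already accumulated must be at least $2 t \log y - t \log y = t \log y$. Summing these lower bounds and using $\sum_j w(W_j) = \ell = t y \log y$, I conclude that $y^* \le \ell/(t \log y) + 1 = y + 1 \le 2y$.

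For $\Pr[\neg E]$, the sample $S$ consists of $\ell$ i.i.d.\ draws, so for any fixed interval $I \subseteq R$ or any fixed value $v$, the number of samples landing in $I$ (respectively equal to $v$) is a sum of $\ell$ i.i.d.\ Bernoullis, and multiplicative Chernoff bounds apply. For a multi-valued layer $I_i$, by construction $w(W_i) \le 2 t \log y$, so $\widetilde{\den}(I_i, B) \le 2/y$; if its true density were at least $4/y$, the expected sample count in $I_i$ would be at least $4 t \log y$, and a factor-$1/2$ multiplicative deviation occurs with probability $\exp(-\Omega(t \log y))$. For a single-valued layer $I_i$ with $\den(I_i, B) \ge 1/(2y)$, the expected count is at least $(t \log y)/2$, and a two-sided Chernoff bound gives the required $(1 \pm 1/2)$-multiplicative concentration of $\widetilde{\den}$ around $\den$ with probability $\ge 1 - \exp(-\Omega(t \log y))$.

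The subtlest case is the third one, where the event $E_i$ as stated is unsatisfiable: a single-valued layer forces $\widetilde{\den}(I_i, B) > 1/y$, whereas the required bound $\widetilde{\den}(I_i, B) \le (3/2)\den(I_i, B) < 3/(4y)$ would force it to be strictly below $1/y$. So I must show that, with probability $1 - \exp(-\Omega(t))$, no value $v$ with $p_v := \den(\{v\},B) < 1/(2y)$ receives more than $t \log y$ samples. The Poisson-tail form of Chernoff yields $\Pr[c_v \ge t \log y] \le (e y p_v)^{t \log y}$, which is $\exp(-\Omega(t \log y))$ once $p_v \le 1/(c e y)$ for a suitable absolute constant $c$; values with $p_v$ in the narrow range $[1/(cey), 1/(2y))$ number at most $O(y)$ and admit a direct union bound with the same per-value tail. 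The main obstacle is therefore this last bookkeeping: making sure the union bound over potentially many low-density values does not destroy the $\exp(-\Omega(t))$ target, which is possible because the tail decays faster than any polynomial in $y$ once $t$ is a sufficiently large constant. Combining these tail bounds with a union bound over the at most $2y$ layers from the first two cases then yields $\Pr[\neg E] = \exp(-\Omega(t))$, as claimed.
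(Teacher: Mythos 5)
Your argument for the bound $y^* \le 2y$ has a gap. You claim that every multi-valued block $W_j$ other than possibly the last has weight at least $t\log y$, reasoning that the greedy stops only when the next weight would push the sum past $2t\log y$, and that this next weight is at most $t\log y$. But these two premises are in tension: the greedy also stops a multi-valued block when the next weight exceeds $t\log y$ and is therefore peeled off as its own single-valued block, and in that case the accumulated sum in $W_j$ is not bounded below by $t\log y$. Concretely, a light multi-valued block can be immediately followed by a heavy singleton, and there can be many such light blocks interspersed with heavy singletons, so your intermediate claim (and hence the clean bound $y^* \le y+1$) fails in general. The paper's argument is different and robust to this: it observes that every multi-valued block of weight less than $t\log y$ has a heavy neighbor on each side (the predecessor, if multi-valued, must have weight $> t\log y$ since adding the first weight of the light block, which is $\le t\log y$, would have exceeded $2t\log y$; the successor is a heavy singleton). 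Since there are at most $y$ heavy layers, light layers number at most $y+1$, and $y^* \le 2y$ (up to a $+1$). You should replace your counting argument with this sandwiching one.

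Your treatment of $\Pr(E)$ is sound and in fact more careful than the paper's. The paper simply invokes a ``standard multiplicative Chernoff bound'' and a union bound over $i \in [y^*]$, but does not address the point you correctly identify: for a single-valued layer $I_i$ with $\den(I_i, B) < 1/(2y)$, the event $E_i$ as written is deterministically false, because any single-valued layer has $\widetilde{\den}(I_i, B) > 1/y$ by construction while the event requires $\widetilde{\den}(I_i, B) < 3/(4y)$. Your reinterpretation — that $E$ can hold only if no such layer is produced, and that the real task is to show that, with probability $1 - \exp(-\Omega(t))$, no low-density value is sampled more than $t\log y$ times — is exactly the right way to make the claim rigorous. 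Your Poisson-tail bound plus the bookkeeping over the $O(y)$ intermediate-density values and the remaining very-low-density values, using the fast decay $(eyp_v)^{t\log y}$ and $\sum_v p_v = 1$, does close this gap once $t$ is a sufficiently large constant.
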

\begin{proof}
The total number of single-valued and  multi-valued layers of weight at least $t
\log y$     is at most $y$ since each contributed at least a weight of
$t \log y$ to the total weight of $t y \log y$.
  Other layers are multi-valued of
weight less than $t \log y$ and for each such layer $I_i$, it must
be the case that $I_{i-1}$ and $I_{i+1}$ are of weight at least $t\log
y$. It follows that altogether there are at most $y^* \leq 2y$
layers.

Finally, by a standard multiplicative Chernoff bound, one can see that for
an arbitrary $i \in [y^*]$, we have $\Pr[\bar{E_i}] \le \exp(-\Omega(t
\ln y)) = \exp(-t)/y$. Hence, a union bound over $i \in [y^*]$ implies the lower bound on
$\Pr(E)$.
\end{proof}
\ignore{
For an interval $I_i$ that is tagged as a multi-valued layer (less than $2/y$ fraction of the sample falls into it):
  \begin{align*}
  \Pr[\den(I_i,B) \ge \frac{4}{y}] = \Pr[w(W_i) < (1 - \frac{1}{2})\ell \cdot \den(I_i, B)] \leq \exp(-\Omega(\ell \den(I_i, B))) = \exp(-t\log y).
  \end{align*}
}

We now define the grid $G'$ of boxes as follows.
We first use the procedure \textsc{Layering} on the original
array, $B=A$, with parameters $y =
\frac{\sqrt{r}}{\epsilon}$ and $t=O(1)$, where
the value of $t$ is set to ensure a success probability of $99/100$ in Claim~\ref{clm:layering}. This defines the set of
$y^*$ layers that partitions $R$ as $R = \bigcup_{i \in y^*} I_i$.  
Next we partition $[n]$ into $x = \epsilon\cdot \sqrt{r}$ contiguous
intervals $D_1, \ldots D_x$ each of size $n/x$, which defines $G'$ as
the grid of boxes  $\{(I_j, D_i):~ (j,i)\in [y^*] \times [x]\}$ in the $r \times n$ grid, some of which may be
empty, while some may contain many points.

We
set $\beta = \epsilon^3 \lambda$.
Next, our goal is to find all the  $\beta$-dense boxes in $G'$ by
making $\tilde{O}(\sqrt{r})$ queries and then restrict our attention
only to  these boxes. As described in the introduction
to Section \ref{sec:sqrt},  this will not make the
LIS in this restricted array too short. This is made formal in the
following claim.


\begin{claim}
The number of points in $\mathcal{L}$ that belong to boxes that are not $\beta$-dense is at most $\beta n \cdot (1 + 2y/x)$.
\end{claim}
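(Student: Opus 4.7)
The plan is to bound the number of points of $\mathcal{L}$ in non-dense boxes by combining two observations: (i) $\mathcal{L}$ is a chain and therefore can visit only a limited number of boxes in $G'$, and (ii) each non-dense box, by definition, contains few points in total.

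First I would observe that since $\mathcal{L}$ is a chain in the poset on $G_n$, its image in $G'$ (the set of boxes through which it passes) is a chain in the poset $\langle \mathcal{P}, \preceq \rangle$ on $G'$. In an $x \times y^*$ grid, any chain of boxes, being monotone non-decreasing in both coordinates, can consist of at most $x + y^* - 1$ distinct boxes (it can move right or up at each step of a staircase). Using the bound $y^* \le 2y$ guaranteed by Claim~\ref{clm:layering}, the number of boxes visited by $\mathcal{L}$ is at most $x + 2y$.

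Next, since each subarray $D_i$ has length $n/x$, a box $(I_j, D_i)$ that is not $\beta$-dense contains fewer than $\beta \cdot n/x$ points of the array in total, and in particular fewer than $\beta \cdot n/x$ points of $\mathcal{L}$. Multiplying the per-box bound by the maximum number of boxes $\mathcal{L}$ can traverse yields an upper bound of
\[
(x + 2y) \cdot \frac{\beta n}{x} \;=\; \beta n \cdot \left(1 + \frac{2y}{x}\right),
\]
which is the claimed bound.

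The argument is essentially a one-line counting after the right structural observation, so I do not anticipate any real obstacle. The only subtle point to verify is that the ``staircase'' bound $x + y^* - 1$ for the maximum length of a chain in a product poset $[x] \times [y^*]$ applies correctly here; this follows immediately from the fact that along any chain, the sum of the two coordinates strictly increases at each step where the chain moves to a new box, and this sum ranges between $2$ and $x + y^*$.
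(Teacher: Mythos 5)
Your proof is correct and takes the same approach as the paper: bound the number of boxes the chain $\mathcal{L}$ traverses by $x + y^* \le x + 2y$, note that each non-dense box contributes at most $\beta n/x$ points of $\mathcal{L}$, and multiply. Your extra justification of the staircase bound $x+y^*-1$ via the strictly increasing coordinate sum is a nice touch but the argument is otherwise identical to the paper's.
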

\begin{proof}
The LIS $\mathcal{L}$ passes through at most $x+y^* \leq x+2y$ boxes, and each
box that is not $\beta$-dense contributes at most $\frac{\beta n}{x}$
points to $\cL$.
We conclude that  the boxes that are not $\beta$-dense contribute at most $(x+2y)
\frac{\beta n}{x}  \leq  \beta n \cdot (1 + 2y/x)$ points to the LIS.
\end{proof}

We do not know which boxes are $\beta$-dense. We approximate
this by sampling, and this is formally presented below as algorithm
\textsc{Gridding}. The algorithm assumes the partition of $[n]$ and of the range $R$
as above.
As before, $t = O(1)$ in this procedure can be set appropriately to ensure a large constant success probability.

\begin{algorithm}
\begin{algorithmic}[1]
\Procedure{Gridding}{$A,\{I_j\}_{j \in [y^*]}, \{D_i\}_{i \in [x]}, \beta$}
\For {$i \in [x]$}
\State Sample $\ell = t\cdot\frac{1}{\beta}\cdot\log(\frac{x}{\beta})$ indices from $D_i$ uniformly and independently at random.
\For {$j \in [y^*]$}
\State Label box $(I_j, D_i)$ as \emph{dense} if and only if the values
on at least $\frac{3}{4}\beta \ell$ points from the sample fall into
the box; namely, if for at least $\frac{3}{4}\beta \ell$ indices sampled from $D_i$,
the values are in $I_j$.
\EndFor
\EndFor
\EndProcedure
\end{algorithmic}
\end{algorithm} 


Let $D$ 
be the event that all $\beta$-dense boxes are tagged as \emph{dense} by the
procedure, and that every box that is not $\beta/8$-dense is
not tagged as dense.
\begin{claim}\label{clm:dense-box}
  $\Pr[D] \ge 1 - \exp(-\Omega(t))$.
\end{claim}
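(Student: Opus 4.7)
\medskip

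\noindent\textbf{Proof plan.} The natural plan is a two-sided multiplicative Chernoff bound on each box, followed by a union bound over the at most $x \cdot y^*$ boxes in $G'$. Fix a box $(I_j,D_i)$ and let $p = \den(I_j,D_i)$. The procedure samples $\ell = t\cdot \frac{1}{\beta}\log(x/\beta)$ indices uniformly and independently from $D_i$, and the count $X$ of samples whose value lies in $I_j$ is a sum of $\ell$ i.i.d.\ Bernoulli$(p)$ random variables, with expectation $\mu = p\ell$. We tag the box as dense iff $X \ge \frac{3}{4}\beta\ell$.

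The first step is the ``no false negative'' direction. Suppose $p \ge \beta$, so that $\mu \ge \beta\ell$. Applying the multiplicative Chernoff lower-tail bound with deviation parameter $1/4$, I expect
\[
\Pr\!\left[X < \tfrac{3}{4}\beta\ell\right] \;\le\; \Pr\!\left[X \le (1-\tfrac14)\mu\right] \;\le\; \exp\!\bigl(-\Omega(\mu)\bigr) \;\le\; \exp\!\bigl(-\Omega(\beta\ell)\bigr) \;=\; \exp\!\bigl(-\Omega(t\log(x/\beta))\bigr).
\]
The second step is the ``no false positive'' direction. Suppose $p < \beta/8$, so $\mu < \beta\ell/8$ and the target $\frac34\beta\ell$ exceeds $6\mu$. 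Applying the multiplicative Chernoff upper-tail bound in the regime $\delta \ge 5$ (using the convenient consequence $\Pr[X \ge (1+\delta)\mu] \le 2^{-(1+\delta)\mu}$), I get
\[
\Pr\!\left[X \ge \tfrac{3}{4}\beta\ell\right] \;\le\; 2^{-(3/4)\beta\ell} \;=\; \exp\!\bigl(-\Omega(t\log(x/\beta))\bigr).
\]
So every individual box is mis-classified with probability at most $\exp(-\Omega(t\log(x/\beta)))$.

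The final step is the union bound. By Claim~\ref{clm:layering}, $y^* \le 2y$, and by construction $xy \le r/\epsilon^2$, so the total number of boxes is $xy^* = O(r/\epsilon^2)$. Hence the probability that any box violates the correct classification is at most
\[
O\!\bigl(r/\epsilon^2\bigr)\cdot \exp\!\bigl(-\Omega(t\log(x/\beta))\bigr).
\]
Since $x/\beta = \sqrt{r}/(\epsilon^2\lambda)$, we have $\log(x/\beta) = \Omega(\log r)$ for $r$ not too small (the constant-$r$ case is trivially handled by absorbing into the $\Omega(\cdot)$), and thus $\log(xy^*) = O(\log(x/\beta))$. Consequently the failure probability simplifies to $\exp(-\Omega(t\log(x/\beta))) \le \exp(-\Omega(t))$, which is exactly the claim.

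I do not expect significant obstacles here: the only mildly delicate point is the false-positive bound, where one must be careful that $p$ is allowed to be arbitrarily small (potentially $0$), so the Chernoff argument has to be phrased in terms of the target $\frac34\beta\ell$ rather than multiplicatively in $\mu$. The clean $2^{-(1+\delta)\mu}$ form of the upper tail handles this uniformly, and in the boundary case $p=0$ the probability is trivially $0$. Everything else is a routine union bound using the size bound $y^*\le 2y$ supplied by Claim~\ref{clm:layering}.
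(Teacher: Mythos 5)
Your proof is correct: two-sided Chernoff per box plus a union bound over all boxes does give $\exp(-\Omega(t))$ with the stated choice of $\ell$, and the ``no false positive'' direction via the additive form $\Pr[X \ge a]\le 2^{-a}$ (valid when $a\ge 6\mu$) correctly handles boxes of arbitrarily small density, including $p=0$.

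The paper takes a slightly different, and somewhat more parsimonious, route in the union bound. Rather than union bounding over all $xy^*$ boxes, it observes that within each column $D_i$ there are at most $1/\beta$ boxes that are $\beta$-dense (these are the only boxes at risk of a false negative), and, symmetrically, the procedure tags at most $4/(3\beta)$ boxes per column as dense (since the samples assign mass at least $\tfrac34\beta\ell$ to each tagged box, these are the only boxes at risk of a false positive). So the paper's union bound is over $O(x/\beta)$ boxes in each direction, and the per-box failure probability $\exp(-\Omega(t\log(x/\beta)))$ then kills this count immediately, with no further appeal to the specific values of $x$, $y^*$, or $\beta$. Your version has to verify the extra inequality $\log(xy^*) = O(\log(x/\beta))$, which happens to be true for the paper's parameter choices but is not automatic; the paper's counting-based restriction of the union bound sidesteps that dependence entirely. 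The conclusions match, so this is a stylistic rather than substantive difference, but the paper's argument is a bit more robust to how $x$, $y^*$, and $\beta$ are set.
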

\begin{proof}
 Fix $i \in [x]$. The number of $\beta$-dense boxes of the form
 $(\cdot, D_i)$ is at most $1/\beta$. 
 A direct application of the Chernoff bound and union bound over the $\beta$-dense boxes shows that, with probability
 at least $1-\exp(-t)/x$,
 every $\beta$-dense box is correctly tagged as dense by the 
 algorithm.
 Using the union bound over $i \in [x]$ ends the proof.
The fact that no box with density smaller than $\beta/8$ is tagged as dense also follows by a simple Chernoff bound, followed by a union bound, where the union bound is over the at most $4x/3\beta$ boxes that are tagged dense by the procedure.
\end{proof}

From now on, we assume that we have the grid $G'$ for which the events
$E$ and $D$ hold. This is the initialization of our data structure as described
in the introduction to Section~\ref{sec:sqrt}.
We now refine the data structure as follows. 

A $\beta$-dense box may have density that is anything in
$[\beta,1]$. For a better approximation guarantee, we need to identify the regions with density nearly equal to $\beta$. 
To achieve this, we perform the following finer layering using the
procedure \textsc{Layering} on each dense box. 

\medskip

\noindent
{\bf Finer layering of each dense box:}
For each $i \in [x]$, call \textsc{Layering} on the array $D_i$ with
$y = 1/\beta$ and $t' = \Theta(\log (x/\beta))$. In this case,  we do not collapse
the single-valued intervals into a single layer, but rather just leave them as different layers of
the same value and density $\beta$. 

Let $\{I'_{k}\}_{k \in [y^*_i]}$ be the set of intervals returned by the procedure. 
We restrict our attention to the boxes $(I'_{k},D_i)$ that are contained in some $\beta$-dense box $(I_j, D_i)$,
and refer to them as $\beta$-dense cells. 

Fix $D_i$.
The number of $\beta$-dense cells in $D_i$ is at most $2y = 2/\beta$.
Claim~\ref{clm:layering} asserts that with probability $1-
\exp(-\Omega(t'))= 1- \frac{\beta}{100 x}$ each $D_i$ is layered so
that each $\beta$-dense cell has true density at most $3\beta/2$ and at least $\beta/8$. 
Additionally, the portion of a $\beta$-dense box that is not 
covered by $\beta$-dense cells has true density smaller than $\beta$. 
This implies that for all $i \in
[x]$ this happens with probability at least
$99/100$. We denote this
event by $F$, and assume in what follows that $F$ happens.

\subsubsection{Chain reduction}\label{sec:chain-reduction}

In this section, we define a poset over dense cells and argue that in
order to well-approximate the LIS, it is enough to restrict our attention to LIS's in a few chains in this poset.

Since dense cells, by definition, are contained inside dense boxes, we
denote dense cells using triplets $(I_j, D_i, k)$, where this triplet denotes the $k$-th dense cell inside the dense box $(I_j, D_i), i \in [\epsilon\sqrt{r}], j \in [\sqrt{r}/\epsilon]$.

Recall that there is a poset $\langle \mathcal{P}, \preceq\rangle$ on the dense boxes. Now, we
define another poset $\langle \mathcal{P}^\star, \preceq^\star\rangle$ whose elements are the (at most) $2x/\beta$ dense cells.
The order relation $\preceq^\star$ is defined by  $(I_j, D_i, k)
\preceq^\star (I_{j'}, D_{i'}, k')$ if and only if either $(I_j,D_i) \neq
(I_{j'}, D_{i'})$ and $(I_j,D_i) \preceq
(I_{j'}, D_{i'})$, or if $j'=j, i'=i$ and $k \leq k'$.
Note that the poset $\preceq^\star$ is not consistent
with a grid poset, it rather inherits the order from $\mathcal{P}$ for
cells in different boxes.

Let $\mathcal{L}_1$ be the LIS $\mathcal{L}$ restricted to dense
boxes, let $C(\mathcal{L}_1,\mathcal{P})$ be the set of dense boxes in which
$\mathcal{L}_1$ passes, and let $C(\mathcal{L}_1,\mathcal{P}^\star)$ be the set of dense cells in which
$\mathcal{L}_1$ passes.  We observe that $C(\mathcal{L}_1, \mathcal{P})$ and
$C(\mathcal{L}_1,\mathcal{P}^\star)$ are chains in the corresponding
posets.

Our goal now is to show that there are a small number of chains in
$\mathcal{P}^\star$ that cover
$C(\mathcal{L}_1,\mathcal{P}^\star)$. This is done as follows.

\begin{itemize}
\item For parameter $\tau = 5/\lambda$, remove repeatedly
  antichains of size larger than $\tau$ from $\mathcal{P}^\star$. Here, by
  removing, we mean the deletion of the  points in the
  cells of the corresponding antichain from the array\footnote{For a
     single-valued cell $a=(I_j,D_i,k)$ taking the value $v$, there might be other points with value $v$ in
    the dense box containing $a$. When we remove $a$ from
    $\mathcal{P}^\star$, we `mentally' remove some $\beta n/x$ points of
    value $v$ from the box $(I_j,D_i)$. This is not algorithmically done, but
    will just be used in the analysis.}. 
\end{itemize}
Let the resulting poset be denoted by $\mathcal{P}^{\star
  \star}$. The maximum antichain in this poset has size at most $\tau$, and
 Dilworth's theorem implies that there is a decomposition of
$\mathcal{P}^{\star \star}$ into at most $\tau$ chains. 
These chains, being made of dense cells, is naturally extended to
at most $\tau$  (possibly intersecting) chains of the poset
$\mathcal{P}$. Let these chains be $C_1, \ldots
,C_\tau$. 
We bound the `loss' to the LIS incurred by chain reduction in Claim~\ref{clm:chain-reduction}.
\begin{claim}\label{clm:chain-reduction}
Conditioned on the events $E \cup D \cup F$, the number of
points in $\cL_1$ that does not belong to $\cup_{i = 1}^{\tau} C_i$  is at most $4n/\tau$.
\end{claim}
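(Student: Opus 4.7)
Let $S = C(\cL_1, \mathcal{P}^\star)$ denote the chain in $\mathcal{P}^\star$ consisting of the dense cells that $\cL_1$ passes through. I will separately account for $\cL_1$ points lying inside removed dense cells of $S$ and for $\cL_1$ points lying in residual portions of dense boxes. A point $p \in \cL_1$ that fails to lie in $\bigcup_{i=1}^{\tau} C_i$ must belong to a dense box $B$ no cell of which survives in $\mathcal{P}^{\star\star}$; either $p$ is itself in a dense cell of $B$ (which is therefore a removed cell lying on $S$) or $p$ is in the residual of $B$.

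\textbf{Step 1 (counting antichain removals).} Conditioned on $F$, every dense box contains at most $\frac{1}{\beta/2}=\frac{2}{\beta}$ dense cells, so the total number of dense cells in $\mathcal{P}^\star$ is at most $\frac{2x}{\beta}$. Each removed antichain has size strictly greater than $\tau$, so the total number of antichain removals performed during the reduction is at most $\frac{2x}{\beta\tau}$.

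\textbf{Step 2 (chain–antichain intersection).} By the standard poset fact that a chain and an antichain meet in at most one element, each removed antichain deletes at most one cell from $S$. Combined with Step 1, at most $\frac{2x}{\beta\tau}$ cells of $S$ are removed. By event $F$, every dense cell has density at most $\frac{3\beta}{2}$ with respect to its subarray $D_i$, so contains at most $\frac{3\beta n}{2x}$ array points. Therefore the number of $\cL_1$ points sitting in removed cells of $S$ is at most
\[
\frac{2x}{\beta\tau} \cdot \frac{3\beta n}{2x} \;=\; \frac{3n}{\tau}.
\]

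\textbf{Step 3 (residual contribution).} It remains to bound $\cL_1$ points that lie in residuals of fully-removed boxes. Every such box $B$ on the $\cL_1$ path must have all of its dense cells removed; in particular, at least one dense cell of $B$ was removed during the antichain process. Since a dense box contains at least one dense cell and the total number of cell removals is at most $\frac{2x}{\beta\tau}$, the number of fully-removed boxes is at most $\frac{2x}{\beta\tau}$. Each such box contributes, by the residual bound from the finer layering step (the residual of a $\beta$-dense box has density less than $\beta$ in $D_i$), fewer than $\frac{\beta n}{x}$ points. Hence the residual loss is at most
\[
\frac{2x}{\beta\tau} \cdot \frac{\beta n}{x} \;=\; \frac{2n}{\tau}.
\]
Combining with Step 2 and being slightly loose gives a total loss of at most $\frac{4n}{\tau}$, as claimed.

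\textbf{Main obstacle.} The cleanest part is Step 2, which is the standard chain/antichain argument and directly exploits that the removal procedure strips out antichains of size exceeding $\tau$. The delicate part is Step 3: there the subtlety is that fully-removed boxes need not have any cell lying on $S$ (the LIS might enter a box only through its residual), so the naive identification ``uncovered box $\Leftrightarrow$ removed cell on $S$'' is not tight. Handling this requires using the combined total count of cell removals, together with the density guarantee on residuals from event $F$, to control the total residual mass without double-counting what is already accounted for in Step 2.
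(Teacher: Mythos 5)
Your Steps~1 and~2 reproduce the paper's proof verbatim: with the events $E, D, F$ in force there are at most $2x/\beta$ dense cells, hence at most $2x/(\beta\tau)$ antichain removals; the chain $S = C(\cL_1,\mathcal{P}^\star)$ loses at most one cell per removal, so at most $2x/(\beta\tau)$ cells, each holding at most $\tfrac{3\beta n}{2x}$ points, giving a loss of $3n/\tau$. That is exactly the paper's entire argument, and the paper stops there (claiming $3n/\tau$ and stating the slightly weaker bound $4n/\tau$). Where you depart from the paper is Step~3, where you add a separate account for $\cL_1$ points sitting in the \emph{residual} portion of a fully-removed box. This is a legitimate concern the paper's write-up glosses over, but your execution of Step~3 has two problems.

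First, the counting is off. You write ``the total number of cell removals is at most $\tfrac{2x}{\beta\tau}$.'' That quantity bounds the number of antichain-removal \emph{operations}, not the number of removed cells; each operation strips out strictly more than $\tau$ cells, so the total number of removed cells can be as large as $2x/\beta$. Your subsequent inference ``the number of fully-removed boxes is at most $\tfrac{2x}{\beta\tau}$'' therefore does not follow from what you wrote. The bound you want is in fact true, but it needs the chain argument again: the cells of the fully-removed boxes along the $\cL_1$ box-chain form a chain $T$ in $\mathcal{P}^\star$ (cells inside one box are mutually comparable, and boxes on $\cL_1$'s path are mutually comparable), each antichain removal deletes at most one element of $T$, so $|T|\le 2x/(\beta\tau)$, and every fully-removed box contributes at least one cell to $T$. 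You should make that argument explicit rather than lean on a miscount of ``cell removals.''

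Second, and more importantly, your Steps~2 and~3 add up to $3n/\tau + 2n/\tau = 5n/\tau$, which is \emph{larger} than the claimed $4n/\tau$; ``being slightly loose'' goes the wrong way. As written, your proof does not establish the stated bound. Either you have to sharpen one of the two contributions (for instance, observe that the residual mass and the cell mass in a single fully-removed box cannot both be at their worst-case values simultaneously, or use $\beta/2$-level density facts more carefully), or you should accept the paper's $3n/\tau$ bound as already sufficient and drop Step~3 (if you believe the residual is handled elsewhere in the algorithm's overall accounting, which is what the paper implicitly does). As it stands, the proposal ends with a bound that contradicts the claim.
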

\begin{proof}
  There are at most $2x/\beta$ dense cells and  hence at most
$\frac{2x}{\tau \beta}$ antichains can be removed. Since $C(\mathcal{L}_1,
\mathcal{P}^\star)$ is a chain, it loses at most one dense cell each
time an antichain is removed, and altogether $\frac{2x}{\tau \beta}$
dense cells.
But each dense cell has at most $\frac{3\beta n}{2x}$ points in it, which
implies that  the loss to $\mathcal{L}_1$ is at most $\frac{3n}{\tau}$.
\end{proof}

Let $\mathcal{L}_2$ denote the LIS $\mathcal{L}_1$ after chain reduction. The following claim is straightforward.
\begin{claim}\label{cl:est1}
There exists $i \in [\tau]$ such that $|\text{LIS}(C_i)| \ge \frac{|\mathcal{L}_2|}{\tau}$.
\end{claim}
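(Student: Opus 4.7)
The plan is to prove this by a straightforward pigeonhole argument. By construction, after chain reduction the set $\mathcal{L}_2$ consists of points of $\mathcal{L}_1$ lying in dense cells that survived the antichain removal process, and these surviving cells are covered by the $\tau$ chains $C_1, \ldots, C_\tau$ of the poset $\mathcal{P}$ (obtained via Dilworth's theorem on $\mathcal{P}^{\star\star}$). Hence every point of $\mathcal{L}_2$ lies in at least one chain $C_i$, so assigning each point to (any) one such $C_i$ partitions the $|\mathcal{L}_2|$ points into $\tau$ groups.

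First, by pigeonhole, some chain $C_{i^*}$ receives at least $|\mathcal{L}_2|/\tau$ points of $\mathcal{L}_2$. Next, I would observe that the points of $\mathcal{L}_2$ that lie in $C_{i^*}$ form an increasing subsequence of $A$ restricted to the boxes of $C_{i^*}$; this is because $\mathcal{L}_2 \subseteq \mathcal{L}$ already forms an increasing sequence in $A$, and restricting an increasing sequence to any subset of its indices preserves the property. Therefore this collection is a valid increasing subsequence in the part of the array corresponding to $C_{i^*}$, which lower bounds $|\text{LIS}(C_{i^*})|$ by $|\mathcal{L}_2|/\tau$.

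I do not anticipate any real obstacle here: the only subtlety is making sure the notion of $\text{LIS}(C_i)$ refers to the LIS of the array restricted to the boxes (or cells) participating in the chain $C_i$, which is exactly how the chains were defined as collections of dense boxes in $\mathcal{P}$. Given that framing, the argument reduces to pigeonhole plus the trivial fact that a subsequence of an increasing sequence is increasing.
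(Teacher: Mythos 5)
Your proof is correct and is precisely the "straightforward" argument the paper has in mind (the paper states the claim without proof). Pigeonhole over the at most $\tau$ chains covering the surviving cells, combined with the fact that a subset of the increasing sequence $\mathcal{L}_2 \subseteq \mathcal{L}$ is itself increasing, gives the bound directly.
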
 

We point out that no query is made at this stage.

\subsubsection{Estimating the LIS restricted to poset chains}\label{sec:chains2}
Let $\mathcal{P}'$ denote the poset obtained from $\mathcal{P}$ after
removing the large antichains in $\mathcal{P}^\star$.
At this point, we have covered the poset $\mathcal{P}'$ using at
most $\tau$ chains $C_1, \ldots ,C_\tau$.  This reduces the LIS
estimation to estimating the LIS in one of these chains.

In what follows, we fix
such one chain $C$, and denote by $\mathcal{L}(C)$ the LIS in the
array restricted to $C$.
The chain in $C$ is
        composed of a sequence of horizontal and vertical blocks,
        arranged in a staircase manner (see Figure \ref{fig:chain1}), 
        where a horizontal block is a sequence of contiguous boxes in the chain
        from the same layer, and a vertical block is a sequence of 
        contiguous dense boxes in the chain that belong to the same subarray.
          \begin{figure}
    \begin{center}
    \includegraphics[scale=0.3]{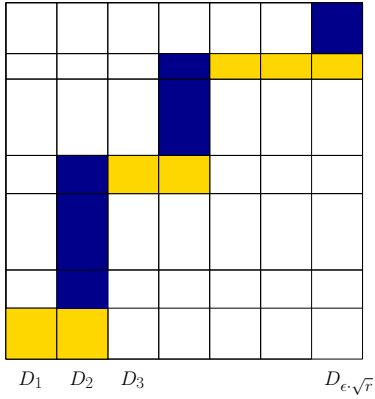}
  \end{center}
    \caption{A staircase like chain, and its decomposition
                  into two chains, one that contains only horizontal
                  blocks and one that contains only vertical
                  blocks. In each of the chains, no two blocks share a
                  layer or a subarray.}
    \label{fig:chain1}
    \end{figure} 

 Let $H_1,H_2$ be two maximal horizontal blocks in $C$. Blocks $H_1,H_2$ have a subarray
 $D_i$ in
 common if there are boxes $(I_j,D_i) \in H_1$ and $(I_s,D_i) \in
 H_2$. In particular, there is a vertical block between them. Two horizontal chains have at most one
 subarray in common, and if this happens, then the common subarray
 $D_i$ defines the rightmost box of the `lower' horizontal chain (the horizontal chain in the
 lower layer) and the leftmost box of the `upper' horizontal chain.
 We conclude that if we arrange the horizontal blocks from bottom to
 top as $H_1, \ldots H_s$, and remove the rightmost box from $H_i$ if it
 has a common subarray with $H_{i+1}$, we get a sequence of horizontal
 blocks in which no two share a subarray. We use $C_H$ to denote this subchain of
 $C$. Notice that $C_V= C \setminus C_H$  is a
 chain that contains only vertical blocks, where no two share a
 layer (see Figure \ref{fig:chain1}, as how the whole chains $C$
 decomposes into a chain of horizontal blocks and a chain of vertical blocks).

To estimate the size of $\mathcal{L}(C)$, we estimate the LIS within
$C_H$ and $C_V$ separately, and use the larger for the size estimate
for $\mathcal{L}(C)$.
In the following, we denote  $\mathcal{L}_H$ and $\mathcal{L}_V$ for  LIS$(C_H)$
and LIS$(C_V)$, respectively.
 The main advantage of this decomposition of $C$ into $C_H$ and $C_V$ is given by the following
 observation.
 \begin{observation}
   For any chain $C,$ we have $|\mathcal{L}_H|= \sum_{B \in C_H} |\text{LIS}(B)|$, where the
   sum is over the horizontal blocks $B$ in $C_H$. A similar statement holds
   for $C_V$ in which case, horizontal blocks are replaced with
   vertical blocks.
 \end{observation}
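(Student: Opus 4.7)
The plan is to prove the identity by the two usual inequalities, relying on the two disjointness properties stated in the caption of Figure~\ref{fig:chain1}: in $C_H$, no two horizontal blocks share a layer and no two share a subarray (and the analogous fact for $C_V$). Throughout I will write the horizontal blocks of $C_H$, listed in the order they appear in the staircase, as $B_1, B_2, \ldots, B_s$, so that block $B_a$ lies entirely in a single layer $I_{j_a}$ and spans a contiguous set of subarrays $\mathcal{D}_a \subseteq \{D_1,\ldots,D_x\}$. By maximality and the ordering of the chain, the layer indices $j_1 < j_2 < \cdots < j_s$ are strictly increasing, and the subarray ranges $\mathcal{D}_1, \mathcal{D}_2, \ldots, \mathcal{D}_s$ are pairwise disjoint and ordered from left to right.

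\emph{Lower bound.} For each $a \in [s]$, let $L_a$ be an LIS of $B_a$ of length $|\text{LIS}(B_a)|$. I form the concatenation $L := L_1 \circ L_2 \circ \cdots \circ L_s$. Since the subarray sets $\mathcal{D}_a$ are disjoint and ordered, every index in $L_a$ is strictly smaller than every index in $L_{a+1}$. Since the layers $I_{j_1} < I_{j_2} < \cdots < I_{j_s}$ are disjoint ranges arranged bottom to top, every value in $L_a$ is strictly smaller than every value in $L_{a+1}$. Therefore $L$ is an increasing subsequence in $C_H$, giving $|\mathcal{L}_H| \geq |L| = \sum_{B \in C_H} |\text{LIS}(B)|$.

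\emph{Upper bound.} Let $L^*$ be an LIS in $C_H$, so $|L^*| = |\mathcal{L}_H|$. For each $a \in [s]$, let $L^*_a := L^* \cap B_a$ be the restriction of $L^*$ to the block $B_a$. Because the subarray sets $\{\mathcal{D}_a\}_a$ are pairwise disjoint and the layers are pairwise disjoint, each point of $L^*$ lies in exactly one block, so $L^* = \bigsqcup_{a} L^*_a$. Each $L^*_a$ is itself an increasing sequence contained in $B_a$, hence $|L^*_a| \leq |\text{LIS}(B_a)|$. Summing gives $|\mathcal{L}_H| = \sum_{a} |L^*_a| \leq \sum_{B \in C_H} |\text{LIS}(B)|$, completing the identity for $C_H$.

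The proof for $C_V$ is entirely symmetric: vertical blocks in $C_V$ lie in pairwise disjoint subarrays and (by the caption) pairwise disjoint layers, arranged in the staircase order, so concatenating LIS's of the individual vertical blocks again produces an increasing subsequence, and conversely any increasing subsequence in $C_V$ partitions into increasing subsequences of the individual vertical blocks. There is no substantive obstacle; the only thing to keep track of is the two disjointness properties, which is precisely why $C$ was split into $C_H$ and $C_V$ in the first place.
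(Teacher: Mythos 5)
The paper records this as an \emph{Observation} and gives no proof, treating it as immediate from the preceding decomposition of the chain $C$ into $C_H$ and $C_V$. Your argument is the correct and natural one that the paper leaves implicit: after the removal of shared boxes, the horizontal blocks of $C_H$ occupy pairwise disjoint layers and pairwise disjoint, left-to-right-ordered subarray ranges, so (i) concatenating per-block LIS's yields an increasing subsequence of $C_H$, and (ii) any increasing subsequence of $C_H$ partitions across the blocks into per-block increasing subsequences; together these give equality, and the $C_V$ case is symmetric. This matches what the paper intends, with no gaps.
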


The observation leads to an immediate adaptive way to approximate the
lengths of $\mathcal{L}_H$
and $\mathcal{L}_V$. 
We just sample a constant number of
blocks from the chain, estimate the LIS in each block, and normalize to
estimate the LIS in the entire chain. By the Hoeffding bound, we can see 
that the estimate is accurate enough with high probability. The adaptivity is need
to locate each block, and to estimate the LIS within a block
(horizontal and vertical), which we did not yet specify how to do. To avoid
adaptivity, we will rely on the fact that if $\mathcal{L}_V$ is large,
then it must contain a large number of small vertical blocks. Thus
sampling uniformly in $[n]$ will hit  such blocks frequently enough to
facilitate the Hoeffding bound above. Further, for the estimation of LIS within 
a short vertical block, we will use the algorithm from Section
\ref{sec:O(n)}. For horizontal blocks we need some further relaxations.
We will show below that we may restrict ourselves to short horizontal
blocks, due to the choice of parameters in the formation of the grid $G'$.
 This again will facilitate the use of the algorithm from
Section \ref{sec:O(n)}. The details now follow.

\medskip
\noindent{\bf Estimating the length of LIS in a horizontal chain.}
A horizontal block belonging to a multi-valued layer is referred to as
 a multi-valued horizontal block, and a horizontal block belonging to a
 single-valued layer is a single-valued horizontal block.
 We treat these horizontal blocks separately.

Let  $m=\epsilon/\lambda^2$. 
Let $\mathcal{L}'_H$ denote the restriction of $\mathcal{L}_H$ after
deleting multi-valued horizontal blocks containing more than $m$ boxes.
We first show that the length of $\cL'_H$ is not much smaller than $\cL_H$.

\begin{claim}
$|\mathcal{L}'_H| \le |\mathcal{L}_H| - 4\lambda^2 \epsilon n$.
\end{claim}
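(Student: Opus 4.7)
The plan is to bound the number of indices of $\mathcal{L}_H$ that fall inside the multi-valued horizontal blocks of $C_H$ with more than $m = \epsilon/\lambda^2$ boxes, since $\mathcal{L}'_H$ is exactly $\mathcal{L}_H$ with those points removed. Under the literal reading of the claim as written, the target is to exhibit at least $4\lambda^2\epsilon n$ such lost points; I note in passing that the informal preamble (``$\mathcal{L}'_H$ is not much smaller than $\mathcal{L}_H$'') suggests the useful direction in context is an \emph{upper} bound on the loss, but the counting argument below supports either reading and will be described so that matching the constant $4\lambda^2\epsilon n$ is the final step.

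The counting has two ingredients. First, because $C_H \subseteq C$ and $|C| \le x + y^* = O(\sqrt{r}/\epsilon)$, while each deleted horizontal block uses more than $m$ boxes, the number of deleted (big, multi-valued) blocks is at most $|C_H|/m$. Second, each multi-valued layer $I_j$ satisfies $\den(I_j, A) \le 2/y = 2\epsilon/\sqrt{r}$, equivalently $\sum_{i \in [x]} \den(I_j, D_i) \le 2x/y = 2\epsilon^2$, so a multi-valued horizontal block carries at most $2\epsilon^2 \cdot n/x = 2\epsilon n/\sqrt{r}$ points in total across its subarrays. Multiplying these two bounds, the total contribution of deleted blocks to $\mathcal{L}_H$ is at most $(|C_H|/m) \cdot (2\epsilon n/\sqrt{r})$, and this, since $\mathcal{L}_H \setminus \mathcal{L}'_H$ is precisely the set of $\mathcal{L}_H$-points in deleted blocks, gives the comparison between $|\mathcal{L}_H|$ and $|\mathcal{L}'_H|$ claimed in the statement.

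Substituting the parameter values ($m = \epsilon/\lambda^2$, $|C_H| = O(\sqrt{r}/\epsilon)$, $y = \sqrt{r}/\epsilon$, $x = \epsilon\sqrt{r}$) is the routine final step, and matching the exact constant $4\lambda^2\epsilon n$ is where I expect the main obstacle: the naive product above comes out to $O(\lambda^2 n/\epsilon)$, which overshoots the target $4\lambda^2\epsilon n$ by a factor of $1/\epsilon^2$. Tightening this will likely require combining the $\beta$-dense floor $\beta = \epsilon^3 \lambda$ on every box of $C_H$ with the layer-density bound: a multi-valued layer can contain at most $2\epsilon^2/\beta = 2/(\epsilon\lambda)$ $\beta$-dense boxes, so big multi-valued blocks are scarce and thin simultaneously. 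Summing this joint constraint over the multi-valued layers traversed by $C_H$ (rather than multiplying worst-case bounds independently) should recover the cleaner bound $4\lambda^2\epsilon n$ asserted in the claim, yielding the inequality in the stated direction $|\mathcal{L}'_H| \le |\mathcal{L}_H| - 4\lambda^2\epsilon n$.
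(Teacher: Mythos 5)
Your skeleton is the same as the paper's: bound the loss as (number of deleted big multi-valued blocks) $\times$ (maximum number of array points in a single multi-valued layer), using that each horizontal block lives inside one layer and each multi-valued layer holds at most $O(n/y) = O(\epsilon n/\sqrt{r})$ points. You also correctly flag that the inequality in the claim is stated backwards relative to the preamble; the proof establishes $|\mathcal{L}_H| - |\mathcal{L}'_H| \le 4\lambda^2\epsilon n$, i.e.\ an upper bound on the loss.

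The genuine gap is in your count of the deleted blocks, and your proposed repair goes in the wrong direction. You bound the number of big blocks by $|C|/m$ with $|C| \le x + y^* = O(\sqrt{r}/\epsilon)$, which is dominated by the vertical extent $y^*$ of the chain and is irrelevant to $C_H$. The correct observation --- and the one the paper uses --- is that by the staircase decomposition of $C$ into $C_H$ and $C_V$, no two horizontal blocks of $C_H$ share a subarray, so the horizontal blocks of $C_H$ occupy at most $x = \epsilon\sqrt{r}$ boxes in total; hence the number of blocks containing more than $m = \epsilon/\lambda^2$ boxes is at most $x/m = \lambda^2\sqrt{r}$. Multiplying by the per-layer bound $4n/y = 4\epsilon n/\sqrt{r}$ (the factor is $4$, not $2$, since conditioning on $E$ only guarantees $\den(I_j,A) \le 4/y$ for multi-valued layers) gives exactly $4\lambda^2\epsilon n$ with no slack to recover. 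Your suggested fix via the $\beta$-dense floor ($\beta = \epsilon^3\lambda$) bounds how many dense boxes a multi-valued layer can contain, i.e.\ it caps the \emph{size} of each block, but what is needed is a cap on the \emph{number} of big blocks, which the density floor does not supply; the disjoint-subarray property of $C_H$ is the missing ingredient.
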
  
\begin{proof}
Recall that, conditioning on event $E$, the density of a
multi-valued layer is at most $4/y = 4\epsilon/\sqrt{r}$. The number 
of multi-valued horizontal blocks that each contain at least $m$ 
boxes is at most  $x/m =
\lambda^2\sqrt{r}$.
Hence the total number of points from $\mathcal{L}_H$ that belong to
horizontal blocks containing at least $m$ boxes and belonging to multi-valued layers, is 
 at most $\frac{4\epsilon n}{\sqrt{r}}  \cdot
\lambda^2\sqrt{r} \leq 4\lambda^2 \epsilon n$.
\end{proof}

Let $C'_H$ denote the chain obtained after removing multi-valued horizontal blocks containing more than $m$ boxes.
Next, we argue that either there are many multi-valued horizontal blocks in
the chain $C'_H$, or we can remove all multi-valued horizontal blocks without
losing too much in the LIS. 
Indeed, 
assume that there are at most $\phi = \epsilon \lambda^2 y$ multi-valued horizontal blocks in the chain $C'_H$.
Then, by removing all of them, we end up losing only $\phi \cdot
\frac{4n}{y} \leq 4n\epsilon \lambda^2$ points (as each
multi-valued layer has density at most $4/y$). 

If we are in the case that we have at least $\phi$ multi-valued horizontal blocks
in $C'_H$, then the average number of values in such a horizontal block
is at most $\frac{r}{\phi} \leq \frac{r}{\epsilon \lambda^2 y}  =
\frac{\sqrt{r}}{\lambda^2}$ by our choice of $y$. 
That is, with probability at least $1 - \frac{1}{100\log (\tau)}$, a uniformly random multi-valued horizontal
block in $C'_H$ contains at most $\frac{100\sqrt{r} \log
  (\tau)}{\lambda^2}$ values. 
Thus, we have reduced the problem to estimating the LIS in a collection of (possibly very long) single-valued horizontal blocks and several short multi-valued horizontal blocks containing $O(\frac{\sqrt{r}\log (\tau)}{\lambda^2})$ values. 

In the following, we use the term \emph{segment} to denote a subarray composed of $2m$ subarrays $\{D_i, D_{i+1}, \dots, D_{i+2m-1}\}$ for some $i\in [x-2m+1]$. 
A segment is said to contain a multi-valued horizontal block $H$ if all the subarrays forming $H$ are contained in the 
segment.  

Fix  $r'=\frac{100\sqrt{r} \log (\tau)}{\lambda^2}$. 
Our algorithm for estimating the length of $\cL_H$ is as follows:

\begin{enumerate}
\item Sample $t\log^2 (\tau)$ uniformly random segments.
\item For each sampled segment $B$, query $s = \Theta\left(\frac{m \log (\tau)}{\beta}\cdot \frac{r'}{\epsilon^3 \lambda^6} \log \left(\frac{r'^2}{\epsilon \lambda}\right)\right)$ points uniformly and independently at indexes from $B$ and run Algorithm~\ref{alg:O(n)} with parameters $r'$ (for the number of distinct values) and $\epsilon \lambda^2$ (for approximation guarantee) using the samples that fall into the multi-valued horizontal block $H$ contained in the segment $B$, if any.
\item Estimate the contribution to the LIS from multi-valued horizontal blocks by summing the answers returned by the algorithm in the previous steps and then normalizing appropriately.
\item Estimate the contribution to the LIS from single-valued horizontal blocks by summing the estimates of the densities of all single-valued horizontal blocks in $C_H$ (as we already know these estimates from the \textsc{Gridding} stage).
\item Output an estimate $L_H$ of the length of $\cL_H$ by summing the above two estimates. 
\end{enumerate}

Clearly, the contribution to $\cL_H$ from single-valued horizontal blocks is estimated within  multiplicative $(1 \pm \frac{1}{2})$-error, by our conditioning on the event $F$.
We show the following.
\begin{claim}\label{clm:hor-chain}
With probability $1 - O(\frac{\log (\tau)}{\tau^2})$, the contribution to $\cL_H$ from multi-valued horizontal blocks is estimated within an additive error of $\epsilon \lambda^2 n$.
\end{claim}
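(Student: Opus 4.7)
My plan is to analyze the three-stage estimator proposed by the algorithm: $(i)$ a uniform sample of $T = t \log^2 \tau$ segments; $(ii)$ for each sampled segment $B$, an invocation of Algorithm~\ref{alg:O(n)} with parameters $r'$ and $\epsilon \lambda^2$, restricted to the uniform samples from $B$ that fall inside the multi-valued horizontal block (MVHB) in $B$, if one exists; and $(iii)$ a rescaling of the resulting sum to produce an estimate of $L_{\mathsf{MVHB}} := \sum_i |\text{LIS}(H_i)|$, summed over all MVHBs $H_1, \ldots, H_q$ in the chain $C'_H$ (each of size at most $m$ boxes after the earlier trimming step). Four bad events must be controlled: $(a)$ the sampled-segment sum failing to concentrate around its mean; $(b)$ some sampled MVHB having more than $r'$ distinct values; $(c)$ too few of the $s$ uniform samples from a segment landing inside its MVHB to feed Algorithm~\ref{alg:O(n)}; and $(d)$ Algorithm~\ref{alg:O(n)} erring on some sampled MVHB.

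\smallskip
\noindent\textbf{Sampling analysis (controlling $(a)$).} Each MVHB $H_i$ occupies at most $m$ consecutive subarrays, while a segment consists of $2m$, so $H_i$ is fully contained in $\Theta(m)$ of the $\Theta(x)$ candidate segments and hence sampled with probability $p_i = \Theta(m/x)$. Since the MVHBs are disjoint along $C'_H$, the random variable $Y_j$ defined as $|\text{LIS}(H)|$ for the MVHB $H$ contained in the $j$-th sampled segment (and $0$ if there is none) has mean $\Theta(m/x) \cdot L_{\mathsf{MVHB}}$ and is bounded by $\Theta(mn/x)$, the size of a single segment. Hoeffding's inequality applied to the $T$ i.i.d.\ copies of $Y_j$, combined with the appropriate multiplicative rescaling by $\Theta(x/m)$, gives that the normalized sum approximates $L_{\mathsf{MVHB}}$ to within additive $\epsilon \lambda^2 n / 2$ with probability $1 - O(\log \tau / \tau^2)$, for a suitable constant hidden in $t$.

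\smallskip
\noindent\textbf{Per-MVHB analysis (controlling $(b)$--$(d)$).} The Markov-style argument recorded just before the algorithm statement yields that each sampled MVHB has at most $r'$ distinct values with probability at least $1 - 1/(100 \log \tau)$; the expected number of ``bad'' MVHBs among the $T$ sampled ones is only a $1/(100 \log \tau)$-fraction of $T$, and the total contribution of such bad MVHBs to the error of the estimator, crudely bounded by $O(mn/x)$ each, is absorbed into the additive budget. Conditional on an MVHB $H$ having at most $r'$ values, a Chernoff bound shows that the $s$ uniform samples in the segment yield at least $\tilde{\Omega}(r'/(\epsilon \lambda^2)^3)$ samples inside $H$, which by the remarks following Algorithm~\ref{alg:O(n)} suffices to run that algorithm with parameters $r'$ and $\epsilon\lambda^2$. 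A constant number of independent repetitions boosts its success probability to $1 - 1/\text{poly}(\tau)$, guaranteeing that the returned estimate is within additive $\epsilon \lambda^2 \, |H|$ of $|\text{LIS}(H)|$.

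\smallskip
\noindent\textbf{Combining, and the main obstacle.} A union bound over the $T$ sampled MVHBs shows that all invocations of Algorithm~\ref{alg:O(n)} succeed simultaneously with probability $1 - O(\log \tau / \tau^2)$; combined with Step $(a)$, the rescaled sum is within additive $\epsilon \lambda^2 n$ of $L_{\mathsf{MVHB}}$, as claimed. The delicate point is balancing the two budgets: $s$ must be large enough, relative to $m$ and $r'$, that a constant fraction of its samples reliably populate the sampled MVHB (this is exactly why $s$ is set to $\tilde\Theta(m r'/(\beta \epsilon^3 \lambda^6))$), while $T$ cannot be so large that the union bound over ``bad'' MVHBs exhausts the failure-probability budget. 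The stated choice $T = t \log^2 \tau$ is calibrated so that the Hoeffding concentration in Step $(a)$ and the union bound in Steps $(b)$--$(d)$ both fit inside the target probability $O(\log \tau / \tau^2)$.
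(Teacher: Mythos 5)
Your proposal follows essentially the same structure as the paper's proof: enumerate bad events (segment-sampling concentration, MVHBs with too many values, too few uniform samples landing inside an MVHB, failures of Algorithm~\ref{alg:O(n)}), control each via Chernoff/Hoeffding and the earlier averaging argument, and union bound. The decomposition into four named bad events mirrors the paper's sequential conditioning, and the Hoeffding argument over i.i.d.\ segment samples corresponds to the paper's ``error due to the sampling of blocks.''

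There is, however, a quantitative gap in your handling of bad event $(b)$ that the claim's target error does not survive. You bound the error contribution of a ``bad'' MVHB (one with more than $r'$ distinct values) by the segment size $O(mn/x)$. After rescaling by $\Theta(x/(mT))$ and multiplying by the expected $T/(100\log\tau)$ bad samples, this yields total error $O(n/\log\tau)$. Since $\tau = 5/\lambda$, this is $O(n/\log(1/\lambda))$, which for small $\lambda$ vastly exceeds the target $\epsilon\lambda^2 n$: the budget you need to absorb it into does not exist. The fix is to use the far tighter bound provided by event $E$ from \textsc{Layering}: a multi-valued layer has density at most $4/y$ over the entire array, so any MVHB (including a bad one) contains at most $4n/y$ points and hence both its true LIS and any estimate of it are bounded by $4n/y$. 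Plugging $4n/y$ in place of $O(mn/x)$, and using $x/(my)=\epsilon\lambda^2$, gives total error $O(\epsilon\lambda^2 n/\log\tau)$, which does fit. Relatedly, your per-MVHB Chernoff step quietly relies on a density lower bound for an MVHB inside its segment; the paper supplies this explicitly as $\beta/(16m)$ via event $D$ (every box tagged dense by \textsc{Gridding} has true density at least $\beta/8$, and a segment spans at most $2m$ boxes), and it is worth surfacing since it is exactly what justifies the choice of $s$.
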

\begin{proof}
As we already argued, the length of an LIS in $C_H'$ is at least $|\cL_H| - 8\lambda^2 \epsilon n$.

There are at least $\epsilon \lambda^2 y (1 - \frac{1}{100\log (\tau)})$ multi-valued horizontal blocks consisting of at most $r'$ values in $C_H'$.
Each one of at least half among these horizontal blocks are contained in at least $m$ segments in the array.
As there are at most $x$ many segments in the array, the probability that a uniformly random segment does not contain any multi-valued horizontal block consisting of at most $r'$ values from $C_H'$ is at most $1 - \frac{99m\epsilon \lambda^2 y}{200 x} \leq 1 - \frac{99}{200}.$
Hence, the probability that no multi-valued horizontal segment is contained in any of $\frac{400\log (\tau)}{99}$ independently sampled uniformly random segments is at most $\frac{1}{\tau^2}$.
By a simple union bound, with probability at least $1 - \frac{\log (\tau)}{\tau^2}$, the procedure samples at least $\log (\tau)$ segments that each contain a multi-valued horizontal block consisting of at most $r'$ values from $C'_H$. 
In the rest, we condition on sampling at least $\log (\tau)$ segments that each contain a multi-valued horizontal block consisting of at most $r'$ values from $C'_H$.

For a specific such segment, the fraction of points in that segment that belong to the contained multi-valued horizontal block is at least $\frac{\beta}{16m}$. 
This is because a multi-valued horizontal block contains at least one box tagged as dense by the \textsc{Gridding} procedure and by our conditioning (event $D$), each box tagged as dense has density at least $\beta/8$.
Thus, for a specific segment $B$, the probability that fewer than $s\beta/16$ points from the sample in Step 2 land into the corresponding multi-valued horizontal block is, by a Chernoff bound, at most $\exp(-s\beta/64) \le 1/\tau^2$.
By a union bound, the probability that there exists a segment such that the multi-valued horizontal block contained inside it 
has fewer than $s\beta/16$ points from the sample is at most $t \log^2(\tau)/\tau^2$. In the rest of the proof, we condition on the event that \emph{sufficiently many} sampled points fall into the multi-valued horizontal blocks corresponding to the sampled segments.

Now, the probability that an invocation of Algorithm~\ref{alg:O(n)} succeeds, when run with a sample of size $s\beta/16$,
is at least $1 - \frac{1}{\tau^2}$. 
The probability that all invocations of Algorithm~\ref{alg:O(n)} succeed is, by the union bound, at least $ 1 - \frac{\log^2(\tau)}{\tau^2}$.

Finally, by a direct application of the Hoeffding bound, we can bound the error due to the sampling of blocks to be an additive $\epsilon \lambda^2 n$ with probability at least $1 - \frac{1}{\tau^2}$.

By a union bound over all of the above ``bad" events above, we ultimately get that, the estimate output by our procedure is within $\pm 2 \lambda \epsilon n$ of the contribution of multi-valued horizontal blocks to $|\cL_H|$, with probability at least $1 - \frac{3\log (\tau)}{\tau^2}$.
\end{proof}

  \medskip

\noindent{\bf Estimating the length of the LIS in a vertical chain.}
  Let $\nu = \epsilon \lambda^2$.
  We may assume that the vertical chain is composed of at least $\nu \cdot x$ vertical blocks, for otherwise, we can abandon the entire vertical chain by incurring a `loss' to the LIS amounting to at most $\nu \cdot n$ points.
  Additionally, since the boxes from different vertical blocks belong to different layers, using a similar averaging argument as before, we can show that with probability at least $1 - \frac{1}{100\log(\tau)}$, a uniformly random vertical block
  contains at most $\frac{100\sqrt{r}\log (\tau)}{\lambda^2}$ distinct values. 

  Therefore, in order to estimate the length of the LIS in the vertical chain, we sample $O(\log (\tau))$ 
  subarrays $D_i, i \in [x]$ and run the pseudo-solution-based LIS estimation algorithm, restricted to the 
  vertical box, if any, that belongs to this subarray
  while making sure that the success probability is at least $1 - \frac{1}{100\log \tau}$ and the error parameter is $\epsilon\lambda^2$.
  The details of how to implement this procedure nonadaptively are identical to how we implemented the estimation of the LIS in $C_H$ in the preceding section. The query complexity is also identical.

\begin{claim}\label{clm:ver-chain}
 With probability $1 - O(\frac{\log (\tau)}{\tau^2})$, we estimate the contribution of vertical blocks to within an additive error of $\epsilon \lambda^2 n$.
\end{claim}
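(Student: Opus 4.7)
The plan is to essentially replicate the analysis of Claim~\ref{clm:hor-chain} for the vertical case, with the roles of subarrays and layers appropriately swapped. First, if $|C_V|$ contains fewer than $\nu x = \epsilon\lambda^2 x$ vertical blocks, then since the vertical blocks of $C_V$ sit in pairwise disjoint subarrays and each contributes at most $n/x$ indices to the LIS, the entire vertical contribution is at most $\nu n = \epsilon\lambda^2 n$, and returning $0$ already meets the claim. So assume there are at least $\nu x$ vertical blocks in $C_V$. The averaging argument already noted before the statement then applies: since distinct vertical blocks of $C_V$ occupy distinct layers, a uniformly random vertical block in $C_V$ spans at most $r'= 100\sqrt{r}\log(\tau)/\lambda^2$ distinct values with probability at least $1 - 1/(100\log\tau)$.

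Second, I would argue that among the $\Theta(\log\tau\cdot\mathrm{poly}(1/(\epsilon\lambda)))$ uniformly sampled subarrays, a $\Theta(\nu)$ fraction belong to $C_V$, and among those, most carry a vertical block with at most $r'$ distinct values. Both concentrations follow from standard Chernoff bounds applied to the uniform choice of subarrays. Call a sampled subarray \emph{good} if the vertical block it contains (if any) belongs to $C_V$ and has at most $r'$ distinct values; with probability $1 - O(1/\tau^2)$ the number of good sampled subarrays is at least a constant fraction of its expectation.

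Third, for each good sample $D_i$ carrying a vertical block $V$, we use the subset of the uniform sample in $D_i$ whose values land in the layer range of $V$ as input to Algorithm~\ref{alg:O(n)}. Conditioned on the event $D$ from the \textsc{Gridding} stage, $V$ has density at least $\beta/8$ in $D_i$, so by a Chernoff bound, with probability $1 - 1/\tau^2$ a sample of appropriate polynomial size delivers enough uniform points inside $V$ for Algorithm~\ref{alg:O(n)}, invoked with parameters $r'$ and $\epsilon\lambda^2$, to estimate $|\mathrm{LIS}(V)|$ within additive error $\epsilon\lambda^2\cdot|V|$ and failure probability $1/(100\log\tau)$. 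Crucially, since Algorithm~\ref{alg:O(n)} can operate on uniform and independent samples (Remark after that algorithm), the entire procedure remains nonadaptive: we draw a single uniform batch from each $D_i$ and simply filter by value.

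Finally, the estimator forms $x\cdot\bar X$, where $\bar X$ is the sample mean of the per-subarray LIS estimates (non-$C_V$ subarrays counted as $0$). Each variable lies in $[0,n/x]$, so Hoeffding's inequality gives additive error $\epsilon\lambda^2 n$ with probability $1 - 1/\tau^2$ once the sample size is $\Theta(\log\tau/(\epsilon\lambda^2)^2)$, which fits the query budget. A union bound over the failure modes, namely (i) too few good sampled subarrays, (ii) any invocation of Algorithm~\ref{alg:O(n)} either receiving too few samples inside its block or returning a bad estimate, and (iii) the Hoeffding deviation, yields the overall success probability $1 - O(\log(\tau)/\tau^2)$. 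The main obstacle, as in the horizontal case, is the bookkeeping of these nested conditionings together with calibrating sample sizes so that enough uniform points fall into each target vertical block without introducing adaptivity.
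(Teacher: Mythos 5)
Your proposal is correct and follows essentially the approach the paper intends: the paper's ``proof'' of this claim is a one-line remark that it is analogous to Claim~\ref{clm:hor-chain} (the reference in the paper is a typo pointing at Claim~\ref{clm:ver-chain} itself), and you have carried out exactly that translation, including the appropriate simplification that vertical blocks occupy single disjoint subarrays so one can sample subarrays directly rather than multi-subarray segments, and the observation that a $\text{poly}(1/(\epsilon\lambda))$ blow-up in sample size is needed to compensate for only a $\nu$-fraction of subarrays carrying blocks of $C_V$. The calibration of the Hoeffding step (range $[0,n/x]$, sample size $\Theta(\log\tau/(\epsilon\lambda^2)^2)$) and the union bound over the three failure modes match the structure of the proof of Claim~\ref{clm:hor-chain}.
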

The proof of the above claim is analogous to that of Claim~\ref{clm:ver-chain} and is therefore, omitted.
        \begin{remark}
          We have analyzed this section for any arbitrary bound on the
          number of distinct values in the array $r$.  
 We note that for the general case when $r=n$, the
          algorithm becomes simpler. The case of estimating the LIS in
          a (possibly long) single-valued layer does not change as this
          was done by a simple computation using the already available
          data. For the case of short horizontal blocks, or vertical
          blocks, they are spanned by constantly many subarrays (or just one, in the
          case of vertical block). For these we had to run the
          algorithm from Section \ref{sec:O(n)}. But when $r= n$ then
          the subarrays have size $n/x = O(\sqrt{n})$ and hence we can
          afford to query {\em all} indices in these constantly many
          subarrays, and compute the LIS in the corresponding blocks exactly.
        \end{remark}
  \subsubsection{Correctness, approximation guarantee, and query complexity}
  In this section, we complete the analysis of our algorithm and finish the proof of Theorem~\ref{thm:O(root-r)}.

\medskip
  \noindent{\bf Success probability.} The probability that any of Layering, Gridding and Finer Gridding fail is at most $3/100$. 
  For a specific chain of boxes, by Claims~\ref{clm:hor-chain} and~\ref{clm:ver-chain}, we know that estimating the length of LIS within them is within the approximation guarantee with probability at least $1 - O(\frac{\log(\tau)}{\tau^2})$.
  By a union bound over all $\tau$ chains, we can see that the probability of incorrectly estimating the LIS length in some chain is at most $1/100$. Thus, overall, the failure probability is at most a small constant.

\medskip
  \noindent{\bf Query complexity.} The query complexity is clearly
        $\tilde{O}(\sqrt{r}\cdot \text{poly}(1/\lambda))$  from the description of the algorithm.

\medskip
  \noindent{\bf Approximation guarantee.} 
  Consider a fixed true LIS $\mathcal{L}$.
  The loss due to ignoring boxes that are not $\beta$-dense ($\beta = \epsilon^2 \lambda$) is at most $\epsilon^3 \lambda n + \epsilon \lambda n$.
  The loss due to antichain removal is at most $4n/\tau$, which is equal to $4\lambda n/5$.
  The resulting increasing sequence has length at least $|\mathcal{L}| - \epsilon^3 \lambda n - \epsilon \lambda n - 4\lambda n / 5$, which is at least $(1 - \epsilon^3 - \epsilon - 4/5) \cdot |\mathcal{L}|$, by our assumption on $\lambda$.

  After chain decomposition, the length of the LIS in the best chain is at least $(1 - \epsilon^3 - \epsilon - 4/5) \cdot |\mathcal{L}|/\tau$, which is equal to $\frac{\lambda}{5}\cdot |\mathcal{L}| \cdot (1/5 - \epsilon^3 - \epsilon)$.
  Since we split the chains into horizontal and vertical chains, we further lose a factor of $2$, and the resulting LIS length becomes $\frac{\lambda}{10}\cdot |\mathcal{L}| \cdot (1/5 - \epsilon^3 - \epsilon)$.

  In case of horizontal chains, we additionally lose a $8\epsilon \lambda^2 n$ and in the case of vertical chains, we additionally lose $\epsilon \lambda^2 n$. 

  That is the length of LIS in the (best) horizontal chain is at least $\frac{\lambda}{10}\cdot |\mathcal{L}| \cdot (1/5 - \epsilon^3 - 9\epsilon)$.
  Finally, using Claims~\ref{clm:hor-chain} and~\ref{clm:ver-chain}, we can see that we estimate the lengths of the best horizontal and vertical chains to within a constant multiplicative factor.
  Overall, the approximation guarantee is multiplicative $\Omega(\lambda)$.

\ignore{
\section{Truly sublinear-time nonadaptive algorithm for additive error LIS estimation}

Our algorithm is based on simulating the adaptive algorithm of Saks and Seshadhri~\cite{SaksS10}
on the poset of dense cells, as obtained after the \textsc{Layering} procedure from Section~\ref{sec:sqrt}.

\subsection{High-level idea of the algorithm of Saks and Seshadhri~\cite{SaksS10}}

The algorithm of Saks and Seshadhri~\cite{SaksS10} is recursive, has several subroutines,
and is described at a very high level (and slightly inaccurately) in the 
following.
The main procedure gets the entire array $A$ as input, which we visualize as the grid $G_n$, just as in Section~\ref{sec:sqrt}.
The key idea is to reduce the problem of estimating the LIS in this box to estimating the LIS in a much smaller box,
and then recurse on that box.
Each recursive level of the algorithm has two main phases, namely, the \emph{splitting phase} and \emph{gridding phase}. 

The splitting phase has several rounds. 
In each round, the current box $\cT$ (initialized to $G_n$) is split into
two boxes by finding a point $s$, called the \emph{splitter}, in  $\cT$ whose number of 
violations with points in $\cT$ is \emph{not too large}.
Additionally, the point $s$ is such that its index is roughly in the middle of the box $\cT$.  
Such a point $s$ can be naturally used to split $\cT$ into two
subboxes, where the first subbox consists of the set of all points in
$\cT$ that are smaller than $s$, and the second subbox consists of the
set of all points in $\cT$ that are larger than $s$ with respect to
the grid poset. The algorithm then sets $\cT$ to be one of the
subboxes chosen at random weighted by the width of the box. 
If such a splitter cannot be found, the algorithm proceeds to the grid-recursion phase. 

In the gridding phase, we start with a box $\cT$ where we could not find a splitter. 
We first equipartition the set of indices in $\cT$ into $\text{poly}\log n$ subarrays,
and then roughly equipartition the set of values in $\cT$ into $\text{poly}\log n$ intervals.
These induce a partition of the box $\cT$ into $\text{poly}\log (n)$ many subboxes.
Then the  main procedure is invoked recursively on each  resulting
subbox, with a  worse error guarantee. 
Next, each subbox is  associated with the corresponding estimate on its
LIS.  Then the longest strictly increasing chain of boxes with respect
to this weight is calculated, and this is used (indirectly) to  estimate  
the final answer.

\subsection{$O(\sqrt{n})$ algorithm with a single round of adaptivity}

\paragraph*{Algorithm Sketch:} We describe a two stage algorithm that
emulates the algorithm of  Saks and Seshadhri.
For an intended error parameter $\epsilon$, we set $\lambda = \epsilon$, and may
assume  that the length of a fixed LIS $\cL$ in the input array
$A$ is at least $\lambda n$. Otherwise, since the estimate we will
output is smaller than the
true LIS, our estimate of $\epsilon n$
has a good additive error.

The first stage of our algorithm, meant for preprocessing the data, runs the \textsc{Gridding, Tagging,} and \textsc{Layering} procedures of the 
Section~\ref{sec:sqrt}, with $y = \sqrt{n}/\epsilon, x = \epsilon \cdot \sqrt{n},$ and $\beta = \epsilon^2 \lambda$.
We also query an additional set of $x \cdot \text{poly}\log (n)$ points, which we call the \emph{reserve}, by sampling $\text{poly}\log(n)$ indexes
uniformly and independently at random from each subarray $D_i, i \in [x]$. 

The second stage simulates the algorithm of Saks and Seshadhri~\cite{SaksS10} using information from the above data structure.
Specifically, we use the already queried points from the reserve in
order to perform the Saks-Seshadhri splitting and  gridding
procedures, as needed.
The reason that we do not need to make any fresh queries is that the algorithm of Saks and Seshadhri~\cite{SaksS10} only requires uniformly sampled points from the respective boxes for these operations, and as long as the box has width $\tilde{\Omega}(\sqrt{n})$, our reserve contains sufficiently many points from the box for these operations. 
Now, the decisions of which splitter to choose and how to do the gridding is decided on the basis of the density map we have created using our data structure.

The end case is when we reach a box of width $\tilde{\Theta}(\sqrt{n})$. 
In this case, we can afford to query the entire box and have full
information for any future computation on this box. In particular, we
can compute the exact LIS in it.  This is where the second round of
queries is made. We note that for several independent runs of the
Saks-Seshadhri algorithm, this is done in parallel, so altogether in
one round of queries that might depend on the preprocessing phase and
the reserve. 

\paragraph*{Analysis Sketch:}
The query complexity of our algorithm is $\tilde{\Theta}(\sqrt{n})$. 
The first stage and querying the reserve has query complexity $\tilde{\Theta}(\sqrt{n})$. 
At the bottom recursion level of simulating Saks and Seshadhri~\cite{SaksS10}, we end up at $O(\text{poly}\log (n))$ 
boxes of width $\tilde{\Theta}(\sqrt{n})$ each, since the query complexity of the
algorithm of Saks and Seshadhri~\cite{SaksS10} is $O(\text{poly}\log (n))$.
Thus, the overall query complexity of our algorithm is $\tilde{O}(\sqrt{n})$.

As mentioned before, it is okay for an algorithm to ignore the set of all points in the input grid $G_n$ that
do not belong to any $\beta$-dense cell.
This results in the loss of at most $\beta \cdot n$ points from the LIS, which we can afford.
This is the additional error incurred on top of the error incurred by the algorithm of Saks and Seshadhri.
}
\bibliographystyle{alpha}
\bibliography{references}

\newcommand{\etalchar}[1]{$^{#1}$}
\begin{thebibliography}{BCLW19}

\bibitem[ACCL07]{ACCL07}
Nir Ailon, Bernard Chazelle, Seshadhri Comandur, and Ding Liu.
\newblock Estimating the distance to a monotone function.
\newblock {\em Random Structures \& Algorithms}, 31(3):371--383, 2007.

\bibitem[AD99]{AD99}
David Aldous and Persi Diaconis.
\newblock Longest increasing subsequences: from patience sorting to the
  {Baik-Deift-Johansson} theorem.
\newblock {\em Bull. Amer. Math. Soc.}, 34:413--432, 1999.

\bibitem[BCLW19]{BCLW19}
Omri Ben{-}Eliezer, Cl{\'{e}}ment~L. Canonne, Shoham Letzter, and Erik
  Waingarten.
\newblock Finding monotone patterns in sublinear time.
\newblock In {\em 60th {IEEE} Annual Symposium on Foundations of Computer
  Science, {FOCS} 2019}, pages 1469--1494, 2019.

\bibitem[Bel18]{Belovs18}
Aleksandrs Belovs.
\newblock Adaptive lower bound for testing monotonicity on the line.
\newblock In {\em Approximation, Randomization, and Combinatorial Optimization.
  Algorithms and Techniques, {APPROX/RANDOM} 2018}, pages 31:1--31:10, 2018.

\bibitem[BRY14]{BRY14}
Piotr Berman, Sofya Raskhodnikova, and Grigory Yaroslavtsev.
\newblock L\({}_{\mbox{p}}\)-testing.
\newblock In David~B. Shmoys, editor, {\em Symposium on Theory of Computing,
  {STOC} 2014}, pages 164--173. {ACM}, 2014.

\bibitem[DRTV18]{DixitRTV18}
Kashyap Dixit, Sofya Raskhodnikova, Abhradeep Thakurta, and Nithin Varma.
\newblock Erasure-resilient property testing.
\newblock {\em {SIAM} J. Comput.}, 47(2):295--329, 2018.

\bibitem[EKK{\etalchar{+}}00]{EKK+00}
Funda Erg{\"{u}}n, Sampath Kannan, Ravi Kumar, Ronitt Rubinfeld, and Mahesh
  Viswanathan.
\newblock Spot-checkers.
\newblock {\em Journal of Computer and System Sciences}, 60(3):717--751, 2000.

\bibitem[Fis04]{Fischer04}
Eldar Fischer.
\newblock On the strength of comparisons in property testing.
\newblock {\em Inf. Comput.}, 189(1):107--116, 2004.

\bibitem[Fre75]{Fredman75}
Michael~L. Fredman.
\newblock On computing the length of longest increasing subsequences.
\newblock {\em Discret. Math.}, 11(1):29--35, 1975.

\bibitem[GGR98]{GGR98}
Oded Goldreich, Shafi Goldwasser, and Dana Ron.
\newblock Property testing and its connection to learning and approximation.
\newblock {\em Journal of the {ACM}}, 45(4):653--750, 1998.

\bibitem[MS21]{MS21}
Michael Mitzenmacher and Saeed Seddighin.
\newblock Improved sublinear time algorithms for longest increasing
  subsequence.
\newblock In {\em Proceedings of {SODA} 2021}, 2021.

\bibitem[NRRS19]{NewmanRRS19}
Ilan Newman, Yuri Rabinovich, Deepak Rajendraprasad, and Christian Sohler.
\newblock Testing for forbidden order patterns in an array.
\newblock {\em Random Struct. Algorithms}, 55(2):402--426, 2019.

\bibitem[PRR06]{PRR06}
Michal Parnas, Dana Ron, and Ronitt Rubinfeld.
\newblock Tolerant property testing and distance approximation.
\newblock {\em Journal of Computer and System Sciences}, 6(72):1012--1042,
  2006.

\bibitem[PRV18]{PRV18}
Ramesh Krishnan~S. Pallavoor, Sofya Raskhodnikova, and Nithin Varma.
\newblock Parameterized property testing of functions.
\newblock {\em {ACM} Trans. Comput. Theory}, 9(4):17:1--17:19, 2018.

\bibitem[RRV19]{RRV19}
Sofya Raskhodnikova, Noga Ron{-}Zewi, and Nithin~M. Varma.
\newblock Erasures vs. errors in local decoding and property testing.
\newblock In {\em Proceedings of the Innovations in Theoretical Computer
  Science Conference, {(ITCS)} 2019}, pages 63:1--63:21, 2019.

\bibitem[RS96]{RubinfeldSudan96}
Ronitt Rubinfeld and Madhu Sudan.
\newblock Robust characterizations of polynomials with applications to program
  testing.
\newblock {\em {SIAM} Journal on Computing}, 25(2):252--271, 1996.

\bibitem[RSSS19]{RubinsteinSSS19}
Aviad Rubinstein, Saeed Seddighin, Zhao Song, and Xiaorui Sun.
\newblock Approximation algorithms for {LCS} and {LIS} with truly improved
  running times.
\newblock In {\em 60th {IEEE} Annual Symposium on Foundations of Computer
  Science, {FOCS} 2019}, pages 1121--1145, 2019.

\bibitem[SS17]{SaksS10}
Michael~E. Saks and C.~Seshadhri.
\newblock Estimating the longest increasing sequence in polylogarithmic time.
\newblock {\em {SIAM} Journal on Computing}, 46(2):774--823, 2017.

\end{thebibliography}
\end{document}